\documentclass[aps,pra,superscriptaddress,twocolumn,nofootinbib,notitlepage,tightenlines]{revtex4-1}

\usepackage{graphicx}
\usepackage[dvipsnames,svgnames]{xcolor}
\usepackage{bm,bbm}%
\usepackage{braket}
\usepackage{amsthm,amsmath,amssymb}
\usepackage{enumerate}
\usepackage[caption=false]{subfig}
\usepackage{booktabs,multirow}
\usepackage{mathtools,bbding}
\usepackage[percent]{overpic}
\usepackage{microtype}
\usepackage{array,adjustbox,afterpage,xstring,refcount}
 \usepackage[T1]{fontenc}
 \usepackage[utf8]{inputenc}
 \usepackage{tikz}
 \usetikzlibrary{calc}

\usepackage{newtxtext,newtxmath}
\usepackage[scr=esstix]{mathalpha}


\newtheoremstyle{red}{}{}{\itshape}{}{\color{red!80!black}\bfseries}{.}{ }{}

\definecolor{darkred}{rgb}{0.57,0,0.12}
\usepackage[colorlinks=true, linkcolor=MidnightBlue, urlcolor=darkred!75!black, citecolor=MidnightBlue, linktoc = all]{hyperref}

\usepackage{cleveref}
\usepackage[most,breakable]{tcolorbox}

\renewcommand{\thesection}{\arabic{section}}

\makeatletter
\renewcommand{\p@subsection}{}
\renewcommand{\p@subsubsection}{}
\makeatother

\AtBeginDocument{
\heavyrulewidth=.08em
\lightrulewidth=.05em
\cmidrulewidth=.03em
\belowrulesep=.65ex
\belowbottomsep=0pt
\aboverulesep=.4ex
\abovetopsep=0pt
\cmidrulesep=\doublerulesep
\cmidrulekern=.5em
\defaultaddspace=.5em
}

\let\nc\newcommand

\hypersetup{
    bookmarksnumbered=true, 
    breaklinks=true,
    unicode=false, 
    pdfstartview={FitH}, 
    pdftitle={}, 
    pdfnewwindow=true, 
    colorlinks=true, 
    allcolors=MidnightBlue!70!black!70!TealBlue
}

\DeclareMathOperator{\Tr}{Tr}

\DeclareMathOperator{\supp}{supp}

\DeclareMathOperator{\conv}{conv}

\addtolength{\textheight}{3mm}

\newcommand{\norm}[2]{\left\lVert#1\right\rVert_{\,#2}}
\newcommand{\proj}[1]{\ket{#1}\!\bra{#1}}

\newcommand{\R}{\mathcal{R}}

\newcommand{\T}{\mathcal{T}}

\newcommand{\C}{\mathcal{C}}
\newcommand{\E}{\mathcal{E}}
\newcommand{\G}{\mathcal{G}}
\newcommand{\U}{\mathcal{U}}

\newcommand{\F}{\mathcal{F}}
\newcommand{\D}{\mathcal{D}}

\renewcommand{\P}{\mathcal{P}}
\newcommand{\M}{\mathcal{M}}
\newcommand{\N}{\mathcal{N}}

\newcommand{\mA}{\mathcal{A}}

\newcommand{\CPTP}{{\mathrm{CPTP}}}

\newcommand{\STAB}{{\FF_{\mathrm{STAB}}}}

\newcommand{\<}{\left\langle}
\renewcommand{\>}{\right\rangle}

\renewcommand{\bar}{\;\rule{0pt}{9.5pt}\right|\;}
\let\sbar\bar

\newcommand{\lset}{\left\{\left.}
\newcommand{\rset}{\right\}}

\DeclareMathOperator{\cone}{cone}

\newcommand{\RR}{\mathbb{R}}
\newcommand{\CC}{\mathbb{C}}
\renewcommand{\SS}{\mathbb{S}}

\newcommand{\DD}{\mathbb{D}}

\newcommand{\OO}{\mathbb{O}}
\newcommand{\FF}{\mathbb{F}}
\newcommand{\LL}{\mathbb{L}}

\newcommand{\ve}{\varepsilon}

\newcommand{\id}{\mathbbm{1}}
\newcommand{\idc}{\mathrm{id}}

\newcommand{\PPT}{\mathrm{PPT}}

\newcommand{\SEP}{\mathrm{SEP}}

\newenvironment{boxxed}[1]%
  {\expandafter\ifstrequal\expandafter{#1}{red}{\begin{tcolorbox}[colback=orange!3,colframe=orange!15]}{\begin{tcolorbox}[colback=white,colframe=gray!10,breakable,enhanced]}}%
  {\end{tcolorbox}}

\let\textsc\uppercase

{\begin{boxxed}{orange}\vspace{-\baselineskip}\begin{equation}}%
{\end{equation}\end{boxxed}}

\RenewDocumentEnvironment{boxed}{}%
{\begin{boxxed}{white}}%
{\end{boxxed}}

\newtheoremstyle{nc}
  {\topsep}
  {\topsep}
  {}
  {}
  {\itshape}
  {.}
  {.5em}
  {\thmname{#1}\thmnumber{ #2}\thmnote{ (#3)}}

\newtheorem{theorem}{Theorem}

\newtheorem{proposition}[theorem]{Proposition}
\newtheorem{corollary}[theorem]{Corollary}

\newtheorem{lemma}[theorem]{Lemma}

\theoremstyle{definition}
\newtheorem*{remark}{Remark}

\theoremstyle{red}

\let\oldproofname\proofname
\renewcommand{\proofname}{\rm\bf{\oldproofname}}

  \nc{\MIO}{{\text{\rm MIO}}}
\nc{\DIO}{{\text{\rm DIO}}}
\nc{\SIO}{{\text{\rm SIO}}}
\nc{\IO}{{\text{\rm IO}}}
\nc{\CRNG}{{\text{\rm CRNG}}}

\nc{\lsetr}{\left\{\,}
\nc{\rsetr}{\right.\right\}}
\nc{\barr}{\;\rule{0pt}{9.5pt}\left|\;}
\nc{\ketbra}[2]{\ket{#1}\!\bra{#2}}

\nc{\TT}{\Theta}
\nc{\CT}{\Upsilon}

\nc{\wt}{\widetilde}

\makeatletter
\def\l@f@section{%
 \addpenalty{\@secpenalty}%
 \addvspace{0.4em plus\p@}%
}%
\makeatother

\nc{\dia}{\!\!\Diamond}
\nc{\bdia}{\!\!\vardiamond}

\newcommand{\RFg}{R_{\FF}}

\newcommand{\ROg}{R_{\OO}}

\DeclarePairedDelimiter\floor{\lfloor}{\rfloor}

\newcommand{\bal}{\begin{equation}\begin{aligned}}
\newcommand{\eal}{\end{aligned}\end{equation}}
\newcommand{\dm}[1]{\ketbra{#1}{#1}}

\makeatletter
\def\renewtheorem#1{%
  \expandafter\let\csname#1\endcsname\relax
  \expandafter\let\csname c@#1\endcsname\relax
  \gdef\renewtheorem@envname{#1}
  \renewtheorem@secpar
}
\def\renewtheorem@secpar{\@ifnextchar[{\renewtheorem@numberedlike}{\renewtheorem@nonumberedlike}}
\def\renewtheorem@numberedlike[#1]#2{\newtheorem{\renewtheorem@envname}[#1]{#2}}
\def\renewtheorem@nonumberedlike#1{  
\def\renewtheorem@caption{#1}
\edef\renewtheorem@nowithin{\noexpand\newtheorem{\renewtheorem@envname}{\renewtheorem@caption}}
\renewtheorem@thirdpar
}
\def\renewtheorem@thirdpar{\@ifnextchar[{\renewtheorem@within}{\renewtheorem@nowithin}}
\def\renewtheorem@within[#1]{\renewtheorem@nowithin[#1]}

\renewcommand\onecolumngrid{
\do@columngrid{one}{\@ne}%
\def\set@footnotewidth{\onecolumngrid}
\def\footnoterule{\kern-6pt\hrule width 1.5in\kern6pt}%
}
\makeatother

\usepackage{float} 
\usepackage{titlesec}
\titleformat{\section}
  {\normalfont\centering\bfseries\small\MakeUppercase}
  {\thesection.}{.5em}{}
 
\begin{document}

 \title{Fundamental limitations on distillation of quantum channel resources}

\author{Bartosz Regula}
\email{bartosz.regula@gmail.com}
\affiliation{School of Physical and Mathematical Sciences, Nanyang Technological University, 637371, Singapore}
\author{Ryuji Takagi}
\email{ryuji.takagi@ntu.edu.sg}
\affiliation{School of Physical and Mathematical Sciences, Nanyang Technological University, 637371, Singapore}
\affiliation{Center for Theoretical Physics and Department of Physics, Massachusetts Institute of Technology, Cambridge, Massachusetts 02139, USA}
 
\begin{abstract}
Quantum channels underlie the dynamics of quantum systems, but in many practical settings it is the channels themselves that require processing. 
We establish universal limitations on the processing of both quantum states and channels, expressed in the form of no-go theorems and quantitative bounds for the manipulation of general quantum channel resources under the most general transformation protocols. 
Focusing on the class of distillation tasks --- which can be understood either as the purification of noisy channels into unitary ones, or the extraction of state-based resources from channels --- we develop fundamental restrictions on the error incurred in such transformations and comprehensive lower bounds for the overhead of any distillation protocol. In the asymptotic setting, our results yield broadly applicable bounds for rates of distillation. We demonstrate our results through applications to fault-tolerant quantum computation, where we obtain state-of-the-art lower bounds for the overhead cost of magic state distillation, as well as to quantum communication, where we recover a number of strong converse bounds for quantum channel capacity.
\end{abstract}

 \maketitle

\addtocontents{toc}{\protect\setcounter{tocdepth}{0}}

\section*{Introduction}

One of the central aims of quantum information science is to precisely understand the limitations governing the use of quantum systems and develop the most efficient ways to take advantage of the laws of quantum physics. At the heart of such questions lies the study of quantum channels, which enable the manipulation of quantum states. However, in order to most effectively exploit quantum resources, it is important to be able to manipulate quantum channels themselves~\cite{chiribella_2008,chiribella_2008-1,gour_2019-2}. Channel transformations form the basis of some of the most pressing problems in quantum science, including for instance devising efficient schemes for quantum communication and key distribution for use in quantum networks~\cite{bennett_1996,gisin_2007,wehner_2018,wilde_2017}, or processing quantum circuits to aid in the mitigation and correction of errors in computation~\cite{shor_1996,campbell_2017,preskill_2018}.

Among such tasks, a particularly important class of problems known as \emph{channel distillation} can be distinguished. Depending on the resource in consideration, distillation can be understood either as channel purification, i.e.\ the conversion of noisy channel resources into pure (unitary) ones, or as the extraction of state-based resources from quantum channels. The motivation for such transformations comes from the fact that, just as in the case of maximally entangled singlets in entanglement theory~\cite{bennett_1996-1,bennett_1996}, pure resources can be necessary for the efficient utilisation of a given resource. This is the case in quantum computation, where one aims to synthesise unitary quantum gates which can be employed in a quantum circuit~\cite{campbell_2017}, or in quantum communication, where transfer of quantum information can be understood as the distillation of noiseless channels~\cite{wilde_2017}. However, the practical realisation of such distillation protocols can incur large costs in terms of the required resource overhead. Due to the importance of distillation schemes in mitigating the effects of noise, the study of their limitations is therefore vital in many fundamental quantum information processing tasks. 
A major obstacle to understanding the capabilities of channel manipulation protocols is that general strategies for transforming channels can be highly complex, using ancillary systems and the outputs from successive channel uses in order to adaptively improve the transformations~\cite{chiribella_2008-1}, or even processing channels in ways that do not enforce a definite causal order~\cite{chiribella_2013,oreshkov_2012}. 
Additionally, the limits of channel manipulation can be understood in different ways: in settings such as quantum computation, it is crucial to precisely understand and minimise the error incurred in manipulating gates and circuits, while in the study of quantum communication, it is often of interest to characterise asymptotic transformations and bound their achievable rates. We set to describe all such limitations in a common framework.

In this work, we establish a comprehensive approach to bounding the efficacy of manipulating the resources of quantum channels under general free transformations. We introduce universal lower bounds on the error of channel distillation, establishing precise quantitative limitations on the achievable performance of any distillation protocol. We reveal broad no-go results in multi-copy channel transformations under the most general manipulation protocols --- adaptive schemes whose causal order structure is not necessarily fixed --- allowing us to establish fundamental bounds on the overhead of any physical protocol for channel distillation and simulation. We furthermore use our results to provide strong converse bounds for asymptotic transformations, establishing sharp thresholds on the achievable distillation rates and characterising the ultimate limits of channel manipulation. All of our bounds rely on trade-off relations between the transformation accuracy and two resource quantifiers: the resource robustness and resource weight. By adopting such a general resource-theoretic approach, our methods are readily applicable in a wide variety of practical settings. This allows us not only to unify, consolidate, and extend results that have appeared in specialised settings, but also to develop methods and bounds that have not previously found use in characterising resource transformations.

Furthermore, since quantum states can be regarded as a special case of quantum channels, our results apply also to state manipulation tasks. Our framework significantly improves on and extends the applicability of previous methods which characterised state transformations, including a recent general approach to no-go theorems and bounds for quantum state purification introduced in Ref.~\cite{fang_2020}.

Our results can be applied in the characterisation of general quantum resources, encompassing both intrinsic properties of quantum channels as well as dynamical resources based on the underlying properties of quantum states. We showcase this broad applicability with two different applications to the most pertinent settings: fault-tolerant quantum computation with magic states, as well as quantum communication. First, we connect the tasks of magic state distillation and gate synthesis through the underlying resource theory of magic, and study the similarities and differences between the two tasks. We show that our results yield substantially improved bounds in this setting, providing in particular state-of-the-art general lower bounds on the overhead of magic state distillation. We then develop further the resource-theoretic approach to quantum communication assisted by no-signalling correlations, where we show how our bounds can be used to understand both one-shot and asymptotic transformations as well as to recover the strong converse property of no-signalling coding~\cite{Fang2020max,takagi_2020}. Adapting our methods to the study of communication assisted by separable and positive partial transpose (PPT) operations, we recover a number of leading single-letter strong converse bounds on quantum capacity~\cite{christandl_2017,wang_2019-3,berta_2017}, providing a simplification of proof methods employed in specialised approaches.
Furthermore, we formalise the trade-off relations between the success probability and transformation accuracy in probabilistic distillation protocols where post-selection is allowed. Here, our results indicate a qualitative difference in achievable accuracy between deterministic and probabilistic settings, and suggest potential advantages of employing probabilistic distillation protocols.

\section*{Results}

\subsection*{Setting}%
Quantum information processing can often be understood as the interplay of various physical resources~\cite{devetak_2008,chitambar_2019}.
In order to describe different quantum phenomena in a unified manner and establish methods which can apply to a broad variety of physical settings, we will employ the formalism of quantum resource theories~\cite{chitambar_2019}. The recent years have seen an active development of general resource-theoretic approaches to state manipulation and distillation problems, but the study of quantum channel manipulation in this setting is still in its infancy~\cite{gour_2019-2,takagi_2019,liu_2020,liu_2019-1,gour_2019-1}. In particular, not much is known about constraining one-shot transformations of channels beyond specific settings, and questions such as transformation rates have previously only been addressed under specific assumptions on the structure of the involved resources and protocols. Our approach will be to instead employ broad resource-theoretic methods which avoid presupposing any particular properties of the considered setting.

A resource theory is a general framework concerned with the manipulation of quantum states or channels under some physical restrictions~\cite{chitambar_2019}. The restrictions determine which states or channels are \emph{free}, in that they carry no resource and can be regarded as freely accessible under the physical constraints. The primary object of study of our work will be channel resources, so we assume that in the given physical setting, a particular subset of quantum channels $\OO$ has been singled out as the free channels. A large number of very different settings and resources can be described with a suitable choice of $\OO$, motivating us to establish methods which apply to any such choice. Therefore, to remain as general as possible, we will only make two natural assumptions about the set $\OO$: that it is closed, meaning that no resource can be generated by taking a sequence of resourceless channels, and that is convex, which means that simply probabilistically mixing free channels cannot generate any resource.

\begin{figure}[t]
\centering
\includegraphics[width=\columnwidth]{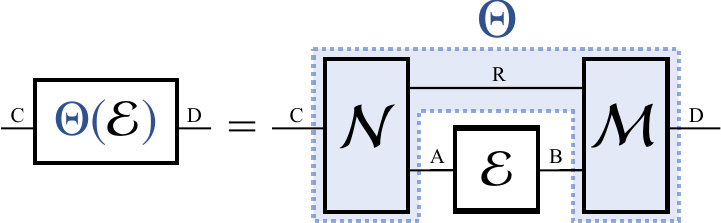}
\caption{\textbf{The general structure of a superchannel.} Given the Hilbert spaces of two quantum systems $A$ and $B$, we write ${\mathrm{CPTP}}(A \to B)$ to denote the set of quantum channels, i.e.\ completely positive and trace-preserving (CPTP) maps between operators acting on spaces $A$ and $B$. We associate with each channel $\E : A \to B$ its Choi matrix $J_{\E} \coloneqq {\mathrm{id}} \otimes \E(\Phi^+)$, where $\Phi^+ = \sum_{i,j} \ket{ii}\!\bra{jj}$ is the unnormalised maximally entangled state and ${\mathrm{id}}$ is the identity channel. Transformations of quantum channels are then maps from ${\mathrm{CPTP}}(A \to B)$ to ${\mathrm{CPTP}}(C \to D)$, or $(A \to B) \to (C \to D)$ in short. Under the necessary physical requirement that any such mapping should always take a quantum channel to a valid quantum channel, the most general form of a channel transformation is given by a quantum superchannel~\cite{chiribella_2008}. Such a transformation can be written as $\Theta(\E) = \M_{RB \to D} \circ ( {\mathrm{id}}_{R} \otimes \E ) \circ \N_{C \to RA}$ where $\N, \M$ are some pre- and post-processing quantum channels and $R$ denotes an ancillary system. For simplicity of notation, we often do not state explicitly which spaces the channels are acting on. }
\label{fig:sc-def}
\end{figure}

The most general way to manipulate a quantum channel is represented by a \emph{quantum superchannel}~\cite{chiribella_2008}, which we introduce in Figure~\ref{fig:sc-def}.  We are then interested in manipulating quantum channels with transformations which can be regarded as free within the constraints of the given theory.
In order to apply our results to all possible settings, we will make no specific assumptions about the considered set of free superchannels, save for the weakest possible constraint: that a free transformation $\Theta$ does not generate any resource by itself; that is, for any free channel $\M \in \OO$, it holds that $\Theta(\M) \in \OO$. We use $\mathbb{S}$ to denote the set of all such resource-preserving superchannels.
By studying these transformations, we will therefore obtain the most general bounds on the achievable performance of \emph{any} free channel manipulation protocol, since any physically motivated choice of free transformations will necessarily be a subset of $\mathbb{S}$.

We stress that, as a special case, all of our results apply also to the manipulation of the static resources of quantum states: they can be viewed as quantum channels that act on a trivial input space. For clarity, we will use $\FF$ instead of $\OO$ to denote the set of free states when discussing state-specific applications.

\subsection*{No-go theorems for resource distillation}
The task of distillation can be understood as the transformation of a noisy resource channel $\E$ into `pure' or `perfect' resources, which are represented by some target channel $\T$. Importantly, two distinct types of channel resource theories can be distinguished. The first type is concerned with the investigation of intrinsic channel resources; this includes various resource theories of quantum communication and the related setting of quantum memories. In such cases, it is often natural to regard some unitary channel $\U(\cdot) = U \cdot U^\dagger$ as the target of distillation protocols, representing noiseless dynamical resources. The other type is concerned with an underlying state-based resource and the manipulation of channels in order to extract or utilise the state resource more effectively; this includes, for instance, quantum entanglement, coherence, or thermodynamics. The target can then be a replacement (or preparation) channel $\R_\phi (\cdot) = \Tr(\cdot) \phi$ which prepares a given resourceful pure state. All of our results below apply to either of these settings, with $\T$ denoting a unitary or a replacement channel accordingly. Our task then is to understand when one can achieve transformations such that $F(\Theta(\E), \T)\geq 1-\ve$, where we use the worst-case fidelity~\cite{belavkin_2005,gilchrist_2005}
\begin{equation}\begin{aligned}
F(\E, \F) = \min_{\rho} F({\mathrm{id}} \otimes \E (\rho), {\mathrm{id}} \otimes \F (\rho))
\end{aligned}\end{equation}
to effectively benchmark the error of the transformation.
The choice of the worst-case fidelity as our figure of merit guarantees that the fidelity between the outputs of the channels will be large for any input state $\rho$, even when the channels are applied to a part of the system.

We endeavour to characterise the ultimate restrictions on the achievable performance of distillation by studying the trade-offs between three different quantities: the transformation error $\ve$, the resources contained in the input channel $\E$, and the resources of the target channel. To this end, we will employ two different resource measures. The \emph{resource robustness} $R_\OO$~\cite{vidal_1999,brandao_2015,diaz_2018-2,takagi_2019} and the \emph{resource weight} $W_\OO$~\cite{lewenstein_1998,uola_2020-1} are defined as
\begin{align}
  {R_{\OO}} (\E) &\coloneqq \min \lset \lambda \bar J_\E \leq \lambda J_\M,\; \M \in \OO \rset,\\
  W_\OO (\E) &\coloneqq \max \lset \lambda \bar J_\E \geq \lambda J_\M,\; \M \in \OO \rset, 
\end{align}
where $J_\E$ is the Choi matrix of the given channel, and the inequality is  understood in terms of positive semidefiniteness. The simple structure of the two quantities allows for a number of useful properties to be shown, such as their monotonicity under all free superchannels and submultiplicativity (see Supplementary Notes~\ref{sec:setting} and \ref{sec:many-copy}). The measures correspond to convex optimisation problems, in many relevant cases even reducing to efficiently computable semidefinite programs. $R_\OO$ and $W_\OO$ are natural generalisations of quantities defined at the level of quantum states, e.g.\ $R_\FF(\rho) = \min \lset \lambda \bar \rho \leq \lambda \sigma,\; \sigma \in \FF \rset$, where we recall that $\FF$ denotes free states in a considered resource theory. The robustness previously appeared in various ways in the characterisation of state transformations~\cite{brandao_2010-1,brandao_2015,regula_2020,liu_2019}, but the weight measure --- although a known geometric resource quantifier --- has not been connected with resource manipulation before.

To quantify the resources of the target channel, we will use the fidelity-based measure of the overlap with free channels:
\begin{equation}\begin{aligned}\label{eqm:fidelity_def}
	F_\OO(\T) \coloneqq& \max_{\M\in \OO} F(\T, \M).
\end{aligned}\end{equation}
This can be thought of as a parameter that determines how difficult a given target is to distil.
Although we will show that this parameter can be straightforwardly computed in most cases of practical interest,  in some contexts (such as quantum communication) an alternative figure of merit is often encountered: the Choi-state fidelity~\cite{barnum_2000,kretschmann_2004}
\begin{equation}\begin{aligned}
 \wt F_\OO(\T) \coloneqq  \max_{\M\in \OO} \, F \left(\wt J_\T, \wt J_\M \right),
\end{aligned}\end{equation}
where we denoted by $\wt J_\E$ the Choi matrix of a channel normalised so that $\Tr \wt J_\E = 1$.
In our discussion below, we will state our results using the parameter $F_\OO(\T)$ as this leads to the tightest bounds, but the bounds remain valid also if one replaces  $F_\OO(\T)$ with $\wt F_\OO(\T)$ everywhere.

We now give universally applicable, fundamental limitations on the performance of any resource distillation protocol.

\begin{theorem}\label{qip:thm1}
If there exists a free superchannel $\Theta \in \mathbb{S}$ such that $F(\Theta(\E), \T) \geq 1-\ve$ for a target channel $\T$, then
\begin{equation}\begin{aligned}\label{eqm:main-nogo-rob}
  \ve \geq 1 - F_\OO(\T) \, {R_{\OO}}(\E)
\end{aligned}\end{equation}
and
\begin{equation}\begin{aligned}\label{eqm:main-nogo-weight}
\ve \geq [1 - F_\OO(\T)]\, W_\OO(\E).
\end{aligned}\end{equation}
\end{theorem}
The bounds can be understood in two different ways: either as a general no-go result establishing the minimal error allowed within the constraints of the given resource theory, or, when $\ve$ is fixed, as a bound for the resources of $\E$ necessary for the distillation to be possible. The two bounds in Eqs.~\eqref{eqm:main-nogo-rob} and \eqref{eqm:main-nogo-weight} are very different from each other --- in both a quantitative and qualitative sense --- and can complement each other in various settings. We will aim to elucidate this with explicit examples and discussions in the following sections and in the Supplementary Notes.

As an immediate consequence of the Theorem, we see that the exact transformation with $\ve=0$ is impossible whenever $W_\OO(\E) > 0$, which is true e.g.\ for generic noisy channels with a full-rank Choi matrix.  Importantly, channels with  $W_\OO(\E) > 0$ cannot be distilled to a pure target $\T$ even when the target is less resourceful. This indicates strong constraints on distillation characterised by the resource weight $W_\OO$ and establishes a general no-go result in channel manipulation, extending earlier partial results~\cite{fang_2020}.

One important difference between the two bounds is that, when $\E$ is a pure (unitary or replacement) channel itself, then $W_\OO(\E) = 0$ and we gain no information from the weight bound. However, $R_\OO(\E)$ can provide a non-trivial error threshold even in this case, making it useful also in unitary-to-unitary or pure-to-pure transformations.

The result of Theorem~\ref{qip:thm1} directly applies also to the manipulation of quantum states, where now the free transformations $\mathbb{S}$ are in the form of quantum channels. Specifically, the bounds
\begin{align}
  \ve &\geq 1 - F_\FF(\phi) \, R_\FF(\rho)\\
\ve &\geq [1 - F_\FF(\phi)]\, W_\FF(\rho)
\end{align}
hold for any state $\rho$ undergoing a distillation protocol with a pure state $\phi$ as target.
This gives general error bounds on transformations of state-based resources. While the state-based robustness bound has previously appeared in Ref.~\cite{regula_2020}, the weight bound constitutes an improvement over previously known results, and in particular over a different approach to no-go theorems for resource purification which was recently introduced in Ref.~\cite{fang_2020}. In contrast to the framework of Ref.~\cite{fang_2020}, our results can characterise the manipulation of all quantum states (not only full-rank input states) and our quantitative bounds are strictly better than the previously known ones. This allows us to reveal substantially refined limitations on state-to-state transformations, as we will shortly demonstrate in explicit comparisons.

We will find that the bounds can tightly characterise one-shot transformations for specific cases of channels.
However, a major strength of the bounds lies not simply in estimating the errors in single-shot channel manipulation, but also in their applicability to multi-copy and asymptotic manipulation protocols: we now show that the bounds of Theorem~\ref{qip:thm1} can reveal powerful restrictions on distillation when multiple uses of a quantum channel are considered.

\begin{figure*}[!]
\centering
\includegraphics[width=.9\textwidth]{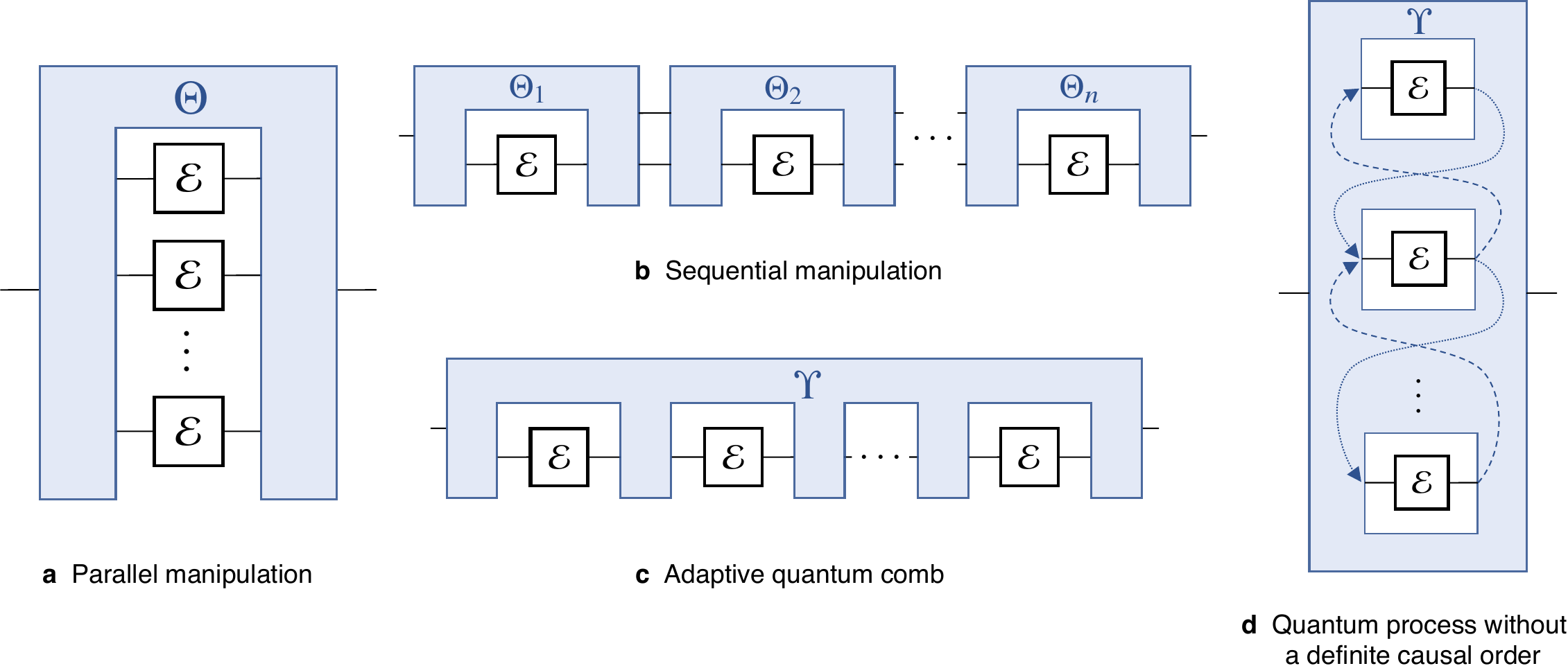}
\caption{\textbf{The different ways to manipulate many copies of a quantum channel with free transformations.} For quantum states, having access to multiple copies of a state $\rho$ is equivalent to acting on the tensor product $\rho^{\otimes n}$. A naive way to employ $n$ copies of a given channel is to consider them in parallel as $\E^{\otimes n}$ \textbf{(a)}. However, a more general way to manipulate $n$ copies of a channel is to employ a sequential (iterative) protocol \textbf{(b)}, which can be understood as the channel $\E$ being fed into a sequence of $n$ free superchannels one after another, allowing one to use the output of previous channel uses to improve the transformation. Indeed, such protocols are known to provide advantages over parallel ones in some settings~\cite{chiribella_2008-2,harrow_2010-1,yuan_2016,quintino_2019,liu_2020}, and this approach is commonly employed to transform channels in the setting of various resource theories such as quantum communication~\cite{Cooney2016constant,christandl_2017,kaur_2017,pirandola_2017}, entanglement~\cite{wilde_2018,gour_2019}, or magic~\cite{wang_2019-1}. However, even this does not represent the most general way to manipulate multiple channels within the setting of the given resource theory. When the causal order of the channels is fixed, any $n$-channel transformation scheme by means of a quantum circuit can be expressed as a so-called \emph{quantum comb}~\cite{chiribella_2008-1} \textbf{(c)}. Even more complex manipulation strategies are possible if one does not assume a definite causal order between the channel transformations, that is, when one is not able to say in what order the channels will be used throughout the protocol \textbf{(d)}. Such an approach allows one to treat the transformation trajectories themselves as quantum objects, leading to concepts such as superpositions of different causal orders~\cite{chiribella_2013,oreshkov_2012} which can indeed provide advantages over standard, causally ordered transformation methods~\cite{chiribella_2013,ebler_2018,quintino_2019}. These transformations are dubbed \emph{quantum processes}, and we will use them to characterise the most general physically realisable manipulation protocols involving multiple quantum channels.}
\label{fig:sc-many-copy}
\end{figure*}

\subsection*{Many-copy manipulation}
In contrast to transformations of quantum states, it does not suffice to consider channel manipulation as acting on the tensor product $\E^{\otimes n}$, but more complex protocols need to be considered. The most general form of such manipulation schemes are referred to as quantum processes (see Figure~\ref{fig:sc-many-copy}). We then use $\mathbb{S}_{(n)}$ to denote all free quantum processes, that is, all transformations $\Upsilon$ of $n$ channels such that the output channel $\Upsilon(\M_1,\ldots,\M_n)$ is a free channel whenever $\M_1, \ldots, \M_n \in \OO$. This approach will allow us to
characterise the performance of the most general protocols for manipulating channels or states within the physical constraints of the given resource.
\begin{theorem}\label{qip:thm2}
Given any distillation protocol $\Upsilon \in \mathbb{S}_{(n)}$ --- parallel, sequential, or adaptive, with or without a definite causal order --- which transforms $n$ uses of a channel $\E$ to some target channel $\T$ up to accuracy $\ve > 0$, it necessarily holds that
\begin{align}\label{eqm:main-copies-weight}
  n &\geq \log_{1/W_\OO(\E)}\, \frac{1- F_\OO(\T)}{\ve}
  \end{align}
  \vspace{-\baselineskip}
 and
  \begin{align}
  n &\geq \log_{{R_{\OO}}(\E)}\, \frac{1-\ve}{F_\OO(\T)}.
\end{align}
\end{theorem}
This gives general lower bounds on the overhead of distillation that must be obeyed by any physical transformation protocol. Once again, the bounds exhibit different behaviour: intuitively, the regime of $\ve$ very close to $0$ will be characterised more precisely by the bound based on resource weight $W_\OO$, while the robustness $R_\OO$ will perform better for larger error and for input channels $\E$ which are close to pure (unitary or replacement) channels.

An important aspect of the bound in Theorem~\ref{qip:thm2} is that it holds regardless of the structure of the involved channel manipulation process $\Upsilon$. This allows us to go beyond methods previously employed in settings such as quantum communication, which applied only to sequential protocols with a restricted structure.

As an immediate consequence of this result, the weight-based bound in Theorem~\ref{qip:thm2} shows that the number of uses of the channel $\E$ needed to perform distillation must scale as $\log(1/\ve)$ as $\ve \to 0$, establishing a universal limit on the overhead of distillation protocols such as quantum gate synthesis or noisy quantum communication.

\subsection*{Asymptotic manipulation}
The ultimate limitations on transforming a given state or channel are given by the maximal rate at which the conversion $\E \to \T$ can be performed with an asymptotic number of channel uses, allowing for error that vanishes asymptotically. Specifically, we will be interested in protocols which transform $n$ uses of a quantum channel $\E$ to $r n$ copies of the target channel $\T^{\otimes rn}$ up to error $\ve_n$. Imposing that the transformation is achieved exactly in the asymptotic limit, that is, $\ve_n \to 0$ as $n \to \infty$, and maximising over all such $r$ gives us the optimal asymptotic rate of converting $\E$ to $\T$ with free protocols.
We will distinguish two different rates: an adaptive rate $r_{\text{adap}}$ which allows the most general, adaptive processes acting on the input channels, and the parallel rate $r_{\text{par}}$ which considers parallel transformations of the form $\E^{\otimes n} \to \T^{\otimes r n}$ (recall the comparison in Figure~\ref{fig:sc-many-copy}).

The rates of distillation of quantum channel resources are an important aspect of understanding the limitations on resource manipulation~\cite{diaz_2018-2,liu_2020,kuroiwa_2020}, but little is known about them due to the difficulty in characterising the asymptotic properties of channel-based quantities~\cite{gour_2019-1,cubitt_2015}. Our methods allow us to establish two general bounds on the transformation rates. We can use the robustness $R_\OO$ to provide a general bound for the rate of any manipulation protocol, as well as obtain an improved bound for parallel protocols by suitably `smoothing' the definition of the robustness over channels within a small distance of the original input $\E$~\cite{diaz_2018-2,Fang2020max,liu_2019-1,gour_2019-1}. 
\begin{theorem}\label{qip:thm3}
If the target channel $\T$ satisfies $F_\OO(\T^{\otimes n}) = F_\OO(\T)^n$, then
\begin{align}\label{eqm:thm3}
	r_{\rm adap}(\E\to\T) &\leq \frac{\log {R_{\OO}}(\E)}{\log F_\OO(\T)^{-1}},\\
	r_{\rm par}(\E\to\T) &\leq \frac{D_\OO^\infty(\E)}{\log F_\OO(\T)^{-1}},
\end{align}
where $D_{\OO}^\infty(\E)\coloneqq \lim_{\delta\to 0}\limsup_{n\to\infty}\frac{1}{n}\log R_\OO^{\delta}(\E^{\otimes n})$ with $R_\OO^\delta(\E)\coloneqq \min_{F(\tilde\E,\E)\geq 1-\delta}R_\OO(\tilde\E)$.
\end{theorem}
The result establishes universal bounds on the achievable rate under any physical transformation protocol. 
Importantly, both of our bounds are \emph{strong converse} bounds, that is, they sharply characterise the threshold in achievable performance --- when a rate exceeds either of our bounds, the transformation fidelity necessarily goes to $0$, meaning that the error will grow very large and distillation cannot be reliably performed. 
The Theorem immediately applies in many settings of practical significance, as long as the condition $F_\OO(\T^{\otimes n}) = F_\OO(\T)^n$ is satisfied for the given target channel. This is a natural property which holds true both in dynamical resources such as communication, as well as in state-based channel resources such as entanglement, magic, coherence, or thermodynamics (see the forthcoming Table~\ref{tab:resources}).

In the majority of practically relevant settings, the robustness $R_\OO$ is submultiplicative under tensor product, meaning that $D_\OO^\infty(\E)\leq \log R_\OO(\E)$. Hence, the bound on $r_{\rm par}$ using $D_\OO^\infty(\E)$ might provide an improvement over the robustness-based bound, prompting the question of whether one can actually evaluate the tighter bound.  Notably, the regularisation $D_\OO^\infty(\E)$ has been computed exactly for a set of channels relevant in the study of quantum communication~\cite{Fang2020max}, which we will discuss in more detail shortly. The recent work of Ref.~\cite{gour_2019-1} began a systematic investigation of different regularisations in channel-based resource theories, but their general computability remains an open question. For quantum states, the regularisation $D_\FF^\infty(\rho) = \lim_{\delta\to 0}\lim_{n\to\infty}\frac{1}{n}\log R_\FF^{\delta}(\rho^{\otimes n})$ can be computed exactly under very mild assumptions on the set $\FF$~\cite{brandao_2010-1}, and it reduces to the regularised relative entropy of a resource. In such cases, our result recovers the fact that rates of distillation in resource theories of states are limited by the regularised relative entropy~\cite{hayashi2016quantum,brandao_2010-1}. We note also that related asymptotic bounds were considered in Ref.~\cite{liu_2020} for the case of state-based channel resource theories.

\begin{table*}
\setlength{\extrarowheight}{5pt}
  \begin{tabular}{@{\hspace*{1.8em}}l >{\raggedright}p{3.7cm} >{\raggedright}p{3.3cm} >{\raggedright}p{2cm} p{2.3cm}}
    \toprule
\multicolumn{1}{l}{\bfseries Channel resource} & \bfseries Target channel $\U$ \hspace{5pt} & $F_\OO(\U)$ \hspace{5pt} & $F_\OO(\U^{\otimes m})$ $\!\stackrel{?}{=}\! F_\OO(\U)^m$ \hspace{5pt} & \bfseries Computability of $R_\OO$ and $W_\OO$ \\
    \midrule
    \multicolumn{1}{l}{Quantum communication assisted by:}  &&&&\\
                      no-signalling transformations~\cite{Leung2015NS}  & Identity channel ${\mathrm{id}}_d$ & $\frac{1}{d^2}$ & Yes & SDP\\
                      separability-preserving transformations \hspace*{10pt}& Identity channel ${\mathrm{id}}_d$ & $\frac{1}{d}$ \cite{shimony_1995}  & Yes & {\raggedright Convex program (NP-hard~\cite{gurvits_2003})}\\
                      PPT-preserving transformations & Identity channel ${\mathrm{id}}_d$ & $\frac{1}{d}$ \cite{rains_2001} &Yes & SDP\\
    \multicolumn{1}{l}{Magic of many-qubit quantum channels~\cite{seddon_2019}} & Qubit T gate $T = \mathrm{diag}(1, e^{i\pi/4})$ & $\frac{1}{4}(2+\sqrt{2})$ \cite{bravyi_2019} & Yes~\cite{bravyi_2019} & SDP\\
     & Controlled-controlled-Z gate & $\frac{9}{16}$ \cite{bravyi_2019} & Yes \cite{bravyi_2019} & SDP\\
    \multicolumn{1}{l}{Magic of many-qudit quantum channels~\cite{wang_2019-1}} & Qutrit T gate $T = \mathrm{diag}(e^{2\pi i/9}, 1, e^{-2\pi i/9})$ & $(1+2 \sin(\pi/18))^{-1}$~\cite{wang_2019-1} \hspace*{5pt}& Yes \cite{wang_2020} * & SDP\vspace*{5pt}\\
    \toprule%
    \multicolumn{1}{l}{\bfseries State resource} & \bfseries Target state $\ket\phi$  & $F_\FF(\phi)$ & $F_\FF(\phi^{\otimes m})$ $\!\stackrel{?}{=}\! F_\FF(\phi)^m$ & \bfseries Computability of $R_\FF$ and $W_\FF$ \\
    \midrule
    \multicolumn{1}{l}{Quantum entanglement~\cite{horodecki_2009}}  & Maximally entangled state $ \frac{1}{\sqrt{d}} \sum_{i=0}^{d-1} \ket{ii}$ & $\frac1d$ \cite{shimony_1995} & Yes & Convex program (NP-hard~\cite{gurvits_2003})\\
        \multicolumn{1}{l}{Non-positive partial transpose~\cite{horodecki_2009}} & Maximally entangled state $ \frac{1}{\sqrt{d}} \sum_{i=0}^{d-1} \ket{ii}$ & $\frac1d$ \cite{rains_2001} & Yes & SDP\\
    \multicolumn{1}{l}{Quantum coherence~\cite{streltsov_2017}}  & Maximally coherent state $\frac{1}{\sqrt{d}} \sum_{i=0}^{d-1} \ket{i}$ & $\frac1d$ & Yes & SDP\\
    \multicolumn{1}{l}{Magic of many-qubit states~\cite{veitch_2014,howard_2017}} & T state $\frac{1}{\sqrt{2}}( \ket{0} + e^{i \pi /4}\ket{1} )$ & $\frac{1}{4}(2+\sqrt{2})$ \cite{bravyi_2019} & Yes~\cite{bravyi_2019} & SDP\\
    & CCZ state $\frac{1}{8} (1,\ldots,1,-1)^T$ & $\frac{9}{16}$ \cite{bravyi_2019} & Yes~\cite{bravyi_2019} & SDP\\
    \multicolumn{1}{l}{Magic of many-qudit states~\cite{veitch_2014}} & Hadamard `$+$' state $\propto (1+\sqrt{3})\ket{0} + \ket{1} + \ket{2}$ & $ \frac{1}{6}(3+\sqrt{3})$ \cite{wang_2020} & Yes \cite{wang_2020} & SDP\\ 
     & Norrell state $\frac{1}{\sqrt{6}}(\ket{0}-2\ket{1}+\ket{2})$ & $\frac{2}{3}$ \cite{wang_2020} & Yes \cite{wang_2020} & SDP\\ 
      \multicolumn{1}{>{\raggedright}p{5cm}}{Quantum thermodynamics with~Hamiltonian $H = \sum_{i} E_i \proj{i}$~\cite{Goold2016role}} & Energy eigenstate $\ket{i}$ & $\displaystyle \frac{e^{-\beta E_i}}{Z}$ ($\beta$:\! inverse temp., $Z$:\!~partition function) & Yes & Analytical\\
    \bottomrule
    \end{tabular}
    \caption{\textbf{Applicability of our bounds to common quantum resources}. We give an overview of quantum resources together with natural choices of target channels or states which are often used in distillation tasks. The list is by no means complete, but is meant to facilitate the application of our bounds in a selection of important setting. We see in particular that the parameter $F_\OO(\T)$ admits an exact analytical expression for all the target states on the list,
    and in addition is multiplicative under tensor product, which means that all of the bounds of this work (including the asymptotic bounds of Theorem~\ref{qip:thm3}) apply immediately.
    Furthermore, we see that the majority of cases discussed here allow for the robustness and weight measures to be computed as semidefinite programs (SDP). In the main text and in Supplementary Note~\ref{sec:main-app}, we provide more details about the theories of quantum communication and magic of channels and states, showing exactly how the bounds can be applied and how they perform.
    \\{}* In the case of the qutrit $T$ gate, instead of the quantity $F_\OO$ as defined in Eq.~\eqref{eqm:fidelity_def}, a closely related fidelity-type measure called the `min-thauma'~\cite{wang_2020} is used, which allows for an easier computation while otherwise acting in the same way. This makes no difference in the statement or properties of our bounds, so we make no distinction here and instead refer to the Supplementary Note~\ref{sec:main-app} for details.%
    }
        \label{tab:resources}
\end{table*}

\subsection*{Applying the bounds in practice}
We stress again that our main results discussed in Theorems~\ref{qip:thm1}--\ref{qip:thm3} apply to general convex resource theories of quantum channels and states, encompassing a wide variety of use cases. Since our discussions so far have presented them in a rather abstract manner, we will now discuss how the bounds can be evaluated in specific theories of interest.

With the exception of the regularised asymptotic bound in Theorem~\ref{qip:thm3}, all of our results depend only on three quantities: the overlap $F_\OO(\T)$ of the target channel, and either the robustness $R_\OO(\E)$ or the weight $W_\OO(\E)$ of the input.  In practical settings of interest, the choice of the target $\T$ is motivated by physical considerations --- representing, for instance, a maximally resourceful channel or state, or a particularly costly resource --- and the value of the parameter $F_\OO(\T)$ is typically known, so we can directly plug these quantities into the bounds established in Theorems~\ref{qip:thm1}--\ref{qip:thm3}. We collect some of the most important examples of such resources, together with the values of $F_\OO(\T)$, in Table~\ref{tab:resources}. All that remains now is to evaluate $R_\OO$ or $W_\OO$ for desired input channels. Fortunately, in many theories of interest, these two quantifiers can be computed as semidefinite programs, and often even evaluated or bounded analytically by utilising their convex duality and constructing suitable feasible solutions.

It will be instructive to discuss in more detail the applications to two fundamental examples. Full technical details and additional results are provided in Supplementary Note~\ref{sec:main-app}.

\begin{figure*}[t!]
\includegraphics[width=.75\textwidth]{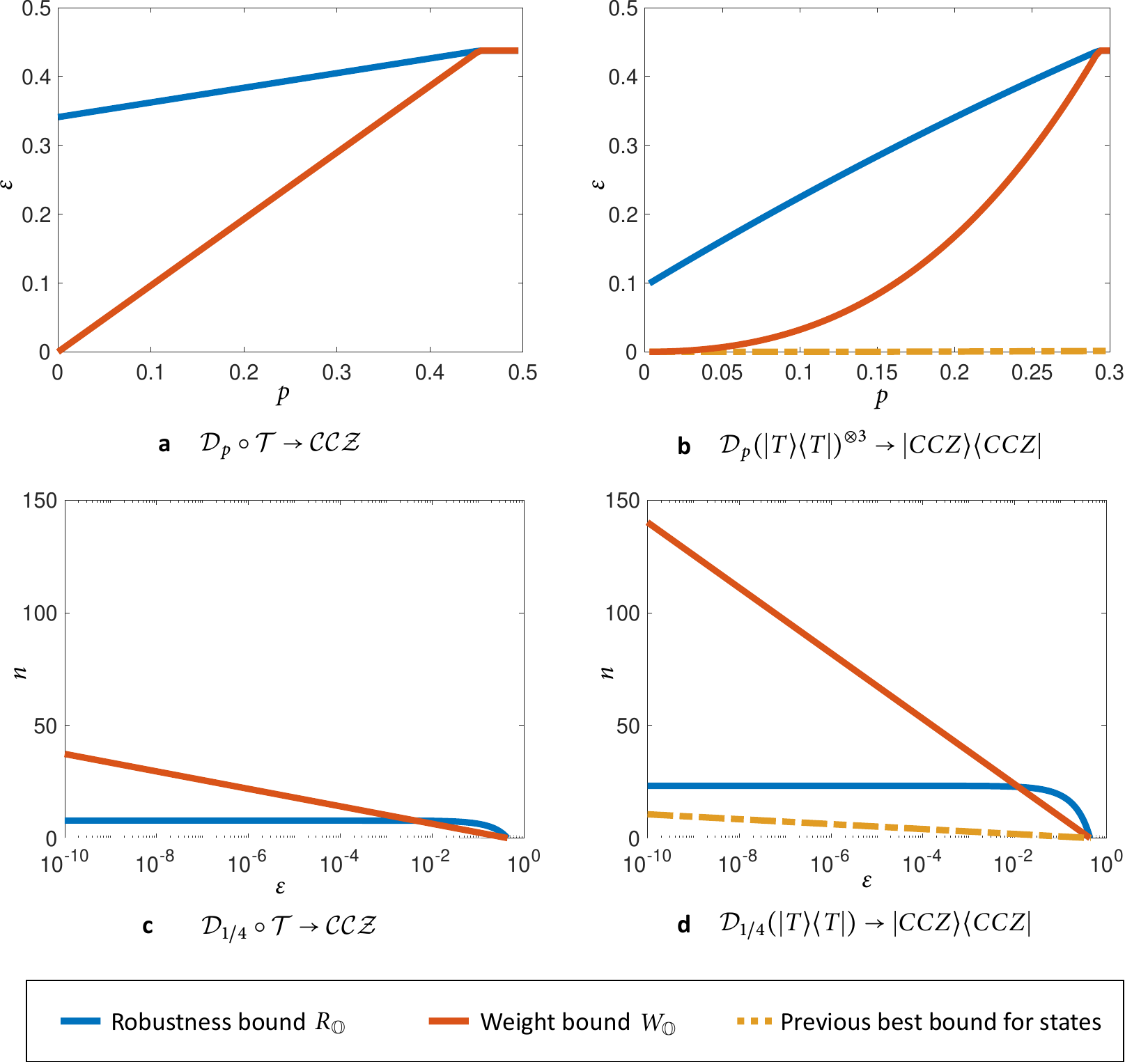}
		\caption{\textbf{Bounding the performance of gate synthesis and magic state distillation}. We plot lower bounds on: \textbf{a}--\textbf{b} the error $\ve$ necessarily incurred (as per Theorem~\ref{qip:thm1}); \textbf{c}--\textbf{d} the number of copies necessary (as per Theorem~\ref{qip:thm2}) in the given transformations between the depolarised T gate/state and the depolarised CCZ gate/state. The bounds are compared with the previous best general bound for magic state distillation introduced in~Ref.~\cite{fang_2020}. Here, $p$ is the noise parameter of the depolarising channel $\D_p(\rho) = (1-p) \rho + p \frac{\id}{2}$. 
		In \textbf{(a)} we explicitly see that the robustness bound indicates a significant error also in the noiseless case ($p=0$), whereas the weight bound becomes trivial for noiseless inputs. In \textbf{b}, we allow three copies of the noisy $T$ state to be used in the transformation --- here, both of our bounds significantly improve on the results of Ref.~\cite{fang_2020}, and in particular the robustness bound reveals that an error of $\approx 0.1$ is the best that one can hope for when converting $\ket{T}^{\otimes 3} \to \ket{CCZ}$ with any free transformation protocol. In \textbf{c} and \textbf{d}, we demonstrate the substantial advantages of the weight bound in bounding distillation overhead. Comparing the bounds for gate synthesis from the noisy $T$ gate in \textbf{c} and for magic state distillation from the noisy $T$ state in \textbf{d}, we can see that the bounds impose much higher requirements on the number of noisy states required to succeed.}
		\label{fig:magic-oneshot}
\end{figure*}

\subsection*{Application: gate synthesis and magic state distillation}%

Universal fault-tolerant quantum computation requires, in addition to the easily implementable Clifford gates, the use of costly non-Clifford unitaries such as the T gate~\cite{Bravyi2005magic}. Such gates are often implemented through the process of magic state injection~\cite{gottesman_1999}, which employs magic (non-stabiliser) states --- states that cannot be obtained with stabiliser operations alone --- to realise general quantum gates.
Magic states can provide feasible ways to synthesise general quantum circuits, but the main bottleneck in their efficient use is the resource cost associated with the required magic state distillation protocols~\cite{campbell_2017}. Understanding the limitations of such protocols and characterising the precise relations between magic state distillation and gate synthesis is thus highly important in paving the way to fault-tolerant quantum computation~\cite{campbell_2017-1,campbell_2017}.

In this setting, our results can be employed in two different ways: either directly at the level of channel manipulation (gate synthesis), or through an application to the task of magic state distillation. They therefore advance the resource-theoretic approach to magic~\cite{veitch_2014,howard_2017,seddon_2019,wang_2020,wang_2019-1} by explicitly shedding light on the precise quantitative connections between the channel-based theory and the underlying state-based resource. Here, the set of free channels $\OO$ can be understood as all stabiliser operations, or the larger set of completely stabiliser-preserving operations~\cite{seddon_2019}. We can then directly apply Theorems~\ref{qip:thm1}--\ref{qip:thm3} to immediately establish a number of bounds which can characterise the ultimate limitations in both exact and approximate transformations between channels and states in these resource theories. The relevant quantities $R_\OO$, $W_\OO$, and $F_\OO$ are computable as semidefinite programs in this setting, and for many channels of interest, such as quantum gates from the third level of the Clifford hierarchy~\cite{gottesman_1999}, the measures simplify to known quantities like the state-based stabiliser fidelity $F_\FF$~\cite{bravyi_2019} (see Supplementary Note~\ref{sec:main-app}). Applied at the level of states, our approach --- and in particular the weight-based bound --- constitutes a substantial improvement over the recent findings of~Refs.~\cite{fang_2020,seddon_2020} where lower bounds on the resource cost of magic state distillation were established.

Our results are demonstrated in Figure~\ref{fig:magic-oneshot}, where we plot the performance of our bounds in the transformation of the T gate $T = \mathrm{diag}(1, e^{i\pi/4})$~\cite{Bravyi2005magic} or the associated $\ket{T}$ state, affected by depolarising noise, to the controlled-controlled-Z gate CCZ.
We see that our results give non-trivial bounds on the error in all parameter regimes, revealing large errors even in cases where previous bounds could not do so. Notably, this yields state-of-the-art lower bounds on the overhead of magic-state distillation, as well as general bounds directly for the task of quantum gate synthesis.

\subsection*{Application: quantum communication}
The \emph{quantum capacity} $Q(\E)$ characterises the rate at which quantum information can be communicated through a channel, and bounding this quantity is a fundamental problem in quantum communication~\cite{lloyd_1997,barnum_1998,hayashi2016quantum}. It is often useful to allow the communicating parties to use some assistance --- in the form of shared correlations, or the ability to perform some limited set of joint operations --- in order to aid the communication. This traditional setting of quantum communication can be encompassed in our resource-theoretic framework of channel manipulation: the goal can be understood as using free superchannels (encoding and decoding operations) in order to purify a noisy quantum channel to the qubit identity channel ${\mathrm{id}}_2$, with the latter representing perfect noiseless communication.
Here, we will see that our general results can be readily applied to assess several fundamental limitations in this task.

\begin{figure*}[t]
\centering
\includegraphics[width=.75\textwidth]{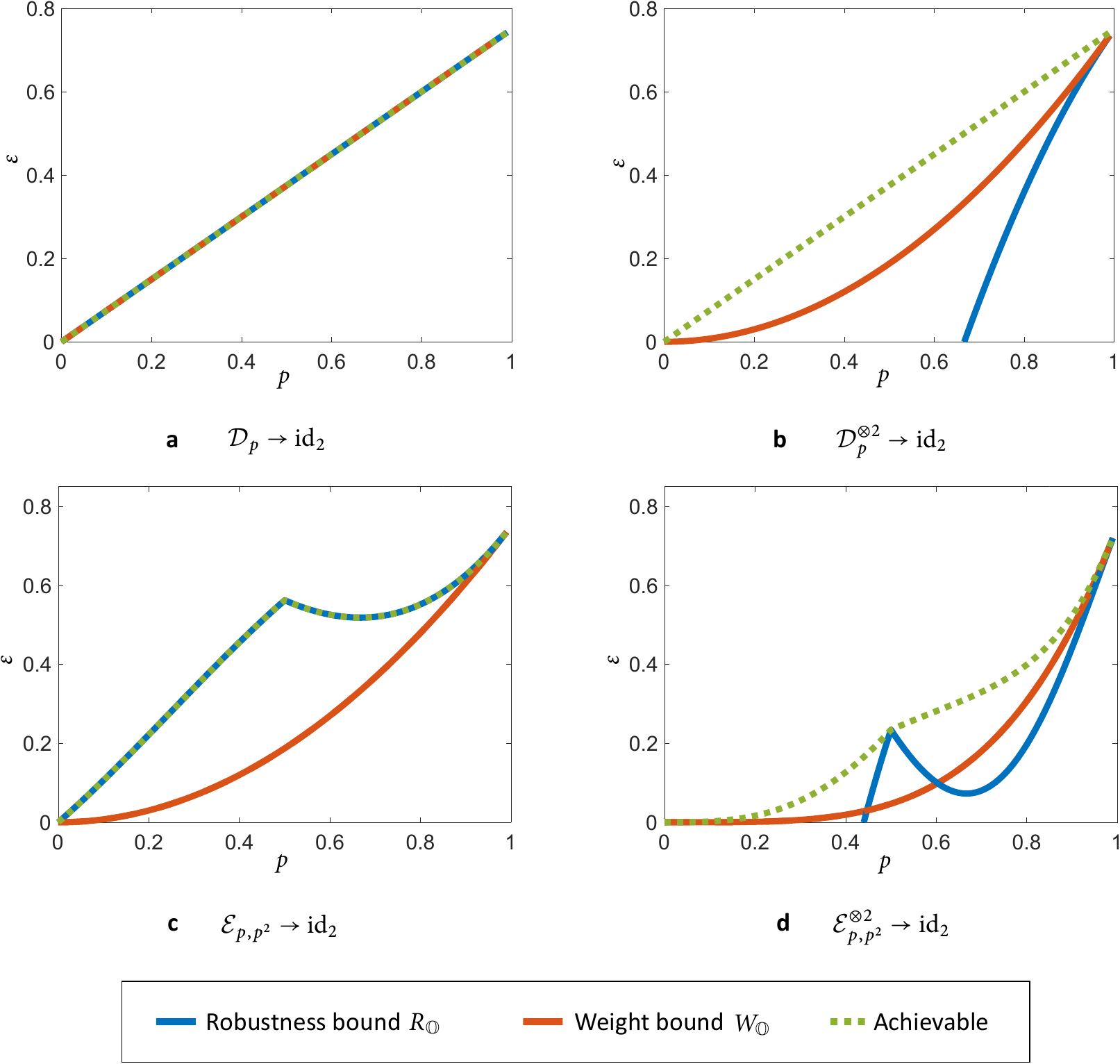}
		\caption{\textbf{Errors in one-shot quantum communication}. Lower bounds on the error $\ve$ in the transformation of channels to the qubit identity channel ${\mathrm{id}}_2$ under no-signalling codes $\mathbb{S}_{\rm NS}$. We plot the bounds obtained from Theorem~\ref{qip:thm1} for: \textbf{a}--\textbf{b} the qubit depolarising channel $\D_p (\rho) = (1-p) \rho + p \frac{\id}{d}$; \textbf{c}--\textbf{d} the dephrasure channel $\E_{p,q}(\rho) = (1-q) [(1-p) \rho + p Z \rho Z]$ $+\, q \proj{2}$ \cite{leditzky_2018-1}, where we set $q=p^2$. In \textbf{a}, the depolarising channel satisfies $W_{\OO} (\D_p) = p$, and the robustness and weight-based bounds are actually equal: we have $\ve \geq p (d^2 - 1)/d^2$. In fact, here the bounds match the achievable fidelity, meaning that Theorem~\ref{qip:thm1} quantifies the error in the one-shot transformation $\D_p \to {\mathrm{id}}_2$ under $\mathbb{S}_{\rm NS}$ exactly. The robustness bound is also seen to be tight in \textbf{c}. The weight bound can become more effective than the robustness bound when we consider more copies of the input channel, as seen in \textbf{b} and \textbf{d}. The importance of the weight bound is highlighted here, as it can certify that zero-error ($\ve=0$) communication is impossible for all $p>0$.}
		\label{fig:ns-oneshot}
\end{figure*}

For instance, the setting of no-signalling (NS) assisted communication~\cite{Leung2015NS,Duan2016nosignalling,wang_2017,takagi_2020,Fang2020max} allows Alice and Bob to perform joint coding protocols which obey the no-signalling condition from Alice to Bob and vice versa.  First, it is insightful to see what the bounds of Theorem~\ref{qip:thm1} tell us about one-shot transformations $\E \to {\mathrm{id}}_2$ in this setting. Here, the maximal fidelity achievable under no-signalling codes can be computed with an SDP~\cite{Leung2015NS}, which allows us to gauge the performance of our bounds exactly.
We demonstrate this with a numerical investigation in Figure~\ref{fig:ns-oneshot}, showing that our results can become exact in some cases, and the two bounds can complement each other in different situations.
Beyond such single-shot transformations, when multiple uses of channels are considered, our bounds can lead to tight asymptotic results.
In particular, Theorem~\ref{qip:thm3} gives a strong converse bound on the NS-assisted quantum capacity as $Q_{\rm NS}(\E) \leq D_{\OO}^\infty(\E)$. The quantity $D_{\OO}^\infty(\E)$ can be computed exactly in this case~\cite{Fang2020max}, and it equals the mutual information of the channel~\cite{bennett_2002}. Moreover, this is actually an achievable rate of communication~\cite{bennett_2002,Leung2015NS}, which means that $Q_{\rm NS}(\E)$ is given exactly by the mutual information of $\E$. In addition to recovering this tight bound, our results also show the strong converse property of NS-assisted communication~\cite{Fang2020max,takagi_2020}, which says that the capacity $Q_{\rm NS}(\E)$ is a strong converse rate of communication.

Another fundamental paradigm is quantum communication assisted by local operations and classical communication (LOCC)~\cite{bennett_1996}. Due to the complexity of describing LOCC, various approximations of this class of channels are often employed~\cite{rains_1997,rains_2001,Leung2015NS,hayashi2016quantum,tomamichel_2017}, the most common ones being the set of separable channels $\OO_{\rm SEP}$~\cite{rains_1997} (maps whose Choi matrix is separable) and positive partial transpose (PPT) channels $\OO_{\rm PPT}$~\cite{rains_2001} (maps whose Choi matrix is PPT). We can readily apply our results in two different ways, by considering either the capacity $Q_{\rm SEP}$ of communication assisted by protocols which preserve $\OO_{\rm SEP}$, or the capacity $Q_{\rm PPT}$ where Alice and Bob can perform joint manipulation protocols which preserve $\OO_{\rm PPT}$.
 Theorems~\ref{qip:thm1}--\ref{qip:thm3} then immediately provide a number of bounds on quantum capacity assisted by the most general adaptive protocols, in both the one-shot and asymptotic setting.
Notably, we obtain that the robustness $R_{\OO_{\rm SEP}}$ gives a strong converse bound to $Q_{\rm SEP}$. We show, in fact, that the robustness in this case equals a quantity known as the max-relative entropy of entanglement~\cite{christandl_2017}, therefore recovering and extending a bound of~Ref.~\cite{christandl_2017}, while providing an arguably simpler proof technique. In the PPT setting, we get an analogous result, which is closely related to bounds based on the so-called max-Rains information~\cite{wang_2019-3,berta_2017} --- these constitute, in many cases, the best known efficiently computable bounds on LOCC-assisted capacity.

The above shows the direct applicability of our formalism to upper bounding channel capacities in a number of different settings. Our methods thus not only provide useful benchmarks of practical significance, but also unify different specialised approaches and show them all to be part of a broader, resource-theoretic framework for the manipulation of quantum channels, which extends beyond entanglement and communication theory.

\subsection*{Extension to probabilistic protocols}
Our results can also be extended to the setting where the channel transformations are not realised deterministically, but can fail with a certain probability. Here, we establish general trade-offs between the success probability and the error in the transformation, extending the insights and establishing bounds which take into account the non-deterministic character of the transformations. The results suggest that potential advantages of probabilistic protocols over deterministic ones can be found in some cases. Because of the technicality of such extensions caused by the more complicated nature of probabilistic channel transformations~\cite{burniston_2020}, we defer the details to the Methods and Supplementary Note~\ref{sec:prob}.

\section*{Discussion}
We introduced universal quantitative bounds on the achievable performance of any deterministic distillation protocol in general quantum resource theories. We showed that our bounds can be used to establish fundamental no-go relations in the manipulation of quantum channels, introduce powerful restrictions on the overhead of any physical distillation protocol using the most general, adaptive manipulation schemes with indefinite causal order, and lead to several strong converse bounds for the asymptotic transformations of channels. We demonstrated the versatility of our bounds through applications to quantum communication and stabiliser-state quantum computation, using the bounds to characterise these important resource theories. Finally, we extended the insights provided by our bounds to distillation protocols which allow probabilistic implementations.

A key feature of our bounds is their generality: using a general resource-theoretic approach, we were able to establish limitations on the manipulation of quantum resources under only the most basic assumptions enforced by the structure of quantum mechanics. This reveals common aspects shared by all types of quantum resources, establishing our bounds as fundamental quantitative limitations on channel manipulation. Importantly, however, such breadth of this approach does not diminish its usefulness in concrete settings of interest --- we have shown that all of our bounds can be directly applied in a multitude of relevant resources, and we expect them to find use also in many settings that we have not considered here explicitly. On the practical side, our results shed light in particular on the important problem of purifying noisy resources. Because of the unavoidably noisy character of near-term quantum technologies~\cite{preskill_2018}, such distillation schemes are often necessary, and we therefore anticipate our bounds to find use in the practical investigation of the limitations of quantum information processing in a broad range of settings such as channel discrimination~\cite{pirandola_2019-2,Zhuang2020ultimate}, noise estimation~\cite{Pirandola2017noise}, programming of quantum channels~\cite{Yang2020programming,banchi_2020}, and covariant error correction~\cite{Faist2019covariant,Woods2020covariant,Zhou2020covariant}.

An interesting direction to consider in further research would be to understand precisely when and how the bounds can be achieved in practical setups, and how the different types of transformations --- parallel, adaptive, or ones with an indefinite causal order --- perform in various types of manipulation tasks. It would also be intriguing to apply our methods to the transformation of other types of dynamical resources, such as Bell nonlocality or quantum contextuality, which may provide further operational insights into the fundamental advantages enabled by quantum theory in different settings.

\textit{Note.{\,---}} During the completion of this manuscript, we became aware of a related work by Fang and Liu~\cite{fang_2020-2} where the authors independently considered the resource weight and obtained results related to the weight-based bounds in our Theorems~\ref{qip:thm1}~and~\ref{qip:thm2}.


\section*{Methods}

We give an overview of the main techniques used to establish our results in Theorems~\ref{qip:thm1}--\ref{qip:thm3}. The complete technical details, along with additional discussion and extensions, can be found in the Supplementary Information.

\subsection*{One-shot bounds (Theorem~\ref{qip:thm1})}

We will use the fact that both the robustness and weight measures can be expressed in terms of the max-relative entropy $D_{\max} ( \rho \| \sigma ) \coloneqq  \log \inf \lset \lambda \bar \rho \leq \lambda \sigma \rset$~\cite{datta_2009}. Defining $R_{\max} ( \rho \| \sigma) \coloneqq 2^{D_{\max}(\rho \| \sigma)}$, for any channels $\E, \F : A \to B$ one can define the optimised channel divergence~\cite{wilde_2020}
\begin{equation}\begin{aligned}
	R_{\max} ( \E \| \F ) \coloneqq \max_{\psi} R_{\max} ({\mathrm{id}} \otimes \E (\psi) \| {\mathrm{id}} \otimes \F (\psi)).
\end{aligned}\end{equation}
This generalisation of the max-relative entropy to channels obeys some useful properties, and in particular it holds that~\cite{wilde_2020}
\begin{equation}\begin{aligned}
	R_{\max} ( \E \| \F ) = R_{\max} ( J_\E \| J_\F),
\end{aligned}\end{equation}
that is, it suffices to consider the Choi matrices of the channels to evaluate the max-relative entropy. Exploiting the properties of $R_{\max}$ and the convex structure of the involved optimisation, we can then express the robustness measure ${R_{\OO}}$ as
\begin{align}
	{R_{\OO}} (\E) &= \min_{\M \in \OO} R_{\max} ( \E \| \M) \\
	&= \min_{\M \in \OO} \max_{\psi} R_{\max} ({\mathrm{id}} \otimes \E (\psi) \| {\mathrm{id}} \otimes \M (\psi))\\
	&= \max_{\psi} \min_{\M \in \OO} R_{\max} ({\mathrm{id}} \otimes \E (\psi) \| {\mathrm{id}} \otimes \M (\psi)).
\end{align}
In a very similar way, the weight $W_\OO$ can be written
\begin{align}
	W_\OO(\E)^{-1} &= \min_{\M \in \OO} R_{\max} ( {J}_\M \| {J}_\E)\\
	&= \min_{\M \in \OO} \max_{\psi} R_{\max} ({\mathrm{id}} \otimes \M (\psi) \| {\mathrm{id}} \otimes \E (\psi))\\
	&= \max_{\psi} \min_{\M \in \OO} R_{\max} ({\mathrm{id}} \otimes \M (\psi) \| {\mathrm{id}} \otimes \E (\psi)).
\end{align}
The next step is to use convex duality to express the robustness and weight as
\begin{align}
  {R_{\OO}} (\E) = \max_{\psi} \big\{ &\Tr\!\left( X\, {\mathrm{id}} \otimes \E (\psi) \right) \;\big|\\
   &X \geq 0,\; \Tr\!\left( X \,{\mathrm{id}} \otimes \M (\psi) \right) \leq 1 \; \forall \M\in \OO \big\},\nonumber\\
   W_\OO (\E) = \min_{\psi} \big\{ &\Tr\!\left( X\, {\mathrm{id}} \otimes \E (\psi) \right) \;\big|\\
   &X \geq 0,\; \Tr\!\left( X \,{\mathrm{id}} \otimes \M (\psi) \right) \geq 1 \; \forall \M\in \OO \big\}.\nonumber
\end{align}

The core of the idea behind the proof of Theorem~\ref{qip:thm1} is then as follows. Due to the purity of the target channel $\T$ (whether it is a unitary channel $\U$ or a replacement channel $\R_\phi$), the expression for the fidelity $F_\OO$ simplifies: we either have
\begin{equation}\begin{aligned}
   F_\OO(\U) = \max_{\M\in \OO} \Tr \left( {\mathrm{id}} \otimes \U (\psi^\star) \,{\mathrm{id}} \otimes \M (\psi^\star) \right)
\end{aligned}\end{equation}
for some optimal pure state $\psi^\star$, or, in the state case, we can write
\begin{equation}\begin{aligned}
	F_\FF(\phi) = \max_{\sigma \in \FF} \Tr( \phi \sigma ).
\end{aligned}\end{equation}
This allows us to use either the target channel $\U$ or the target state $\phi$ to construct feasible solutions for the dual form of ${R_{\OO}}$ and $W_\OO$. Specifically, the operator $\frac{1}{F_\OO(\U)} \left({\mathrm{id}} \otimes \U (\psi^\star)\right)$ or $\frac{1}{F_\FF(\phi)} \phi$ can be used to lower bound ${R_{\OO}}$, while the operator $\frac{1}{1-F_\OO(\U)} \left( \id - {\mathrm{id}} \otimes \U(\psi^\star)\right)$ or $\frac{1}{1-F_\FF(\phi)} \left( \id - \phi\right)$ gives an upper bound on $W_\OO$. These bounds immediately lead to the restrictions stated in Theorem~\ref{qip:thm1}.

\subsection*{Many-copy bounds (Theorem~\ref{qip:thm2})}

Mathematically, an $n$-channel quantum process $\Upsilon$ --- the most general physically realisable manipulation protocol involving multiple quantum channels  --- is an $n$-linear map which takes $n$ channels as input and outputs a single channel. Although the property of complete positivity is sometimes expected of such transformations~\cite{chiribella_2008-1,oreshkov_2012}, we do not require it, and all of our results are valid as long as the maps in consideration satisfy $\Upsilon(\N_1, \ldots, \N_n) \in {\mathrm{CPTP}}$ for any $\N_1, \ldots, \N_n \in {\mathrm{CPTP}}$. We can then define the set of free quantum processes as those which always result in a free channel, provided that all inputs are free:
\begin{equation}\begin{aligned}
\mathbb{S}_{(n)} \coloneqq \lset \Upsilon \bar \Upsilon(\M_1,\ldots,\M_n) \in \OO \;\, \forall \M_1, \ldots, \M_n \in \OO \rset.
\label{eqm:free comb}
\end{aligned}\end{equation}
In this sense, superchannels can be understood as (completely positive) processes acting on a single input.

Our main technical contribution is to show a very general type of sub- or super-multiplicativity that the robustness and weight measures obey. In particular, we show that, given any collection of $n$ channels $(\E_1, \ldots, \E_n)$, it holds that
\begin{equation}\begin{aligned}\label{eqm:weight-comb-submult}
  W_\OO\left(\Upsilon(\E_1,\ldots,\E_n)\right) \geq \prod_i W_\OO(\E_i)
\end{aligned}\end{equation}\vspace*{-.3\baselineskip}
and
\begin{equation}\begin{aligned}\label{eqm:rob-comb-submult}
   {R_{\OO}}\left(\Upsilon(\E_1,\ldots,\E_n)\right) \leq \prod_i R_\OO(\E_i)
\end{aligned}\end{equation}
for any free process $\Upsilon \in \mathbb{S}_{(n)}$. The basic idea behind the proof is to take an optimal channels $\M_i$ such that each $\E_i$ satisfies $J_{\E_i} \geq \mu_i J_{\M_i}$ in the case of $W_\OO$ or $J_{\E_i} \leq \mu_i J_{\M_i}$ in the case of $R_\OO$. By showing that $\Upsilon(\mu_1 \M_1, \mu_2 \M_2, \ldots, \mu_n \M_n)$ forms a valid feasible solution for $W_\OO\left(\Upsilon(\E_1,\ldots,\E_n)\right)$ or $R_\OO\left(\Upsilon(\E_1,\ldots,\E_n)\right)$, we obtain our desired result. 
Notably, the proof uses only the positivity and $n$-linearity of the free process $\Upsilon$, requiring no additional assumptions about the structure of the transformation.

Combined with Theorem~\ref{qip:thm1}, our result then immediately leads to the statement of Theorem~\ref{qip:thm2}. However, we stress that the property of sub- or super-multiplicativity that we have shown is much more general: the target in the transformation need not be a pure (unitary or replacement) channel, meaning that the inequalities in Eqs.~\eqref{eqm:weight-comb-submult}--\eqref{eqm:rob-comb-submult} are valid for any channel manipulation protocol. Although in the main text we have focused on the application to the task of channel distillation, this general feature of the robustness and weight measures can find use in broader channel processing tasks which involve multiple channels.

For instance, the task of channel synthesis is concerned with simulating the action of the given channel $\E$ by employing multiple uses of another channel, $\F$, and processing them with a free transformation protocol $\Upsilon$. We then immediately obtain lower bounds on the required number of uses of $\F$ under any physical transformation protocol:
\begin{equation}\begin{aligned}
  n \geq \frac{\log {R_{\OO}}(\E)}{\log {R_{\OO}}(\F)},\quad n \geq \frac{\log W_\OO(\E)}{\log W_\OO(\F)},
\end{aligned}\end{equation}
where in the second inequality we have assumed that $W_\OO(\E)$ and $W_\OO(\F)$ are not both 0.
When $\F$ is chosen to be a pure resource channel such as the target $\T$, this can be understood as the opposite task to distillation --- resource dilution. 

\subsection*{Asymptotic bounds (Theorem~\ref{qip:thm3})}

Both of our asymptotic bound in Theorem~\ref{qip:thm3} are consequences of the results of Theorem~\ref{qip:thm1} and \ref{qip:thm2} coupled with the assumption that $F_\OO(\T^{\otimes m})=F_\OO(\T)^m$. This means in particular that they apply to general manipulation protocols $\Upsilon$ without making assumptions about their structure, in contrast to most previous asymptotic bounds in the literature which explicitly considered sequential manipulation protocols with a fixed causal order.

We note that the second, regularised bound for parallel channel transformations (Eq.~\eqref{eqm:thm3}) requires a more careful approach, relying also on some technical bounds on the fidelity distance between channels. In particular, the `smoothing' parameter $\delta$ encountered here is the reason why the result applies to parallel manipulation protocols only --- an extension to more general transformations would entail an optimisation in the space of quantum combs (or quantum processes), and a straightforward application of our methods to this case does not appear to be possible. Whether this can be circumvented with a different approach remains an open question.

\subsection*{Extension to probabilistic protocols}

We have so far focused our discussion on deterministic channel transformations where superchannels (and quantum processes) transform channels to channels. 
To investigate a probabilistic version of such protocols, we need to consider `sub-superchannels': the linear maps which transform quantum channels to probabilistic implementations of channels in the form of completely positive, trace--non-increasing maps (subchannels), even when acting only on a part of a larger system~\cite{burniston_2020}. The operational meaning of these maps becomes clear by considering them as constituents of superinstruments, i.e., collections of sub-superchannels $\{\tilde\Theta_i\}$ --- each representing a single outcome of a probabilistic transformation --- such that the overall transformation $\sum_i \tilde\Theta_i$ is a superchannel. 
Just as the usual quantum instrument, a superinstrument can be assumed to come with a classical register recording which sub-superchannel was applied. 
Then, probabilistic protocols are declared successful when we learn that $\tilde\Theta_0$ was realised and are judged to have failed otherwise. 
To introduce the notion of free transformation in this context, let us first define the set of free subchannels.
If we think of free subchannels as a probabilistic version of free channels, it is natural to impose that every free subchannel probabilistically realises a transformation implemented by some free channel. 
This observation motivates us to define the set of free subchannels $\tilde\OO$ with respect to the given set of free channels $\OO$ as 
\begin{equation}\begin{aligned}
\tilde\OO\coloneqq \big\{ \tilde\M \;\big|\;& \forall \rho\in\DD,\; \exists\, \M\in\OO,\, t\in[0,1]\\
&\mbox{ s.t. } {\mathrm{id}}\otimes\tilde\M(\rho) = t\cdot{\mathrm{id}}\otimes\M(\rho) \big\},
\end{aligned}\end{equation}
and we correspondingly define the set of free sub-superchannels as $\tilde{\mathbb{S}} \coloneqq \lset\tilde\Theta\sbar\forall \M\in\OO,\; \tilde\Theta(\M)\in\tilde\OO\rset$.

We also need to establish a figure of merit for the probabilistic purification protocol. 
A subtlety is that the probability of the occurrence of a sub-superchannel $\tilde\Theta$ depends not only on the input channel $\E$, but also the input state $\psi$ as $\Tr[{\mathrm{id}}\otimes\tilde\Theta(\E)(\psi)]$. 
Integrating this observation with the definition of the fidelity for channels $F(\E, \T)$, we define the fidelity between the target channel and an output subchannel conditioned on its occurrence as
\begin{equation}\begin{aligned}
 F_{\rm cond}(\tilde\Theta(\E),\T)\coloneqq\min_\psi F\left(\frac{{\mathrm{id}}\otimes\tilde\Theta(\E)(\psi)}{p(\psi)}, {\mathrm{id}}\otimes\T(\psi)\right)
 \label{eqm:fidelity condition}
\end{aligned}\end{equation}
where $p(\psi)=\Tr[{\mathrm{id}}\otimes\tilde\Theta(\E)(\psi)]$.

We can then establish an analogue of Theorem~\ref{qip:thm1} for probabilistic channel manipulation. Specifically, we show that if there exists a free sub-superchannel $\tilde\Theta \in \tilde{\mathbb{S}}$ which achieves the transformation $\E \to \T$ with fidelity $F_{\rm cond}(\tilde\Theta(\E), \T) \geq 1-\ve$ and probability $p=\Tr[{\mathrm{id}}\otimes\tilde\Theta(\E)(\psi)]$, then
\begin{equation}\begin{aligned}\label{eqm:main-nogo-rob prob}
  \ve &\geq 1 - \frac{{R_{\OO}}(\E) \, F_\OO^\psi(\U)}{p}
\end{aligned}\end{equation}
and\vspace*{-.5\baselineskip}
\begin{equation}\begin{aligned}\label{eqm:main-nogo-weight prob loose}
  \ve &\geq 1-\frac{1-(1 - F_\OO^\psi(\U) )W_\OO(\E)}{p}
\end{aligned}\end{equation}
where $F_\OO^\psi(\U) \coloneqq \max_{\M\in\OO}F({\mathrm{id}}\otimes\U(\psi),{\mathrm{id}}\otimes\M(\psi))$. This resembles our previous bounds, but now explicitly incorporates the dependence on a probability $p$.

Another type of bound for probabilistic transformations can be obtained by taking $\M\in\OO$ to be a free channel such that $J_\E\geq W_\OO(\E)J_\M$. We then obtain
\begin{equation}\begin{aligned}\label{eqm:main-nogo-weight prob}
  \ve &\geq (1 - F_\OO^\psi(\U) )\, \frac{W_\OO(\E)\Tr[({\mathrm{id}}\otimes\tilde\Theta(\M)(\psi)]}{p}.
\end{aligned}\end{equation}
This bound addresses a question of whether the no-go statement implied by Theorem~\ref{qip:thm1}, which says that perfect purification with $\ve=0$ is impossible for any channel with  $W_\OO(\E)>0$, remains valid in probabilistic cases. 
Eq.~\eqref{eqm:main-nogo-weight prob} implies that if $\Tr[{\mathrm{id}}\otimes\tilde\Theta(\M)(\psi)]>0$, the no-go theorem still holds.   
On the other hand, if $\Tr[{\mathrm{id}}\otimes\tilde\Theta(\M)(\psi)]=0$, meaning that the free part of $\E$ is completely cut off by the selective operation $\tilde\Theta$, then this does not give us any insight into $\ve$. 
This is actually a natural consequence because such a perfect purification is indeed possible, as we discuss in Supplementary Note~\ref{sec:prob} in detail.


\begin{acknowledgments}
We are grateful to Mark M.\ Wilde for helpful comments. We acknowledge useful discussions with Kun Fang and Zi-Wen Liu related to weight-based bounds and in particular their applications in probabilistic protocols. We also thank the authors for sharing a draft of their work~\cite{fang_2020-2} with us and agreeing to wait in order to make the concurrent posting of our preprints possible.

B.R.\ was supported by the Presidential Postdoctoral Fellowship from Nanyang Technological University, Singapore.
R.T.\ acknowledges the support of NSF, ARO, IARPA, AFOSR, the Takenaka Scholarship Foundation, the National Research Foundation (NRF) Singapore, under its NRFF Fellow programme (Award No. NRF-NRFF2016-02), and the Singapore Ministry of Education Tier 1 Grant 2019-T1-002-015. Any opinions, findings and conclusions or recommendations expressed in this material are those of the author(s) and do not reflect the views of National Research Foundation, Singapore.
\end{acknowledgments}

\bibliographystyle{apsrev4-1a}
\bibliography{bib_bartosz,bib_ryuji}

\titleformat{\section}
  {\centering\normalfont\bfseries\large}
  {Supplementary Note \thesection:}{.5em}{}

\clearpage

\onecolumngrid
\begin{center}
\vspace*{\baselineskip}
{\textbf{\large Fundamental limitations on distillation of quantum channel resources\\[2pt] --- Supplementary Information ---}}\\[1pt] \quad \\
\end{center}

\addtocontents{toc}{\protect\setcounter{tocdepth}{1}}

 
The Supplementary Information is structured as follows. In Supplementary Note~\ref{sec:setting}, we introduce in full technical detail our notation and the general setting of channel manipulation and quantum resources, as well as establish some auxiliary results which will help us in characterising the robustness and weight monotones. 
 In Supplementary Note~\ref{sec:nogos}, we prove our general no-go theorems on the performance of channel distillation protocols (Theorem~\ref{qip:thm1} in main text). We extend these insights in Supplementary Note~\ref{sec:many-copy}, where we study many-copy channel manipulation protocols, use our results to provide bounds on the distillation overhead (Theorem~\ref{qip:thm2} in main text), establish bounds applicable beyond distillation protocols, and apply our main findings to the asymptotic setting (Theorem~\ref{qip:thm3} in main text). We discuss explicit applications in Supplementary Note~\ref{sec:main-app} with a detailed discussion of how our bounds can shed light on important tasks in quantum communication and fault-tolerant quantum computation.
We conclude in Supplementary Note~\ref{sec:prob} with a discussion of probabilistic distillation protocols and an extension of our approach to this setting.
The Supplementary Information may, in some places, overlap with the content of the main text in order to be as self-contained as possible and to provide additional details and motivations.

\tableofcontents


\section{Setting}\label{sec:setting}

All of our discussions will take place in finite-dimensional Hilbert spaces. Given the Hilbert space of a system $A$, we will write $\LL(A)$ for the linear operators, and $\DD(A)$ for the density operators acting on this space. We use $\CPTP(A \to B)$ to denote the set of quantum channels, i.e.\ completely positive and trace-preserving (CPTP) maps from $\LL(A)$ to $\LL(B)$. We associate with each channel $\E : A \to B$ its Choi matrix $J_{\E} \coloneqq \idc \otimes \E(\Phi^+) \in \LL(RB)$, where $\Phi^+ = \sum_{i,j} \ket{ii}\!\bra{jj}$ is the unnormalised maximally entangled state in $\LL(RA)$ and $R \cong A$. The normalised Choi state is then $\wt J_{\E} \coloneqq J_\E / d_A$. We use $\< A, B \> = \Tr(A^\dagger B)$ for the Hilbert-Schmidt inner product between operators. All logarithms will be taken to the base 2, unless specified otherwise.

We will be concerned with schemes for transformations of quantum channels, that is, maps $\CPTP(A \to B) \to \CPTP(C \to D)$. Recall that such superchannels~\cite{chiribella_2008} can be written as $\TT(\E) = \M_{RB \to D} \circ ( \idc_{R} \otimes \E ) \circ \N_{C \to RA}$ where $\N, \M$ are some pre- and post-processing quantum channels, $\idc$ is the identity channel, and $R$ denotes an ancillary system (see Fig.~\ref{fig:sc-def} in the main text). 


\subsection{Resource theories}

As discussed in the main text, a given set of channels $\OO$ is designated as \textit{free} --- these are the channels which are freely available to use within the constraints of the given physical setting, and all channels outside of this set are the resourceful ones. To ensure broad applicability, we only make the natural assumptions that the set $\OO$ is closed and convex. When discussing channels acting on different spaces, we will assume that each space in consideration has its own associated set of free channels $\OO$, and for simplicity of notation we often do not explicitly indicate the relevant spaces.

The transformations of channels, and the issue of what exactly constitutes a free channel transformation, are approached in various ways~\cite{yuan_2019,liu_2020,liu_2019-1,gour_2019-1}. Often, one is interested in the manipulation of channels under superchannels of the form $\TT(\E) = \M \circ ( \idc \otimes \E ) \circ \N$ where the pre- and post-processing channels $\N$ and $\M$ are both free. In order to apply our results to more general settings, we will make no such assumption, and instead take the weakest possible constraint on the considered set of free superchannels: that for any $\M \in \OO$, we necessarily have $\Theta(\M) \in \OO$. We use $\SS$ to denote the set of all such resource-preserving superchannels. By studying these transformations, we will therefore obtain the most general bounds on the achievable performance of \textit{any} free channel manipulation protocol, since any valid choice of free transformations will necessarily be a subset of $\SS$.

The first type of resources that we consider is concerned with the investigation of intrinsic channel resources; this includes the various resource theories of quantum communication~\cite{devetak_2008,liu_2019-1,Fang2020max,takagi_2020,kristjansson_2020}, the related setting of quantum memories~\cite{rosset_2018,yuan_2020}, and quantum error mitigation~\cite{takagi2020optimal,Endo2021hybrid}. The other type is concerned with an underlying state-based resource and the manipulation of channels in order to extract or utilise the state resource more effectively; this includes, for instance, quantum entanglement~\cite{berta_2013,pirandola_2017,christandl_2017,wilde_2018,gour_2019,bauml_2019}, coherence~\cite{bendana_2017,diaz_2018-2,theurer_2019,Theure2020dynamical,saxena_2020}, thermodynamics~\cite{faist_2018}, or non-stabiliser (magic) states~\cite{seddon_2019,wang_2019-1}.
In the case of such state-based theories, we will use $\FF$ to denote the corresponding (convex and closed) set of free states. We will then study three levels of resources: the free states $\FF$, for example separable states in entanglement theory; the free channels $\OO$, for example the closure of the set of local operations and classical communication (LOCC) or other chosen classes of operations in entanglement manipulation; and $\SS$, which we will always take to be the set of all superchannels that preserve the chosen set $\OO$, and which can be understood as the most general way to manipulate free channels while preserving their freeness.

We remark that although we focus here on the framework accounting for all channels as valid objects under study, it is sometimes useful to impose certain constraints on the considered channels (both free and non-free), restricting the attention to CPTP maps possessing a certain structure. Such an approach can be used to investigate settings such as Bell nonlocality~\cite{Bell1966hidden,Brunner2014Bell,de_Vicente_2014_nonlocality,Wolfe2020quantifyingbell,Schmid2020typeindependent,Rosset2020type}, post-quantum correlations~\cite{Cirelson1980,Popescu1994PR,Geller_2014,Schmid2020post}, and quantum contextuality~\cite{KOCHEN1967,Cabello2014graph,Amaral2018wiring,Duarte2018contextuality}.
Our results can indeed be extended to those cases, which we discuss in more detail elsewhere~\cite{regula2020oneshot}.

\subsection{Benchmarking channel transformations and resources}

In its general formulation, the task of distillation can be understood as the transformation of a noisy resource channel $\E : A \to B$ into `pure' or `perfect' resources, which are represented by some target channel $\T : C \to D$. Our task then is to understand when one can achieve transformations of the type $\TT(\E) = \T$, with $\TT \in \SS$ being a free transformation.

However, in any practical setting, it is important to allow for the possibility of error in the transformation, reflecting the physical imperfections in the manipulation of channels. To account for this, we will employ the worst-case fidelity between two channels $\E, \F : A \to B$~\cite{belavkin_2005,gilchrist_2005}\footnote{The optimisation in the definition of $F(\E,\F)$ can be understood as being over arbitrary ancillary spaces $R$, but in fact it suffices to take $R \cong A$~\cite{belavkin_2005,gilchrist_2005}.}
\begin{equation}\begin{aligned}\label{eq:fidelity_channels}
  F(\E, \F) \coloneqq& \min_{\rho_{RA}} F(\idc \otimes \E (\rho), \idc \otimes \F (\rho))\\
  =& \min_{\psi_{RA}} F(\idc \otimes \E (\psi), \idc \otimes \F (\psi)),\\
\end{aligned}\end{equation}
where $F(\rho,\sigma) = \norm{\sqrt{\rho}\sqrt{\vphantom{\rho}\sigma}}{1}^2$ is the state fidelity, and in the second line the optimisation is constrained to pure input states. We will thus aim to achieve a transformation such that $F(\Theta(\E),\T) \geq 1-\ve$ for some small error $\ve$.
This fidelity metric is very closely related to the channel diamond norm $\norm{\cdot}{\dia}$~\cite{kitaev_1997} as $1 - \sqrt{F(\E,\F)} \leq \frac{1}{2} \norm{\E - \F}{\dia} \leq \sqrt{1 - F(\E,\F)}$~\cite{belavkin_2005,fuchs_1999}, meaning that our results can be straightforwardly adapted also to cases when diamond norm error is considered.

In order to quantify the resources of general quantum channels, we will employ two different resource measures which were first defined in resource theories of states. These quantities are also directly related to more general geometric concepts based on Hilbert's projective metric~\cite{bushell_1973,reeb_2011}, and specifically the Funk metric~\cite{funk_1929}.  Let us begin by recalling their definitions in the state-based setting\footnote{%
We note that our notational conventions are slightly different from ones typically encountered in the literature; in particular, the robustness is often defined as $\RFg(\rho)-1$ in our notation, while the resource weight is often defined as $1-W_\FF(\rho)$.}. First, the (generalised) \textit{robustness}~\cite{vidal_1999,brandao_2015} is given by
\begin{equation}\begin{aligned}\label{eq:rob_st_def}
  \RFg (\rho) \coloneqq \min \lset \lambda \bar \rho \leq \lambda \sigma,\; \sigma \in \FF \rset,
\end{aligned}\end{equation}
where the inequality is with respect to the cone of positive semidefinite operators. This can be understood as the least amount $\lambda$ such that, when mixed with another state $\omega$, the mixture $\rho + (\lambda-1) \omega = \lambda \sigma$ has $\sigma \in \FF$. 
The measure previously found a number of applications in different operational tasks~\cite{harrow_2003,brandao_2011,brandao_2015,napoli_2016,anshu_2018-1,takagi_2019-2,takagi_2019,liu_2019,regula_2020,seddon_2020,regula_2020-1}. In order to avoid pathological cases, throughout this work we assume that the robustness $R_\FF(\rho)$ is finite (that is, the minimum in Eq.~\eqref{eq:rob_st_def} exists) --- this can be guaranteed for any state $\rho$ by requiring that the set of free states $\FF$ contains at least one state of full rank, which is a natural requirement in physical theories of interest.

A closely related quantity is the \textit{resource weight}:
\begin{equation}\begin{aligned}
  W_{\FF} (\rho) \coloneqq \max \lset \lambda \bar \rho \geq \lambda \sigma,\; \sigma \in \FF \rset.
\end{aligned}\end{equation}
The resource weight generalises a measure of entanglement known as the best separable approximation~\cite{lewenstein_1998}, and was recently shown to be a meaningful quantifier of general resources~\cite{uola_2020-1,ducuara_2020,Ducuara2020multiobject}. 
The optimal $\lambda$ here can be understood the largest weight that a free state $\sigma$ can take in a convex decomposition $\rho = \lambda \sigma + (1-\lambda) \omega$ for an arbitrary state $\omega$. This monotone is a peculiar quantity: although it gives a faithful and strongly monotonic measure in any convex resource theory~\cite{regula_2018}, it exhibits unusual behaviour. Specifically, if there exists a free state $\sigma \in \supp(\rho)$, we necessarily have $W_\FF(\rho) > 0$; conversely, if a given $\rho$ does not have any free states in its support, the weight achieves the minimum value $W_\FF(\rho) = 0$. In particular, any resourceful pure state $\psi \notin \FF$ has $W_\FF(\psi) = 0$.

The measures can be extended to quantum channels~\cite{diaz_2018-2,wilde_2020,takagi_2019,liu_2019-1,uola_2020-1} by considering the corresponding Choi matrices:
\begin{equation}\begin{aligned}
  \ROg (\E) \coloneqq \min &\lset \lambda \bar J_\E \leq \lambda J_\M,\; \M \in \OO \rset,\\
  W_\OO (\E) \coloneqq \max &\lset \lambda \bar J_\E \geq \lambda J_\M,\; \M \in \OO \rset.
\end{aligned}\end{equation}
This can be equivalently understood as the optimisation over free channels $\M$ such that $\lambda \M - \E$ (in the case of the robustness) or $\E - \lambda \M$ (for the resource weight) is a completely positive map. Both of the quantities are valid resource monotones in that they satisfy monotonicity under any free superchannel $\Theta \in \SS$, i.e.\ $\ROg(\Theta(\E)) \leq \ROg(\E)$, with the resource weight obeying a reverse monotonicity: $W_\OO(\Theta(\E)) \geq W_\OO(\E)$. Other useful properties of the measures, such as their submultiplicativity, can also be established (see Supplementary Notes~\ref{app:properties} and~\ref{sec:many-copy}). The quantities have a simple structure which allows for an efficient computation as a semidefinite program (SDP) whenever the set of channels $\OO$ can be characterised using semidefinite constraints, which we will see to be the case in many relevant settings.

We will use one additional resource monotone in order to quantify how close a target channel $\T$ is to a free channel, and thus characterise how difficult it is to purify noisy channels into the target $\T$. We thus define the fidelity-based overlap measure
\begin{equation}\begin{aligned}
F_\OO(\T) \coloneqq& \max_{\M\in \OO} F(\T, \M).
\end{aligned}\end{equation}
In state-based theories, one analogously has
\begin{equation}\begin{aligned}
  F_\FF(\rho) \coloneqq \max_{\sigma \in \FF} F(\rho, \sigma).
\end{aligned}\end{equation}

\subsection{Properties of the robustness and resource weight}\label{app:properties}

We will use the fact that both the robustness and weight measures can be expressed in terms of the max-relative entropy $D_{\max} ( \rho \| \sigma ) \coloneqq  \log \inf \lset \lambda \bar \rho \leq \lambda \sigma \rset$~\cite{datta_2009}, whose generalisation to channels obeys some useful properties~\cite{wilde_2020}. In particular, defining $R_{\max} ( \rho \| \sigma) \coloneqq 2^{D_{\max}(\rho \| \sigma)}$, for any channels $\E, \F : A \to B$ one can define the optimised channel divergence~\cite{wilde_2020}
\begin{equation}\begin{aligned}
	R_{\max} ( \E \| \F ) \coloneqq \max_{\psi_{RA}} R_{\max} (\idc \otimes \E (\psi) \| \idc \otimes \F (\psi)), 
\end{aligned}\end{equation}
where $R \cong A$ and the maximisation can be regarded as being over the convex and compact set of density operators on $R$, with $\psi_{RA}$ thought of as a purification of a given $\rho_R$. Crucially, it holds that~\cite[Lemma 12]{wilde_2020}
\begin{equation}\begin{aligned}
	R_{\max} ( \E \| \F ) = R_{\max} ( J_\E \| J_\F),
\end{aligned}\end{equation}
that is, it suffices to consider the Choi matrices of the channels to evaluate the max-relative entropy.

We can then write the robustness measure as
\begin{equation}\begin{aligned}
	\ROg (\E) &= \min_{\M \in \OO} R_{\max} ( {J}_\E \| {J}_\M) \\
	&= \min_{\M \in \OO} \max_{\psi} R_{\max} (\idc \otimes \E (\psi) \| \idc \otimes \M (\psi)).
	\label{eq:robustness def opt}
\end{aligned}\end{equation}
In a very similar way, we notice that
\begin{equation}\begin{aligned}
	W_\OO(\E)^{-1} &= \min_{\M \in \OO} R_{\max} ( {J}_\M \| {J}_\E)\\
	&= \min_{\M \in \OO} \max_{\psi} R_{\max} (\idc \otimes \M (\psi) \| \idc \otimes \E (\psi)).
\end{aligned}\end{equation}
We will use the facts that the minimisation and maximisation problems in the above can be interchanged, which can be shown through the application of Sion's minimax theorem~\cite{sion_1958} similarly to how it was done in~Refs.~\cite{gour_2019-1,wang_2019-1}.

\begin{boxxed}{white}
\begin{lemma}\label{lemma:minimax}
We have that
\begin{equation}\begin{aligned}
	\ROg(\E) &= \max_{\psi} \min_{\M \in \OO} R_{\max} (\idc \otimes \E (\psi) \| \idc \otimes \M (\psi)),\\
	W_\OO(\E) &= \min_{\psi} \max_{\M \in \OO} R_{\max} (\idc \otimes \M (\psi) \| \idc \otimes \E (\psi))^{-1},\\
	F_\OO(\E) &= \min_{\psi} \max_{\M \in \OO} F (\idc \otimes \E (\psi), \idc \otimes \M (\psi)).\\
\end{aligned}\end{equation}
\end{lemma}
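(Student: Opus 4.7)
The plan is to handle the robustness and weight identities together by a short sandwich argument built on the fact that the channel max-relative entropy is attained on the maximally entangled input, and to handle the fidelity identity separately via a reparameterization followed by Sion's minimax theorem.

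For the robustness, I would invoke the identity $R_{\max}(\E\|\M) = R_{\max}(J_\E\|J_\M)$ from Lemma~12 of~\cite{wilde_2020}, which is equivalent to saying that the supremum defining the channel max-relative entropy is attained at the maximally entangled input $\psi = \Phi^+$. This immediately gives $\min_{\M\in\OO}\max_\psi R_{\max}(\idc\otimes\E(\psi)\,\|\,\idc\otimes\M(\psi)) = \min_{\M\in\OO}R_{\max}(J_\E\|J_\M) = \ROg(\E)$. The reversed ordering $\max_\psi\min_\M R_{\max}(\cdot\|\cdot)$ is then pinned between the same value from both sides: it is $\leq \min_\M\max_\psi = \ROg(\E)$ by generic weak duality, and $\geq \min_\M R_{\max}(J_\E\|J_\M) = \ROg(\E)$ by simply evaluating the outer objective at $\psi = \Phi^+$. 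The weight identity follows from running the identical sandwich with the two arguments of $R_{\max}$ swapped, yielding $\max_\psi\min_\M R_{\max}(\idc\otimes\M(\psi)\,\|\,\idc\otimes\E(\psi)) = \min_\M R_{\max}(J_\M\|J_\E) = W_\OO(\E)^{-1}$, and then distributing a reciprocal inward to convert $\max_\psi\min_\M R_{\max}$ into $\min_\psi\max_\M R_{\max}^{-1}$.

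For the fidelity, the above trick is unavailable, since the minimizer over $\psi$ in $F(\E,\M) = \min_\psi F(\idc\otimes\E(\psi),\idc\otimes\M(\psi))$ genuinely depends on $(\E,\M)$. The plan instead is: (i)~reparameterize the minimum---since any two purifications of the same $\rho_A = \Tr_R\psi$ are related by a unitary on $R$, and fidelity is unitarily invariant while $\idc\otimes\E,\idc\otimes\M$ commute with such unitaries, the quantity $F(\idc\otimes\E(\psi),\idc\otimes\M(\psi))$ depends on $\psi$ only through $\rho_A$, so $\min_\psi$ may be taken over $\rho_A\in\DD(A)$; (ii)~show that as a function of $\rho_A$, $F$ is convex. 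For the latter I would apply Uhlmann's theorem together with Stinespring dilations $V_\E,V_\M$ into a common environment to obtain
\begin{equation*}
  \sqrt{F(\idc\otimes\E(\psi),\idc\otimes\M(\psi))} \;=\; \max_{U}\,\bigl|\Tr\bigl(V_\E^\dagger(I_B\otimes U)V_\M\,\rho_A\bigr)\bigr|,
\end{equation*}
where $U$ ranges over unitaries on the environment. The right-hand side is a supremum of absolute values of linear functions of $\rho_A$, hence convex in $\rho_A$; squaring preserves convexity of non-negative functions, so $F$ itself is convex in $\rho_A$. Concavity in $\M$ for fixed $\rho_A$ is immediate from joint concavity of the fidelity and the affine dependence of $\idc\otimes\M(\psi)$ on $\M$. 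With both $\OO$ and $\DD(A)$ convex and compact and $F$ continuous, Sion's minimax theorem then swaps $\max_\M$ and $\min_{\rho_A}$, which after reparameterization gives the claim.

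The main obstacle is the fidelity case: a direct application of Sion's theorem in the pure-state variable $\psi$ fails, because $F$ is concave (not convex) in $\psi$, so the natural parameterization has the wrong convexity structure. The Uhlmann-based reparameterization by the reduced state $\rho_A$ is essential to recover convexity on the minimization side and make Sion's theorem applicable; the robustness and weight identities, in contrast, require only the elementary observation that the channel max-relative entropy is attained at the maximally entangled state.
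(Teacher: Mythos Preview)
Your proposal is correct, and for the robustness and weight identities it is genuinely different from---and more elementary than---the paper's proof. The paper applies Sion's minimax theorem uniformly to all three quantities: it reparameterizes the pure-state optimization by the reduced state $\rho_R$, then invokes quasi-convexity of $D_{\max}$ in $\M$ and concavity of $D_{\max}$ in $\rho_R$ (citing~\cite{wang_2019-1,gour_2019-1}) to swap the extremizations. Your sandwich argument for $\ROg$ and $W_\OO$ bypasses Sion entirely by exploiting the special feature of the max-relative entropy that its channel version is attained at the maximally entangled input; this gives an explicit saddle point $\psi=\Phi^+$ and avoids any need to verify convexity/concavity properties. The tradeoff is that your argument is specific to $R_{\max}$ and would not extend to other channel divergences, whereas the paper's Sion-based template does. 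For the fidelity identity your approach coincides with the paper's: both reparameterize by the reduced state and apply Sion, with your Uhlmann--Stinespring derivation of convexity in $\rho_A$ being precisely the content of the references the paper cites but does not spell out. One minor remark: joint concavity is a property of the root fidelity $\sqrt{F}$, not $F$ itself, so strictly speaking what you obtain in $\M$ is quasi-concavity of $F$---but that is all Sion requires.
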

\end{boxxed}
\begin{proof}
The minimax optimisation can be regarded as being over two convex, compact sets: the given set of free channels $\OO$ and the set of density matrices on $\rho_R$, with the state $\psi_{RA}$ taken to be a purification of $\rho_R$. To apply Sion's minimax theorem to $\ROg$, we then need that the objective function is quasi-concave in $\rho_R$ and quasi-convex in $\M$~\cite{sion_1958}. Since $R_{\max}$ is the composition of $D_{\max}$ with the non-decreasing function $2^x$, quasi-convexity in $\M$ follows from the quasi-convexity of $D_{\max}$~\cite{datta_2009}, and quasi-concavity in $\rho_R$ follows from the concavity of $D_{\max}$ (see \cite[Prop.\ 13]{wang_2019-1} and \cite[Thm.\ 2]{gour_2019-1}).  The case of $W_\OO$ follows similarly. The fidelity $F(\cdot, \cdot)$ is known to be concave in $\M$~\cite{belavkin_2005} and can be shown to be convex in $\rho_R$ following~Refs.~\cite{gour_2019-1,wang_2019-1}.
\end{proof}

We proceed to establish other useful properties of the two measures.

\begin{boxxed}{white}
\begin{lemma}\label{lem:duals}
The robustness and weight measures can be expressed in their dual forms as
\begin{align}\label{eq:dual_measures}
	\ROg(\E) = \max &\lset \< X, J_{\E} \> \bar X \geq 0,\; \< X, J_\M \> \leq 1 \; \forall \M\in \OO \rset\nonumber\\
	= \max_{\psi} &\lset \< X, \idc \otimes \E (\psi) \> \bar X \geq 0,\; \< X, \idc \otimes \M(\psi) \> \leq 1 \; \forall \M\in \OO \rset,\\
	W_\OO(\E) = \min &\lset \< X, J_{\E} \> \bar X \geq 0,\; \< X, J_\M \> \geq 1 \; \forall \M\in \OO \rset\nonumber\\
	= \min_{\psi} &\lset \< X, \idc \otimes \E (\psi) \> \bar X \geq 0,\; \< X, \idc \otimes \M(\psi) \> \geq 1 \; \forall \M\in \OO \rset,\\
	\RFg(\rho) = \max &\lset \< X, \rho \> \bar X \geq 0,\; \< X, \sigma \> \leq 1 \; \forall \sigma \in \FF \rset,\\
	W_\FF(\rho) = \min &\lset \< X, \rho \> \bar X \geq 0,\; \< X, \sigma \> \geq 1 \; \forall \sigma \in \FF \rset.
\end{align}
\end{lemma}
\end{boxxed}
\begin{proof}
Follows from standard convex duality results (see e.g.~\cite{boyd_2004,takagi_2019}) and Lem.~\ref{lemma:minimax}.
\end{proof}

\begin{boxxed}{white}
\begin{lemma}
For any superchannel $\TT \in \SS$, it holds that
\begin{equation}\begin{aligned}
 	\ROg(\TT(\E)) &\leq \ROg(\E),\\
 	W_\OO(\TT(\E)) &\geq W_\OO(\E).
 \end{aligned}\end{equation} 
\end{lemma}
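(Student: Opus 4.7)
The plan is to work directly from the definition of the robustness and weight as minimum/maximum scalars $\lambda$ satisfying operator inequalities on the Choi matrices, and exploit the fact that any superchannel admits the dilation $\TT(\E) = \M_{RB\to D}\circ(\idc_R\otimes\E)\circ\N_{C\to RA}$, which extends linearly from channels to arbitrary linear maps and preserves complete positivity.

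First I would unpack the robustness. Let $\lambda = \ROg(\E)$ and let $\M\in\OO$ be an optimizer so that $\lambda J_\M - J_\E \geq 0$; equivalently, the linear map $\lambda\M - \E$ is completely positive. Applying the dilation of $\TT$ to this CP map and using linearity gives $\TT(\lambda\M - \E) = \lambda\TT(\M) - \TT(\E)$, and since pre- and post-composing with CP maps and tensoring with $\idc_R$ preserve complete positivity, the difference $\lambda\TT(\M) - \TT(\E)$ is CP. Translating back to Choi matrices yields $J_{\TT(\E)} \leq \lambda J_{\TT(\M)}$. Because $\TT\in\SS$ and $\M\in\OO$, we have $\TT(\M)\in\OO$, so $\TT(\M)$ is feasible in the definition of $\ROg(\TT(\E))$, giving $\ROg(\TT(\E))\leq\lambda = \ROg(\E)$.

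Next I would repeat the same argument for the weight, but now using the reverse inequality. Let $\lambda = W_\OO(\E)$ and let $\M\in\OO$ achieve $J_\E \geq \lambda J_\M$, i.e., $\E - \lambda\M$ is CP. The same dilation argument produces $\TT(\E) - \lambda\TT(\M)$ CP, hence $J_{\TT(\E)}\geq\lambda J_{\TT(\M)}$ with $\TT(\M)\in\OO$. This $\TT(\M)$ is feasible in the maximization defining $W_\OO(\TT(\E))$, so $W_\OO(\TT(\E))\geq\lambda = W_\OO(\E)$.

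\textbf{Main obstacle.} The only nontrivial point is justifying that $\TT$, \emph{a priori} only defined on channels, can be applied to the CP (but non-trace-preserving) maps $\lambda\M-\E$ or $\E-\lambda\M$ and still preserves complete positivity. The cleanest way is to invoke the Chiribella--D'Ariano--Perinotti realization theorem, which guarantees that every superchannel admits the representation $\TT(\cdot) = \M\circ(\idc_R\otimes\,\cdot\,)\circ\N$ with $\N,\M$ CPTP; this formula is manifestly linear and CP-preserving on its input slot, so both steps above go through verbatim. With this in hand the proof is essentially a one-line application of the defining inequalities, so I do not anticipate further difficulties.
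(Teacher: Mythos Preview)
Your proposal is correct and follows essentially the same approach as the paper: pick an optimal feasible $\M\in\OO$ for $\ROg(\E)$ (resp.\ $W_\OO(\E)$), push the Choi inequality through $\TT$, and use $\TT(\M)\in\OO$ as a feasible point for the transformed channel. The only difference is that the paper asserts in one phrase that ``$\TT$ preserves positivity,'' whereas you explicitly justify this via the Chiribella--D'Ariano--Perinotti dilation $\TT(\cdot)=\M\circ(\idc_R\otimes\,\cdot\,)\circ\N$; this extra step is a welcome clarification but not a substantively different argument.
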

\end{boxxed}
\begin{proof}
Let $\M \in \OO$ be any channel such that $J_\E \leq \lambda J_\M$. By definition of the set $\SS$, we have that $\TT(\M) = \M' \in \OO$. Since $\TT$ preserves positivity, it holds that $J_{\TT(\E)} \leq \lambda J_{\TT(\M)} = \lambda J_{\M'}$, so $\M'$ is a valid feasible solution for $\ROg(\TT(\E))$, which concludes the proof. The case of $W_\OO$ follows analogously.
\end{proof}

The result below applies to state-based resource theories, and shows an exact relation between the channel-based and state-based measures.

\begin{boxxed}{white}
\begin{lemma}\label{lem:prep}
For any replacement channel $\R_\omega : A \to B$ defined as $\R_\omega(\cdot) = \Tr(\cdot) \omega$ with some fixed $\omega$, it holds that
\begin{equation}\begin{aligned}
	\ROg(\R_\omega) &\geq \RFg(\omega),\\
	W_\OO(\R_\omega) &\leq W_\FF(\omega),\\
	F_\OO(\R_\omega) &\leq F_\FF(\omega).
\end{aligned}\end{equation}
If the class of operations $\OO$ contains all replacement channels $\R_\sigma : A \to B$ with $\sigma \in \FF$, then equality holds in all of the above.
\end{lemma}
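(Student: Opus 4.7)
The plan is to exploit the special Choi structure of replacement channels: a direct computation yields $J_{\R_\omega} = \id_A \otimes \omega$, which will let me convert every channel-level inequality into a corresponding state-level inequality by contracting the $A$ register against a free state.

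For the robustness bound $\ROg(\R_\omega) \geq \RFg(\omega)$, the key step is to take any feasible witness $J_{\R_\omega} \leq \lambda J_\M$ with $\M \in \OO$ and apply the positive map $X \mapsto \Tr_A[(\sigma^T \otimes \id_B) X]$ for some $\sigma \in \FF$. The Choi--Jamio{\l}kowski identity $\M(\sigma) = \Tr_A[(\sigma^T \otimes \id_B) J_\M]$ collapses this to $\omega \leq \lambda \M(\sigma)$, and $\M(\sigma) \in \FF$ since $\OO$ preserves free states, so $\lambda$ is feasible for $\RFg(\omega)$. The weight bound $W_\OO(\R_\omega) \leq W_\FF(\omega)$ follows by the same argument with the inequality reversed.

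For the fidelity bound $F_\OO(\R_\omega) \leq F_\FF(\omega)$, I would upper-bound the minimum in the worst-case channel fidelity by evaluating it at the specific product input $\tau_R \otimes \sigma_A$ for any pure state $\tau$ and any $\sigma \in \FF$. Both channel outputs then become product states, the fidelity factorises as $F(\omega, \M(\sigma))$, and this is at most $F_\FF(\omega)$ since $\M(\sigma) \in \FF$. Maximising over $\M \in \OO$ gives the claim.

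For the equality statement when $\OO$ contains $\R_\sigma$ for every $\sigma \in \FF$, each reverse inequality follows by directly plugging $\R_\sigma$ into the channel-level optimisation: an optimal state-level decomposition $\omega \leq \lambda \sigma$ lifts to $J_{\R_\omega} \leq \lambda J_{\R_\sigma}$ by tensoring with $\id_A$, and likewise for the weight with the inequality reversed; for the fidelity one uses $F(\R_\omega, \R_\sigma) = F(\omega, \sigma)$, which is immediate from the product form $\idc \otimes \R_\omega(\psi_{RA}) = \psi_R \otimes \omega$ combined with the multiplicativity of fidelity on tensor products. I do not foresee a serious obstacle; the only delicate point is verifying that the contraction $\Tr_A[(\sigma^T \otimes \id_B)\,\cdot\,]$ correctly realises $\M(\cdot)$ on Choi matrices (the standard definition of the Choi map) and that an auxiliary $\sigma \in \FF$ is always available, which holds under the standing assumption that $\FF$ is non-empty.
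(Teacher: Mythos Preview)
Your proposal is correct. For the fidelity bound and for all three equality statements, your argument is essentially identical to the paper's.

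The one genuine difference is in how you establish the robustness and weight inequalities. The paper works through the \emph{dual} formulation: it takes an optimal state-level witness $X$ for $\RFg(\omega)$ and lifts it to a channel-level witness $\tau^T \otimes X$, using $\langle \tau^T \otimes X, J_\M\rangle = \langle X, \M(\tau)\rangle \leq 1$ to verify feasibility. You instead stay in the \emph{primal}: starting from a feasible $J_{\R_\omega} \leq \lambda J_\M$, you contract against $\sigma^T$ on the input register to produce a state-level decomposition $\omega \leq \lambda \M(\sigma)$. Both routes rely on exactly the same Choi identity and the fact that $\M(\sigma)\in\FF$, and they are formally dual to one another. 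Your primal argument is slightly more self-contained (no need to invoke the dual characterisation of the robustness), while the paper's dual version makes the witness explicit; neither offers a real advantage in generality or strength here.
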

\end{boxxed}
\begin{proof}
Taking any feasible dual solution $X$ for $\RFg(\omega)$ in Eq.~\eqref{eq:dual_measures} and any $\tau \in \FF$, we can see that the operator $\tau^T \otimes X$ is feasible for $\ROg$ since $\< \tau^T \otimes X, J_\M \> = \< X, \M(\tau) \> \leq 1$ for any $\M \in \OO$ using the Choi-Jamiołkowski isomorphism. This immediately gives that
\begin{equation}\begin{aligned}
  \ROg &(\R_\omega) \\
  &\geq \max \lset \< \tau^T \otimes X, J_{\R_\omega} \> \bar \< X, \sigma \> \leq 1\; \forall \sigma \in \FF,\; X \geq 0 \rset\\
  &= \max \lset \< X, \omega \> \bar \< X, \sigma \> \leq 1\; \forall \sigma \in \FF,\; X \geq 0 \rset\\
  &= \RFg(\omega).
\end{aligned}\end{equation}
Now, assume that $\R_\sigma \in \OO \; \forall \sigma \in \FF$. From the definition of the robustness, we know that there exists a $\tau \in \FF$ such that $\omega \leq \RFg(\omega)\, \tau$. But then $\id \otimes \omega \leq \id \otimes \left[\RFg(\omega)\, \tau\right] = \RFg(\omega) \, J_{\R_\tau}$. Since $\R_{\tau} \in \OO$, this gives that $\ROg(\R_\omega) \leq \RFg(\omega)$.

The case of $W_\OO$ proceeds analogously.

For the fidelity, consider that
\begin{equation}\begin{aligned}
	F_\OO(\R_\omega) = \min_{\rho_{RA}} \max_{\M \in \OO} F(\idc \otimes \M(\rho), \idc \otimes \R_{\omega} (\rho)).
\end{aligned}\end{equation}
Taking the ansatz $\rho \otimes \tau$ for the input state, where $\rho \in \DD(R)$ is an arbitrary state and $\tau \in \FF$, we get
\begin{equation}\begin{aligned}
	F_\OO(\R_\omega) &\leq \max_{\M \in \OO}  F(\rho \otimes \M(\tau), \rho \otimes \omega)\\
	& = \max_{\M \in \OO} F(\M(\tau), \omega)\\
	&\leq \max_{\sigma \in \FF} F(\sigma, \omega)\\
	& = F_\FF(\omega),
\end{aligned}\end{equation}
where the second line follows from the data processing inequality for the fidelity, and the third line since $\M(\tau) \in \FF$ for any $\tau \in \FF$. For the converse inequality, assuming that any $\R_\sigma$ is in $\OO$ gives
\begin{equation}\begin{aligned}
	F_\OO(\R_\omega) &\geq \max_{\sigma \in \FF} \min_{\rho_{RA}} F(\idc \otimes \R_{\sigma}(\rho), \idc \otimes \R_{\omega} (\rho))\\
	&= \max_{\sigma \in \FF} \min_{\rho_{RA}} F([\Tr_A \rho] \otimes \sigma, [\Tr_A \rho] \otimes \omega)\\
	&= \max_{\sigma \in \FF} F( \sigma, \omega)\\
	& = F_\FF(\omega).
\end{aligned}\end{equation}
\end{proof}
%


\section{No-go theorems for quantum channel distillation}\label{sec:nogos}

The results of this section establish Theorem~\ref{qip:thm1} of the main text, explicitly divided into two cases depending on whether the target channel is a unitary or replacement channel.

\subsection{Unitary channel distillation}

Within the context of channel-based resource theories, it is natural to regard some unitary channel $\U(\cdot) = U \cdot U^\dagger$ as the target of distillation protocols.  Notice that in this case the measure $F_\OO$ simplifies to
\begin{equation}\begin{aligned}
F_\OO(\U) = \max_{\M\in \OO}\, \min_{\psi_{RC}} \<\idc \otimes \U (\psi), \idc \otimes \M (\psi) \>
\end{aligned}\end{equation}
since $\idc \otimes \U(\psi)$ is a pure state. We also recall that, in cases where this distance might not be easy to compute, one can instead choose to employ the more straightforward fidelity measure which uses the corresponding (normalised) Choi matrices:
\begin{equation}\begin{aligned}
 \wt F_\OO(\U) \coloneqq  \max_{\M\in \OO} \, F \left(\wt J_\U, \wt J_\M \right) = \max_{\M\in \OO} \<\wt J_\U, \wt J_\M \>.
\end{aligned}\end{equation}
Indeed, this quantity is the figure of merit in many communication scenarios~~\cite{barnum_2000,kretschmann_2004,buscemi_2010-2}, and can be alternatively expressed as the fidelity averaged over all input states~\cite{horodecki_1999-1,gilchrist_2005}. 

As our first main result, we then establish a general bound on the error necessarily incurred in any transformation of a channel under free superchannels.

{
\renewcommand\thetheorem{1(a)}
\begin{boxxed}{red}
\begin{theorem}\label{thm:main-nogo}
If there exists a free superchannel $\TT \in \SS$ such that $F(\Theta(\E), \U) \geq 1-\ve$ for some resourceful unitary channel $\U$, then
\begin{equation}\begin{aligned}\label{eq:main-nogo-rob}
  \ve &\geq 1 - F_\OO(\U) \, \ROg(\E)
\end{aligned}\end{equation}
and
\begin{equation}\begin{aligned}\label{eq:main-nogo-weight}
  \ve &\geq [1 - F_\OO(\U)]\, W_\OO(\E).
\end{aligned}\end{equation}
\end{theorem}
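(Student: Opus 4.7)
The plan is to use the definitions of $\ROg$ and $W_\OO$ to express the input channel $\E$ as a linear (for the robustness) or convex (for the weight) combination involving a free channel $\M\in\OO$, to push that decomposition through the superchannel $\Theta$ by linearity, and then to use the fact that $\Theta(\M)$ must itself be free to translate the output-side fidelity bound into a statement about $F_\OO(\U)$.

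For the robustness bound~\eqref{eq:main-nogo-rob}, I would set $\lambda \coloneqq \ROg(\E)$ and fix an optimizer $\M\in\OO$ with $J_\E \le \lambda J_\M$. Because $\lambda J_\M - J_\E$ is positive semidefinite with partial trace $(\lambda-1)\id_A$, the renormalization $\M' \coloneqq (\lambda\M-\E)/(\lambda-1)$ defines a bona fide CPTP map, yielding $\E = \lambda\M - (\lambda-1)\M'$. Applying $\Theta$ and using $\Theta(\M)\in\OO$ (the defining property of $\SS$) then gives $\Theta(\E) = \lambda\,\Theta(\M) - (\lambda-1)\,\Theta(\M')$. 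Since $\U$ is unitary, $(\idc\otimes\U)(\psi)$ is a pure state for every pure $\psi_{RC}$, so the channel fidelity reduces to the Hilbert--Schmidt inner product. The crucial step is to note that $\langle(\idc\otimes\U)(\psi),(\idc\otimes\Theta(\M'))(\psi)\rangle \ge 0$, being the overlap of a pure state with a density operator; dropping this nonnegative term and invoking the hypothesis $F(\Theta(\E),\U)\ge 1-\ve$ yields the pointwise inequality $\lambda\,\langle(\idc\otimes\U)(\psi),(\idc\otimes\Theta(\M))(\psi)\rangle \ge 1-\ve$ valid for every $\psi$. Minimizing over $\psi$ identifies the left-hand side with $\lambda\,F(\U,\Theta(\M))$, which is upper bounded by $\lambda\,F_\OO(\U)$ since $\Theta(\M)\in\OO$; rearranging delivers~\eqref{eq:main-nogo-rob}.

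For the weight bound~\eqref{eq:main-nogo-weight}, I would set $\mu \coloneqq W_\OO(\E)$ and pick $\M\in\OO$ with $J_\E \ge \mu J_\M$. The analogous construction now produces an honest convex decomposition $\E = \mu\M + (1-\mu)\M''$ with $\M''$ CPTP, and linearity of $\Theta$ transports this to $\Theta(\E) = \mu\,\Theta(\M) + (1-\mu)\,\Theta(\M'')$. I would then evaluate at the particular pure state $\psi^{\ast}$ that minimizes $\langle(\idc\otimes\U)(\psi),(\idc\otimes\Theta(\M))(\psi)\rangle$, upper-bound the contribution from $\Theta(\M'')$ trivially by $1$, and combine with the hypothesis to obtain $1-\ve \le \mu\,F(\U,\Theta(\M)) + (1-\mu)$. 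Rearranging gives $\ve \ge \mu\bigl[1-F(\U,\Theta(\M))\bigr] \ge \mu\bigl[1-F_\OO(\U)\bigr]$, which is precisely~\eqref{eq:main-nogo-weight}.

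The main subtlety I foresee is getting the order of quantifiers right in the two cases: the robustness argument needs the inequality to hold \emph{for every} $\psi$ before minimization, so that the minimum is absorbed only into the dominant (positive) term; the weight argument instead needs the inequality at the \emph{specific} minimizer of $F(\U,\Theta(\M))$ so that the leftover $(1-\mu)$ contribution can be bounded by $1$. The degenerate boundary cases $\ROg(\E)=1$ or $W_\OO(\E)\in\{0,1\}$ (where the decompositions trivialize) should be checked separately, but in each case the claim either reduces to $\ve\ge 0$ or follows immediately from $F(\Theta(\E),\U)\le F_\OO(\U)$.
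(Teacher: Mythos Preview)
Your proof is correct and complete, including the handling of the degenerate cases. It does, however, take a genuinely different route from the paper's own argument.

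The paper proves each bound by invoking the monotonicity of $\ROg$ and $W_\OO$ under free superchannels together with an explicit dual witness. For the robustness bound, it first establishes a minimax formula $\ROg(\N)=\max_\psi \ROg^{\,\psi}(\N)$ (via Sion's theorem), fixes $\psi^\star$ as the minimizer in $F_\OO(\U)$, and then observes that $\frac{1}{F_\OO(\U)}\,\idc\otimes\U(\psi^\star)$ is feasible for the dual of $\ROg^{\,\psi^\star}$; chaining monotonicity $\ROg(\E)\ge\ROg(\Theta(\E))$ with this witness and with $\langle\idc\otimes\Theta(\E)(\psi^\star),\idc\otimes\U(\psi^\star)\rangle\ge 1-\ve$ gives the result. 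The weight bound is analogous, with $\frac{\id-\idc\otimes\U(\psi^\star)}{1-F_\OO(\U)}$ as the dual witness. In contrast, you bypass both the minimax lemma and the dual formulation entirely: you work with the primal decomposition $\E=\lambda\M-(\lambda-1)\M'$ (resp.\ $\E=\mu\M+(1-\mu)\M''$), push it through $\Theta$ by linearity, and exploit directly that $\Theta(\M)\in\OO$. Your argument is more elementary and self-contained; the paper's is more modular, in that the monotonicity lemma and the dual forms are stated separately and reused elsewhere (e.g.\ in the many-copy and state-based extensions). The underlying content is the same once unpacked---the monotonicity proof in the paper is precisely the positivity step you perform explicitly, and the dual-witness inequality is your ``drop the nonnegative term'' step seen from the other side of Lagrangian duality.
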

\end{boxxed}
\addtocounter{theorem}{-1}
}

Notice from the weight-based bound in Eq.~\eqref{eq:main-nogo-weight} that, when $\ve = 0$, the transformation $\E \to \U$ is impossible for any channel with $W_\OO(\E) > 0$. Recalling that $W_\OO(\E)$ is 0 if and only if $J_\E$ has no free Choi matrices $J_\M$ with $\M\in \OO$ in its support, we conclude that zero-error distillation is impossible whenever the given channel has any free channels in $\supp(J_\E)$. This recovers a no-go result of~Ref.~\cite{fang_2020} and extends insights from quantum state distillation~\cite{fang_2020,regula_2020}.

For clarity, we will divide the proof of Theorem~\ref{thm:main-nogo} into two parts and consider each bound separately. We begin with the robustness $\ROg$.

\begin{boxxed}{white}
\begin{proposition}\label{thm:nogo-rob-oneshot}
If there exists a free supermap $\TT \in \SS$ such that $\TT(\E) = \N$ for some channel $\N$ with $F(\N, \U) \geq 1-\ve$, then
\begin{equation}\begin{aligned}
  \ve &\geq 1 - \ROg(\E) \, F_\OO(\U).
\end{aligned}\end{equation}
\end{proposition}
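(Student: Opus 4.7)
The plan is to convert the robustness into a pseudo-mixture decomposition of $\E$ and push it through $\TT$ by linearity. By definition of $R \coloneqq \ROg(\E)$, there exists a free channel $\M_\star \in \OO$ with $J_\E \leq R\, J_{\M_\star}$, so that $R\,\M_\star - \E$ is completely positive and its Choi matrix has partial trace $(R-1)\id$. Normalizing, $\F \coloneqq (R\,\M_\star - \E)/(R-1)$ is then a bona fide channel (the edge case $R=1$ forces $\E = \M_\star \in \OO$, so that $\TT(\E) \in \OO$ and the claim is immediate). Rearranging, we obtain the identity
\[
\E \;=\; R\,\M_\star \;-\; (R-1)\,\F.
\]

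Next I would apply $\TT$ to both sides. Since every superchannel has the form $\TT(\,\cdot\,) = \mathcal{A} \circ (\idc \otimes \,\cdot\,) \circ \mathcal{B}$ for fixed channels $\mathcal{A},\mathcal{B}$, it is linear and sends CP maps to CP maps, so
\[
\TT(\E) \;=\; R\,\TT(\M_\star) \;-\; (R-1)\,\TT(\F),
\]
where $\TT(\M_\star) \in \OO$ by the defining property of $\SS$, and $\TT(\F)$ is a valid (hence CP) channel.

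For the final step I would exploit that $\U$ is unitary, so that for every pure input $\psi$ the state $\idc \otimes \U(\psi)$ is pure with some ket $\ket{\phi_\psi}$, and the worst-case fidelity collapses to an inner product:
\[
F\big(\idc \otimes \TT(\E)(\psi),\, \idc \otimes \U(\psi)\big) \;=\; \bra{\phi_\psi}\,\idc \otimes \TT(\E)(\psi)\,\ket{\phi_\psi}.
\]
Substituting the above decomposition and discarding the non-negative contribution coming from $\TT(\F)$ gives the pointwise bound $\bra{\phi_\psi} \idc \otimes \TT(\E)(\psi) \ket{\phi_\psi} \leq R\, \bra{\phi_\psi} \idc \otimes \TT(\M_\star)(\psi) \ket{\phi_\psi}$. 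Since this holds for every $\psi$, minimizing preserves the inequality and yields $F(\TT(\E), \U) \leq R\, F(\TT(\M_\star), \U) \leq R\, F_\OO(\U)$, where the last step uses $\TT(\M_\star) \in \OO$ and the definition of $F_\OO(\U)$ as a maximum over $\OO$. Combining with the hypothesis $F(\TT(\E), \U) \geq 1 - \ve$ and rearranging gives the claim.

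I do not anticipate any serious obstacle; this is the natural lift of the state-level robustness no-go argument to the channel setting. The one point that requires care is justifying the application of $\TT$ to the non-channel CP map $R\,\M_\star - \E$, which is transparent once $\TT$ is regarded as a linear map on the space of Hermitian-preserving maps (equivalently, on the space of Choi matrices), where the Stinespring-like realization of a superchannel makes complete positivity manifest.
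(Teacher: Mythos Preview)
Your proof is correct. It takes a genuinely different route from the paper's own argument, so a brief comparison is in order.

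You work on the \emph{primal} side: write the robustness as the pseudo-mixture $\E = R\,\M_\star - (R-1)\,\F$, push it through the superchannel by linearity, and then use the pointwise inequality $\langle\phi_\psi|\,\idc\otimes\TT(\E)(\psi)\,|\phi_\psi\rangle \leq R\,\langle\phi_\psi|\,\idc\otimes\TT(\M_\star)(\psi)\,|\phi_\psi\rangle$ together with $\TT(\M_\star)\in\OO$. The paper instead works on the \emph{dual} side: it invokes the minimax characterization of $\ROg$ (Lemma~\ref{lemma:minimax}), fixes the state $\psi^\star$ optimal for $F_\OO(\U)$, and exhibits $\frac{1}{F_\OO(\U)}\,\idc\otimes\U(\psi^\star)$ as a feasible witness in the dual of $\ROg^{\psi^\star}$, combined with monotonicity of $\ROg$ under free superchannels. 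Your approach is more elementary --- no minimax lemma, no convex duality --- and is in fact the same decomposition the paper itself deploys later for the probabilistic extension (Theorem~\ref{thm:main-nogo probabilistic}). The paper's dual approach, on the other hand, makes the resource-witness structure explicit and adapts more directly to smoothed or regularized variants of the robustness. Either way, the bound is the same.
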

\end{boxxed}
\begin{proof}
Recall from Lem.~\ref{lem:duals} that for any channel we have
\begin{equation}\begin{aligned}
  \ROg (\E) = \max_{\psi} \ROg^{\,\psi} (\E),
\end{aligned}\end{equation}
where 
\begin{equation}\begin{aligned}
  \ROg^{\,\psi} (\E) = \max \big\{ &\< X, \idc \otimes \E (\psi) \> \;\big|\;
   X \geq 0,\; \< X, \idc \otimes \M (\psi) \> \leq 1 \; \forall \M\in \OO \big\}.
\end{aligned}\end{equation}
Now, for the given target channel $\U: C \to C$, let $\psi^\star \in \DD(RC)$ with $R \cong C$ denote a state such that
\begin{equation}\begin{aligned}
   F_\OO(\U) = \max_{\M\in \OO} \< \idc \otimes \U (\psi^\star), \idc \otimes \M (\psi^\star) \>.
\end{aligned}\end{equation}
Notice now that $\frac{1}{F_\OO(\U)} \idc \otimes \U (\psi^\star)$ is a feasible witness to the dual formulation of the robustness $\ROg^{\,\psi^\star}$; specifically, we have that $\idc \otimes \U (\psi^\star) \geq 0$ and
\begin{equation}\begin{aligned}
\max_{\M\in \OO} \< \frac{\idc \otimes \U (\psi^\star)}{F_\OO(\U)}, \idc \otimes \M (\psi^\star) \> = 1
\end{aligned}\end{equation}
by definition of $\psi^\star$.
Using the monotonicity of the robustness under free superchannels, we then get
\begin{equation}\begin{aligned}
    \ROg(\E) &\geq \ROg(\TT(\E))\\
    &= \ROg(\N)\\
    &= \max_{\psi} \ROg^{\,\psi}(\N)\\
    &\geq \ROg^{\,\psi^\star}(\N)\\
    &\geq \< \idc \otimes \N (\psi^\star), \frac{\idc \otimes \U (\psi^\star)}{F_\OO(\U)} \>\\
    &\geq \frac{1-\ve}{F_\OO(\U)},
\end{aligned}\end{equation}
where in the last inequality we used the fact that
\begin{equation}\begin{aligned}\label{eq:unitary_nogo1}
  1-\ve &\leq  F(\N, \U) \\
  &\leq F(\idc \otimes \N (\psi^\star), \idc \otimes \U (\psi^\star))\\
  &= \< \idc \otimes \N (\psi^\star), \idc \otimes \U (\psi^\star) \>
\end{aligned}\end{equation}
where the first line is by assumption, second by definition of $F(\N,\U)$, and third since $\idc \otimes \U (\psi^\star) $ is rank one. 
\end{proof}

\begin{boxxed}{white}
\begin{proposition}\label{thm:nogo-weight-oneshot}
If there exists a free supermap $\TT \in \SS$ such that $\TT(\E) = \N$ for some channel $\N$ with $F(\N, \U) \geq 1-\ve$, then
\begin{equation}\begin{aligned}
  \ve &\geq (1 - F_\OO(\U) )\, W_\OO(\E).
\end{aligned}\end{equation}
\end{proposition}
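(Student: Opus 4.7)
The plan is to exploit the convex decomposition of the input channel that the resource weight induces, push it through $\TT$ by linearity of superchannels, and then test it against the same ``worst-case'' state $\psi^\star$ used in the proof of Prop.~\ref{thm:nogo-rob-oneshot}. By the definition of $W_\OO(\E)$, there exists a free channel $\M^{\circ} \in \OO$ with $J_{\E} \geq W_\OO(\E)\, J_{\M^\circ}$. Writing $\lambda := W_\OO(\E)$, the CP map $\E - \lambda \M^\circ$ has partial trace $\Tr_B(J_{\E} - \lambda J_{\M^\circ}) = (1-\lambda)\, \id_A$, so $\G := (1-\lambda)^{-1}(\E - \lambda \M^\circ)$ is a bona fide quantum channel, giving the decomposition $\E = \lambda \M^\circ + (1-\lambda) \G$. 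Applying the superchannel then yields $\N = \TT(\E) = \lambda \M' + (1-\lambda) \G'$, where $\M' := \TT(\M^\circ) \in \OO$ by the defining property of $\SS$ and $\G' := \TT(\G)$ is a valid channel.

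Next, I would import the test state $\psi^\star$ from Prop.~\ref{thm:nogo-rob-oneshot}, namely the minimizer in the minimax expression $F_\OO(\U) = \min_\psi \max_{\M \in \OO} \langle \idc \otimes \U(\psi), \idc \otimes \M(\psi)\rangle$ guaranteed by Lem.~\ref{lemma:minimax}. Its key property is that for any specific free channel --- in particular our $\M'$ --- one has $\langle \idc \otimes \U(\psi^\star), \idc \otimes \M'(\psi^\star)\rangle \leq F_\OO(\U)$. On the other hand, the hypothesis $F(\N,\U) \geq 1 - \ve$ together with the purity of $\idc \otimes \U(\psi^\star)$, which reduces the fidelity to an inner product exactly as in~\eqref{eq:unitary_nogo1}, gives $\langle \idc \otimes \U(\psi^\star), \idc \otimes \N(\psi^\star)\rangle \geq 1 - \ve$.

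Combining everything by substituting the decomposition of $\N$ and using $\langle \idc \otimes \U(\psi^\star), \idc \otimes \G'(\psi^\star)\rangle \leq 1$ (both are density matrices on the relevant space):
\begin{equation*}
1 - \ve \;\leq\; \lambda\,\langle \idc \otimes \U(\psi^\star), \idc \otimes \M'(\psi^\star)\rangle + (1-\lambda) \;\leq\; \lambda F_\OO(\U) + (1 - \lambda) \;=\; 1 - W_\OO(\E)\,[1 - F_\OO(\U)],
\end{equation*}
which rearranges to the claim. The only step deserving genuine care is checking that $\G$ is CPTP: complete positivity follows from $J_\E \geq \lambda J_{\M^\circ} \geq 0$, while trace preservation follows because both $\E$ and $\M^\circ$ are CPTP, so the partial traces of their Choi matrices agree up to the scalar $\lambda$. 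Everything else is a direct parallel of the robustness argument, with the role of the dual witness $\idc \otimes \U(\psi^\star)/F_\OO(\U)$ now played by the primal weight decomposition of the input channel.
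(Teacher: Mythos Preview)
Your argument is correct. The only unhandled edge case is $\lambda = W_\OO(\E) = 1$, where the definition of $\G$ breaks down; but then $\E = \M^\circ \in \OO$ and the bound follows immediately from $\N = \TT(\E) \in \OO$, so this is not a genuine gap.

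Your route is, however, genuinely different from the paper's. The paper works on the \emph{dual} side throughout: it invokes the reverse monotonicity $W_\OO(\E) \leq W_\OO(\N)$, passes to the state-dependent quantity $W_\OO^{\psi^\star}(\N)$ via Lemma~\ref{lemma:minimax}, and then bounds the latter using the dual witness $\tfrac{1}{1-F_\OO(\U)}\bigl(\id - \idc\otimes\U(\psi^\star)\bigr)$, which is feasible precisely because $\psi^\star$ is the minimizer in $F_\OO(\U)$. You instead work on the \emph{primal} side: you realize the optimal weight decomposition $\E = \lambda \M^\circ + (1-\lambda)\G$ explicitly, push it through $\TT$ by linearity, and then evaluate against $\idc\otimes\U(\psi^\star)$ directly. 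Both arguments rely on the same test state $\psi^\star$ and the same key inequality $\langle \idc\otimes\U(\psi^\star), \idc\otimes\M'(\psi^\star)\rangle \leq F_\OO(\U)$, but yours is more self-contained --- it does not need the abstract monotonicity lemma or the dual form of $W_\OO$ --- whereas the paper's approach keeps a tighter structural parallel with the robustness proof (both use dual witnesses) and generalizes more transparently to the probabilistic setting of Thm.~\ref{thm:main-nogo probabilistic}.
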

\end{boxxed}

\begin{proof}
Using Lem.~\ref{lem:duals} we have once again that
\begin{equation}\begin{aligned}
  W_\OO(\E) = \min_{\psi} W^{\psi}_\OO (\E),
\end{aligned}\end{equation}
where
\begin{equation}\begin{aligned}
  W^{\psi}_\OO (\E)  =& \min \big\{ \< X, \idc \otimes \E(\psi) \> \;\big|\;
  X \geq 0,\; \< X, \idc \otimes \M(\psi) \> \geq 1 \; \forall \M\in \OO \big\}.
\end{aligned}\end{equation}
In a way similar to the proof of Prop.~\ref{thm:nogo-rob-oneshot}, we let $\psi^\star \in \DD(RC)$ be a state achieving the minimum for $F_\OO(\U)$. We then notice that $\id_{RC} -  \idc \otimes \U(\psi^\star) \geq 0$ and that for each $\M\in \OO$ we have
\begin{equation}\begin{aligned}
  &\< \idc \otimes \M(\psi^\star), \id - \idc \otimes \U(\psi^\star) \> \\
  &= 1 - \< \idc \otimes \M(\psi^\star), \idc \otimes \U(\psi^\star) \>\\
   &\geq 1 - \max_{\M\in \OO}  \< \idc \otimes \M(\psi^\star), \idc \otimes \U(\psi^\star) \>\\
   &= 1- F_\OO(\U),
\end{aligned}\end{equation}
which means that $\frac{1}{1-F_\OO(\U)} \left( \id - \idc \otimes \U(\psi^\star)\right)$ is a valid feasible dual solution for $W_\OO^\psi$.
Using the reverse monotonicity of $W_\OO(\Theta)$, we then get
\begin{equation}\begin{aligned}
  W_\OO(\E) &\leq W_\OO(\N)\\
  &\leq W^{\psi^\star}_\OO (\N)\\
  &\leq \< \idc \otimes \N(\psi^\star), \frac{\id - \idc \otimes \U(\psi^\star)}{1 - F_\OO(\U)} \>,
\end{aligned}\end{equation}
and using Eq.~\eqref{eq:unitary_nogo1} we conclude that
\begin{equation}\begin{aligned}
  W_\OO(\E) \leq \frac{1 - (1-\ve)}{1 - F_\OO(\U)} =  \frac{\ve}{1 - F_\OO(\U)}
\end{aligned}\end{equation}
which is precisely the statement of the Proposition.
\end{proof}


\subsection{State-based resources}

Although the idea of distilling noisy resources into pure ones makes sense in many physical settings, some resource theories are instead concerned with extracting state-based resources. Here, we will assume that there is an underlying set of free states $\FF$, and the operations $\OO$ are free operations in this theory. In such cases, the target channel in distillation can be chosen as the replacement channel $\R_\phi$, which substitutes any input with a target state: $\R_\phi (\cdot) = \Tr(\cdot) \phi$. A special case of such channels are preparation channels $\P_\phi$, which have trivial input and simply prepare a single copy of a chosen resourceful pure state $\phi$. To characterise the resourcefulness of the target channel, we consider the overlap
\begin{equation}\begin{aligned}
  F_\FF(\phi) = \max_{\sigma \in \FF} \<\phi, \sigma\>.
\end{aligned}\end{equation}
We then obtain an analogous bound for all transformations into replacement channels.
{
\renewcommand\thetheorem{1(b)}
\begin{boxxed}{red}
\begin{theorem}\label{thm:main-nogo-statebased}
If there exists a free superchannel $\TT \in \SS$ such that $F(\Theta(\E), \R_\phi) \geq 1-\ve$ for some resourceful pure state $\phi$, then
\begin{align}\label{eq:main-nogo-rob-statebased}
  \ve &\geq 1 - F_\FF(\phi) \, \ROg(\E),\\
  \ve &\geq [1 - F_\FF(\phi) ]\, W_\OO(\E).\label{eq:main-nogo-weight-statebased}
\end{align}
\end{theorem}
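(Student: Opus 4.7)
The plan is to adapt the robustness/weight decomposition strategy underlying Theorem~\ref{thm:main-nogo}, pushed through the linear superchannel $\Theta$, with one additional reduction that exploits the replacement structure of the target channel $\R_\phi$ to convert channel-level fidelities into state-level overlaps with $\phi$. First, I would evaluate the worst-case channel fidelity $F(\Theta(\E),\R_\phi)$ on the specific pure product input $\ket{0}\!\bra{0}_R\otimes\ket{\tau}\!\bra{\tau}_C$ with $\tau$ pure on $C$. Because $\R_\phi$ is a replacement channel, $\idc\otimes\R_\phi$ sends this input to the pure product state $\ket{0}\!\bra{0}\otimes\phi$, so multiplicativity of the fidelity on products collapses the output fidelity to the overlap $\bra{\phi}\Theta(\E)(\tau)\ket{\phi}$. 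Since the channel fidelity is a minimum over pure inputs, the hypothesis therefore pins down $\bra{\phi}\Theta(\E)(\tau)\ket{\phi}\ge 1-\ve$ for every pure $\tau$, and by linearity for every state $\tau$ on $C$.

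Next, I would carry out the two decomposition-and-bound arguments in parallel. For \eqref{eq:main-nogo-rob-statebased}, set $\lambda=\ROg(\E)$; the defining Choi inequality $\lambda J_\M\ge J_\E$, together with the matching trace conditions, yields $\E=\lambda\M-(\lambda-1)X$ with $\M\in\OO$ and $X$ a CPTP map. Applying the linear superchannel $\Theta$ and evaluating $\bra{\phi}\,\cdot\,(\tau)\ket{\phi}$ at a chosen free state $\tau\in\FF$, I would combine the step-one lower bound with the non-negativity of $\bra{\phi}\Theta(X)(\tau)\ket{\phi}$ (since $\Theta(X)(\tau)$ is a state) to obtain $\bra{\phi}\Theta(\M)(\tau)\ket{\phi}\ge(1-\ve)/\lambda$. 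Since $\Theta(\M)\in\OO$ and free channels preserve free states, $\Theta(\M)(\tau)\in\FF$, so this overlap is bounded above by $F_\FF(\phi)$, and rearranging delivers \eqref{eq:main-nogo-rob-statebased}. For \eqref{eq:main-nogo-weight-statebased} I would use the analogous weight decomposition $\E=\lambda\M+(1-\lambda)X$ with $\lambda=W_\OO(\E)$, $\M\in\OO$, and $X$ CPTP, apply $\Theta$, and use only the trivial bound $\bra{\phi}\Theta(X)(\tau)\ket{\phi}\le 1$ on the residual term; this produces $1-\ve\le\lambda F_\FF(\phi)+(1-\lambda)$, which rearranges to the claim.

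The only step that is not a direct transcription of the Theorem~\ref{thm:main-nogo} argument is the passage $\bra{\phi}\Theta(\M)(\tau)\ket{\phi}\le F_\FF(\phi)$, which requires $\Theta(\M)(\tau)\in\FF$. I expect this to be the main (in fact the only) subtlety: it invokes the standard compatibility assumption in any state-based resource theory, namely that every free channel $\M\in\OO$ is resource non-generating on states and thus maps $\FF$ into $\FF$. This is precisely what lets one use the sharper quantity $F_\FF(\phi)$ in place of $F_\OO(\R_\phi)$, and it is what makes Theorem~\ref{thm:main-nogo-statebased} strictly stronger than what one would obtain by simply applying Theorem~\ref{thm:main-nogo} with $\U=\R_\phi$.
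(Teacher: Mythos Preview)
Your argument is correct. It differs from the paper's in structure: where you write the robustness (resp.\ weight) as a \emph{primal} decomposition $\E=\lambda\M-(\lambda-1)X$ (resp.\ $\E=\lambda\M+(1-\lambda)X$), push it through $\Theta$ by linearity, and evaluate the overlap $\bra{\phi}\,\cdot\,(\tau)\ket{\phi}$ directly, the paper instead first invokes monotonicity of $\ROg$ and $W_\OO$ under $\Theta$, then uses the minimax representation of these measures (Lemma~\ref{lemma:minimax}) to specialize to the input $\rho\otimes\tau$ with $\tau\in\FF$, reducing to the \emph{state} measures $\RFg(\N(\tau))$, $W_\FF(\N(\tau))$ of the output $\N=\Theta(\E)$, and finally bounds these via explicit dual witnesses $\phi/F_\FF(\phi)$ and $(\id-\phi)/(1-F_\FF(\phi))$. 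Both routes need the same structural input you flag --- that free channels in $\OO$ map $\FF$ into $\FF$ --- and both use the same product-input trick to collapse the channel fidelity to $\<\phi,\N(\tau)\>$. Your version is more elementary (no convex duality, no Sion minimax) and is in fact closer in spirit to the paper's later proof of the probabilistic Theorem~\ref{thm:main-nogo probabilistic}; the paper's version has the mild conceptual advantage of isolating monotonicity of the measures as an explicit intermediate step.
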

\end{boxxed}
\addtocounter{theorem}{-1}
}
Once again, the weight bound~\eqref{eq:main-nogo-weight-statebased} gives a no-go result: no resourceful pure state replacement channel can be distilled with $\ve=0$ from a channel $\E$ such that $W_\OO(\E) > 0$, that is, such that $\supp(J_\E)$ contains any free channels.

As a special case, the results apply also to the manipulation of states themselves, that is, transformations of resourceful quantum states under free transformations in the form of channels $\OO$. Thus, we obtain:

\begin{boxxed}{red}
\begin{corollary}\label{thm:main-nogo-states}
If there exists a free channel $\M \in \OO$ such that $F(\M(\rho), \phi) \geq 1-\ve$ for some resourceful pure state $\phi$, then
\begin{align}\ve &\geq 1 - F_\FF(\phi) \, \RFg(\rho),\label{eq:main-nogo-rob-states}\\
\label{eq:main-nogo-weight-states}
  \ve &\geq [1- F_\FF(\phi)] \, W_\FF(\rho).
\end{align}
\end{corollary}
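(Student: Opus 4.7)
The plan is to prove the corollary directly rather than routing through the superchannel machinery of Theorem \ref{thm:main-nogo-statebased}, since the argument collapses to a short computation when the channel transformation is simply left-composition by the free channel $\M$. Both inequalities will follow from the same template: start from the operator inequality defining the resource measure of $\rho$, push it through the CP map $\M$, and pair the result with the pure target $\phi$ using $F(\tau,\phi)=\langle\phi,\tau\rangle$.

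For \eqref{eq:main-nogo-rob-states}, I pick a free state $\sigma\in\FF$ attaining the minimum in $\RFg(\rho)$, so that $\rho\leq \RFg(\rho)\,\sigma$. Complete positivity of $\M$ preserves this operator inequality, yielding $\M(\rho)\leq \RFg(\rho)\,\M(\sigma)$. Taking the Hilbert--Schmidt inner product with the positive operator $\phi$ gives $F(\M(\rho),\phi)=\langle\phi,\M(\rho)\rangle \leq \RFg(\rho)\,\langle\phi,\M(\sigma)\rangle \leq \RFg(\rho)\,F_\FF(\phi)$, where the last step uses $\M(\sigma)\in\FF$ since $\M\in\OO$ is resource-preserving. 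Combining with the hypothesis $F(\M(\rho),\phi)\geq 1-\ve$ yields the claimed bound.

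For \eqref{eq:main-nogo-weight-states}, I use the dual convex decomposition $\rho=W_\FF(\rho)\,\sigma+(1-W_\FF(\rho))\,\omega$ for some $\sigma\in\FF$ and state $\omega$. Applying $\M$ and pairing with $\phi$ gives $F(\M(\rho),\phi)\leq W_\FF(\rho)\,\langle\phi,\M(\sigma)\rangle+(1-W_\FF(\rho))\,\langle\phi,\M(\omega)\rangle \leq W_\FF(\rho)\,F_\FF(\phi)+(1-W_\FF(\rho))$, where the second term bound uses $\langle\phi,\M(\omega)\rangle\leq \Tr\M(\omega)=1$. Rearranging against $F(\M(\rho),\phi)\geq 1-\ve$ yields $\ve\geq W_\FF(\rho)\,[1-F_\FF(\phi)]$.

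No step is a genuine obstacle: the only items to verify are that the optima in the definitions of $\RFg$ and $W_\FF$ are attained (which follows from compactness of $\FF$ in finite dimension, together with the convention that the measures may take the value $0$ in the weight case) and that $\M(\FF)\subseteq\FF$ (immediate from $\M\in\OO$). Deriving the corollary as a formal specialization of Theorem \ref{thm:main-nogo-statebased} is also possible --- one encodes $\rho$ as the preparation channel $\P_\rho$, checks that $\TT: \E\mapsto\M\circ\E$ belongs to $\SS$, and uses the identifications $\RFg(\rho)=\ROg(\P_\rho)$, $W_\FF(\rho)=W_\OO(\P_\rho)$, and $F_\FF(\phi)=F_\OO(\P_\phi)$ --- but the direct argument above is more transparent and avoids the bookkeeping of trivial input systems.
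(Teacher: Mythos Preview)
Your proof is correct. The paper derives the state bounds by specializing the channel-level Propositions~\ref{thm:nogo-states-weight} and~\ref{thm:nogo-states-rob}: it invokes monotonicity of $\RFg$ (resp.\ reverse monotonicity of $W_\FF$) under $\M\in\OO$ and then evaluates the dual witnesses $\phi/F_\FF(\phi)$ and $(\id-\phi)/(1-F_\FF(\phi))$ on $\M(\rho)$. Your direct primal argument --- pushing the defining operator inequality (or convex decomposition) through the CP map $\M$ and pairing with $\phi$ --- is exactly what one gets by unrolling that monotonicity-plus-dual-witness step, so the two are the same computation viewed from primal versus dual side. Your version is a bit more self-contained since it bypasses the superchannel bookkeeping and the dual formulation, at the cost of not exhibiting the result as a formal corollary of the channel theorem.
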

\end{boxxed}
Here we note that an analogous robustness bound for states~\eqref{eq:main-nogo-rob-states} previously appeared in Ref.~\cite{regula_2020}.

Another approach to no-go results in the distillation of resources from quantum states was studied in Ref.~\cite{fang_2020}. Our new weight-based bound~\eqref{eq:main-nogo-weight-states} strictly improves on that result. Let us explicitly compare our result with the bound of Ref.~\cite{fang_2020}, which applies only to full-rank input states $\rho$, and is given by $ {\ve \geq [1- F_\FF(\phi)] \, \lambda_{\min}(\rho)}$, 
where $\lambda_{\min}$ denotes the smallest eigenvalue of $\rho$. First, our bounds require no assumption about the rank of the input state $\rho$, thus extending the applicability of the fundamental restrictions on quantum resource distillation. More importantly, our approach replaces the dependence on the eigenvalues of the input state with a bound which explicitly takes into consideration the resources contained in $\rho$, which provides more accurate restrictions. Further, for a full-rank state one can notice that $\lambda_{\min}$ can be written as $\min_{\omega \in \DD} \max \lset \lambda \bar \rho \geq \lambda \omega \rset$. From this it follows that $W_\FF(\rho) \geq \lambda_{\min}(\rho)$ in any resource theory (with the inequality typically strict), and so the weight-based bound in Eq.~\eqref{eq:main-nogo-weight-states} is tighter than the result of~Ref.~\cite{fang_2020}. Indeed, in Supplementary Note~\ref{sec:main-app} we will see this improvement to be significant.

As before, we split the proof for clarity.
\begin{boxxed}{white}
\begin{proposition}\label{thm:nogo-states-weight}
Consider the replacement channel $\R_\phi : C \to D$. If there exists a free supermap $\TT$ such that $\TT(\E) = \N$ for some channel $\N : C \to D$ with $F(\N, \R_\phi) \geq 1-\ve$, then
\begin{equation}\begin{aligned}
  \ve \geq \left(1- F_\FF(\phi) \right) W_\OO(\E).
\end{aligned}\end{equation}
When the input is a preparation channel $\P_\rho : \CC \to B$ and the target is the preparation channel $\P_\phi : \CC \to D$, the problem reduces to manipulating quantum states, and we have
\begin{equation}\begin{aligned}
  \ve \geq \left(1- F_\FF(\phi) \right) W_\FF(\rho).
\end{aligned}\end{equation}
\end{proposition}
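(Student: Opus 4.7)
The plan is to mirror the proof of Prop~\ref{thm:nogo-weight-oneshot} for the unitary target, exploiting the special structure of the replacement channel $\R_\phi$. The skeleton will be the chain $W_\OO(\E) \leq W_\OO(\N) \leq W^{\psi}_\OO(\N) \leq \< X, \idc \otimes \N(\psi) \>$, where the first inequality is the reverse monotonicity of $W_\OO$ under $\TT \in \SS$, the second uses the minimax identity of Lemma~\ref{lemma:minimax} for any chosen input $\psi$, and the third is dual feasibility of an explicit witness $X$. The whole task then reduces to choosing $\psi$ and $X$ so that the final expression is at most $\ve / (1 - F_\FF(\phi))$, after which rearrangement yields the claim.

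The key move, made possible by the fact that $\R_\phi$ discards its input, is to take a product input $\psi = \ketbra{0}{0}_R \otimes \tau_C$, where $\tau_C \in \FF$ is any free state on the input system $C$. This choice pays off twice. First, $\idc \otimes \R_\phi(\psi) = \ketbra{0}{0}_R \otimes \phi$ is pure, so the definition of the worst-case channel fidelity immediately gives $\<\phi, \N(\tau_C)\> = F(\N(\tau_C),\phi) \geq F(\N, \R_\phi) \geq 1 - \ve$. Second, since free channels preserve $\FF$, one has $\M(\tau_C) \in \FF$ for every $\M \in \OO$, so channel-level witness constraints reduce to state-level overlap bounds governed by $F_\FF(\phi)$.

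Guided by this, the natural witness is $X = \ketbra{0}{0}_R \otimes (\id_D - \phi)/(1 - F_\FF(\phi))$: positivity is immediate since $\phi$ is a rank-one projector, and the constraint $\< X, \idc \otimes \M(\psi) \> \geq 1$ reduces to $\<\phi, \M(\tau_C) \> \leq F_\FF(\phi)$, which holds because $\M(\tau_C) \in \FF$. Evaluating $X$ on $\idc \otimes \N(\psi)$ gives $(1 - \<\phi, \N(\tau_C)\>)/(1 - F_\FF(\phi)) \leq \ve / (1 - F_\FF(\phi))$, completing the chain. The preparation-channel statement $\ve \geq (1 - F_\FF(\phi)) W_\FF(\rho)$ is a direct corollary: with trivial input the same argument collapses to a state-level calculation using monotonicity $W_\FF(\rho) \leq W_\FF(\M(\rho))$ and the state witness $(\id - \phi)/(1 - F_\FF(\phi))$. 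The main conceptual obstacle is recognising that a product input paired with a product witness already suffices to force the bound to depend only on the state-level fidelity $F_\FF(\phi)$, bypassing the more awkward channel-level quantity $F_\OO(\R_\phi)$; once this reduction is identified, everything else is routine dualization.
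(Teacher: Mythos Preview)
Your proof is correct and follows essentially the same route as the paper: reverse monotonicity of $W_\OO$, a product input with a free state $\tau \in \FF$ on $C$ so that the channel-level dual constraint collapses to $\<\phi,\M(\tau)\>\leq F_\FF(\phi)$, and the witness $(\id-\phi)/(1-F_\FF(\phi))$. The only nuance is that the paper explicitly justifies plugging in a possibly mixed input (since $\tau\in\FF$ need not be pure) by noting the concavity of $W^\psi_\OO$ in $\psi$, whereas you cite Lemma~\ref{lemma:minimax} directly; the needed inequality $W_\OO(\N)\leq W^\rho_\OO(\N)$ for mixed $\rho$ is in any case immediate from the primal definition, so this is not a genuine gap.
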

\end{boxxed}
\begin{proof}
Noticing that, for a fixed $\E$, the function $W^{\psi}_\OO (\E)$ that we considered in Prop.~\ref{thm:nogo-weight-oneshot} is concave in $\psi$, we can relax the optimisation to write
\begin{equation}\begin{aligned}
  W_\OO(\E) = \min_{\rho_{RA}} W^{\rho}_\OO (\E)
\end{aligned}\end{equation}
since the minimum will be achieved on a pure state $\psi_{RA}$ anyway. Choosing $\rho^\star = \rho \otimes \tau$ for arbitrary $\rho \in \DD(R)$ and $\tau \in \FF$, we use the reverse monotonicity of $W_\OO$ to obtain
\begin{equation}\begin{aligned}
	W_\OO(\E) &\leq W_\OO(\N)\\
	&\leq W^{\rho^\star}_\OO(\N)\\
	&= \max \lset \lambda \bar \idc \otimes \N(\rho^\star) \geq \idc \otimes \M (\rho^\star) ,\; \M\in \OO \rset\\
	&= \max \lset \lambda \bar \N(\tau) \geq \M (\tau),\; \M\in \OO  \rset\\
	&\leq \max \lset \lambda \bar \N(\tau) \geq \sigma,\; \sigma \in \FF \rset\\
	&= W_\FF(\N(\tau))
\end{aligned}\end{equation}
where we used that $\M(\tau) \in \FF$. Notice now that $\id - \phi \geq 0 \text{ and } \< \id - \phi, \sigma \> \geq 1 - F_\FF(\phi)  \; \forall \sigma \in \FF$, which means that $\frac{\id - \phi}{1 - F_\FF(\phi)}$ is a valid feasible dual solution for $W_\FF(\N(\tau))$. Using the fact that
\begin{equation}\begin{aligned}
	1-\ve &\leq F(\N, \R_\phi)\\
	&\leq F(\idc \otimes \N(\rho^\star), \idc \otimes \R_\phi(\rho^\star))\\
	&= F(\N(\tau), \phi)\\
	&= \< \N(\tau), \phi \>,
\end{aligned}\end{equation}
we then have
\begin{equation}\begin{aligned}
	W_\FF(\N(\tau)) &\leq \< \N(\tau), \frac{\id - \phi}{1 - F_\FF(\phi)} \>\\
	&\leq \frac{\ve}{1-F_\FF(\phi)}
\end{aligned}\end{equation}
which concludes the proof.

The above reduces to the case of quantum states when the input and target are preparation channels, since this constrains any output of the transformation to also be a preparation channel.
\end{proof}

\begin{boxxed}{white}
\begin{proposition}\label{thm:nogo-states-rob}
If there exists a free supermap $\TT$ such that $\TT(\E) = \N$ for some channel $\N: C \to D$ with $F(\N, \R_\phi) \geq 1-\ve$, then
\begin{equation}\begin{aligned}
  \ve \geq 1- F_\FF(\phi)\, \ROg(\E).
\end{aligned}\end{equation}
If $\E = \P_\rho$ and the system $C$ is trivial, we have that
\begin{equation}\begin{aligned}
  \ve \geq 1- F_\FF(\phi)\, \RFg(\rho).
\end{aligned}\end{equation}
\end{proposition}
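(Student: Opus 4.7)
The plan is to adapt the proof of Prop.~\ref{thm:nogo-rob-oneshot} by constructing an explicit dual witness for $\ROg(\N)$ tailored to the replacement structure of $\R_\phi$, then combining weak duality with the monotonicity $\ROg(\E) \geq \ROg(\TT(\E)) = \ROg(\N)$ from Appendix~\ref{app:properties}. A small simplification compared with the unitary case is that one need not invoke the minimax reformulation over pure input states $\psi^\star$: the Choi-level dual of Eq.~\eqref{eq:dual_measures} suffices, because the replacement channel $\R_\phi$ already decouples input and output.

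Concretely, I would first pick $\omega \in \FF(C)$ attaining $F_\FF(\phi) = \langle \phi, \omega\rangle$ and propose the ansatz
\begin{equation}
X^\star \coloneqq \frac{1}{F_\FF(\phi)}\,\omega^T \otimes \phi \;\in\; \LL(R_C \otimes D),
\end{equation}
which is manifestly positive. Feasibility in the dual SDP of Eq.~\eqref{eq:dual_measures} then reduces to checking $\langle X^\star, J_\M\rangle \leq 1$ for every $\M \in \OO$, and this follows from the Choi--Jamio{\l}kowski identity $\langle \omega^T \otimes \phi, J_\M\rangle = \langle \phi, \M(\omega)\rangle$ together with $\M(\omega)\in\FF$ (the standing assumption that free channels preserve free states) and the defining property of $F_\FF(\phi)$. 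To bound $\langle X^\star, J_\N\rangle$ from below I would apply the worst-case channel fidelity $F(\N, \R_\phi) \geq 1-\ve$ to the trivial-ancilla input $\omega$, yielding $F(\N(\omega), \phi) \geq 1-\ve$; since $\phi$ is pure this is $\langle \phi, \N(\omega)\rangle \geq 1-\ve$. Weak duality together with the monotonicity of $\ROg$ under $\SS$ then combine to give
\begin{equation}
\ROg(\E) \;\geq\; \ROg(\N) \;\geq\; \langle X^\star, J_\N\rangle \;=\; \frac{\langle \phi, \N(\omega)\rangle}{F_\FF(\phi)} \;\geq\; \frac{1-\ve}{F_\FF(\phi)},
\end{equation}
which rearranges to $\ve \geq 1 - F_\FF(\phi)\,\ROg(\E)$.

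The preparation-channel variant follows by specialisation: for trivial $C$, both $\E = \P_\rho$ and $\TT(\E)$ collapse to state-preparation channels and $\TT$ acts at the state level as some free channel $\M \in \OO$, while Lem.~\ref{lem:prep} provides $\ROg(\P_\eta) = \RFg(\eta)$. The same argument with $X^\star = \phi / F_\FF(\phi)$, reinterpreted as a state-space dual witness for $\RFg(\M(\rho))$, yields $\ve \geq 1 - F_\FF(\phi)\,\RFg(\rho)$ by an identical computation.

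I do not anticipate any real obstacle: the proof is essentially a weak-duality check, and all the ingredients — the Choi-level dual of the robustness, its monotonicity under free superchannels, and the compatibility $\M(\FF)\subseteq\FF$ — are already in place. The only genuine content is recognising the correct ansatz $X^\star \propto \omega^T \otimes \phi$; in contrast to Prop.~\ref{thm:nogo-rob-oneshot}, whose witness is built from the target channel evaluated on a minimising input, here the witness carries a \emph{free} state $\omega$ on the input side and the target $\phi$ on the output side, and this clean decoupling is what allows both the channel-level bound and its state-level specialisation to be obtained by a single construction.
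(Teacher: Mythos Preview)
Your proof is correct and follows essentially the same route as the paper: both construct the dual witness $\phi/F_\FF(\phi)$ on the output side paired with a free input state, and both conclude via monotonicity of $\ROg$ under $\SS$. The paper routes this through the input-state quantity $\ROg^{\rho^\star}(\N) \geq \RFg(\N(\tau))$ with $\rho^\star = \rho \otimes \tau$ (invoking convexity of $\ROg^\psi$ in $\psi$ to allow the mixed input), whereas you work directly at the Choi level with $X^\star \propto \omega^T \otimes \phi$ --- a mild but genuine streamlining. One small slip: the clause ``$\omega \in \FF(C)$ attaining $F_\FF(\phi) = \langle \phi, \omega\rangle$'' is ill-typed ($\omega$ lives in $C$, $\phi$ in $D$) and in any case unused --- your feasibility check only needs $\M(\omega) \in \FF(D)$, so any $\omega \in \FF(C)$ works.
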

\end{boxxed}
\begin{proof}
Analogously as in Prop.~\ref{thm:nogo-states-weight}, we choose $\rho^\star = \rho \otimes \tau$ for some $\tau \in \FF$ to get
\begin{align*}
  \ROg(\E) &\geq \ROg(\N)\\
  &\geq \ROg^{\rho^\star} (\N)\\
  &\geq \RFg(\N(\tau)) \tag{\stepcounter{equation}\theequation}\\
  &\geq  \< \N(\tau), \frac{\phi}{F_\FF(\phi)} \>\\
  &\geq  \frac{1 - \ve}{F_\FF(\phi)},
\end{align*}
where in the first line we used the monotonicity of $\ROg$, in the second line we used that $\ROg^\psi$ is convex in $\psi$ so we can optimise over mixed states, and in the fourth line we used that $\phi \geq 0 \text{ and } \< \phi, \sigma \> \leq F_\FF(\phi)  \; \forall \sigma \in \FF$ which means that $\frac{\phi}{F_\FF(\phi)}$ is a valid feasible dual solution for the robustness $\RFg(\omega)$.
\end{proof}


\section{Many-copy transformations and protocols beyond distillation}\label{sec:many-copy}

\subsection{Bounds for many-copy manipulation}
Recall that the most general physically realisable manipulation protocols involving multiple quantum channels are dubbed \textit{quantum processes}~\cite{chiribella_2008-1,chiribella_2009,gutoski_2007,chiribella_2013,oreshkov_2012,araujo_2017}. We will be interested in processes which transform multiple uses of a quantum channel into one output map --- we thus understand a quantum process $\CT$ as any transformation such that $\CT(\N_1, \ldots, \N_n) \in \CPTP$ for any $\N_1, \ldots, \N_n \in \CPTP$, and we take the set of free quantum processes as
\begin{equation}\begin{aligned}
\SS_{(n)} \coloneqq \lset \CT \bar \CT(\M_1,\ldots,\M_n) \in \OO \;\, \forall \M_1, \ldots, \M_n \in \OO \rset.
\label{eq:free comb}
\end{aligned}\end{equation}

Depending on the given resource theory, different ways to manipulate multiple channels might be of interest. For instance, when the theory is closed under tensor product, i.e. $\M, \M' \in \OO \Rightarrow \M\otimes \M' \in \OO$, then any free protocol which manipulates $n$ copies of a channel in parallel as $\TT(\E^{\otimes n})$ is a free quantum process. Similarly, when the theory is closed under composition, i.e. $\M, \M' \in \OO \Rightarrow \M\circ \M' \in \OO$, then any sequential protocol of the form $\TT_n(\E) \circ \cdots \circ \TT_1(\E)$ belongs to the set $\SS_{(n)}$. However, a general channel theory need not be closed under tensor product or composition --- for instance, the tensor product of operations which preserve separability in entanglement theory is not always separability preserving itself~\cite{horodecki_2001}. Therefore, to take into consideration the most general way of manipulating quantum channels allowed by the constraints of the given resource theory, we employ the formalism of free quantum processes $\SS_{(n)}$. By considering such transformations, we can establish fundamental bounds on the performance of any adaptive, multi-copy protocol for manipulating channels or states.

{
\renewcommand\thetheorem{2}
\begin{boxxed}{red}
\begin{theorem}\label{thm:main-nogo-parallel}
Given any distillation protocol $\CT \in \SS_{(n)}$ --- parallel, sequential, or adaptive, with or without a definite causal order --- which transforms $n$ uses of a channel $\E$ to some target unitary $\U$ up to accuracy $\ve > 0$, it necessarily holds that
\begin{align}\label{eq:main-copies-weight}
  n &\geq \log_{1/W_\OO(\E)}\, \frac{1- F_\OO(\U)}{\ve},\\
  n &\geq \log_{\ROg(\E)}\, \frac{1-\ve}{F_\OO(\U)}.\label{eq:main-copies-rob}
\end{align}
Analogously, when the target channel is a replacement channel $\R_\phi$ which prepares a pure state $\phi$, we have
\begin{align}\label{eq:main-copies-weight-statebased}
  n &\geq \log_{1/W_\OO(\E)}\, \frac{1- F_\FF(\phi)}{\ve},\\
  n &\geq \log_{\ROg(\E)} \,\frac{1-\ve}{F_\FF(\phi)}.\label{eq:main-copies-rob-statebased}
\end{align}
\end{theorem}
\end{boxxed}
\addtocounter{theorem}{-1}
}

The result applies also to the case of state manipulation, where we obtain that the number of copies of a state needed to perform the distillation $\M(\rho^{\otimes n}) \to \phi$ up to error $\ve$ must obey
\begin{align}\label{eq:main-copies-weight-states}
  n &\geq \log_{1/W_\FF(\rho)} \frac{1- F_\FF(\phi)}{\ve}, \quad n \geq \log_{\RFg(\rho)} \frac{1-\ve}{F_\FF(\phi)}.
\end{align}
This, again, improves on the bound obtained in Ref.~\cite{fang_2020} for resource theories of states.

The two bounds exhibit very different properties. Intuitively, we see that the weight-based bound \eqref{eq:main-copies-weight} will perform better for small $\ve$, establishing in particular that $n$ must scale as $\log(1/\ve)$ as $\ve \to 0$ for distillation to be possible.
On the other hand, the robustness-based bound \eqref{eq:main-copies-rob} increases in performance with decreasing $F_\OO(\U)$, i.e.\ with increasing resourcefulness of $\U$. One can use both of these insights to one's advantage when aiming to obtain more accurate bounds. For instance, a straightforward way to decrease $F_\OO(\U)$ is, instead of considering $\U$ as a target channel, to consider several copies of it. In practice, such an approach can be employed in block distillation protocols, which could provide a more efficient way of purifying the given resource. As long as one can bound or compute the quantity $F_\OO({\U^{\otimes m}})$ --- which we will shortly see to be possible in relevant cases --- this leads to an immediate improvement in the robustness bound (cf.~\cite{seddon_2020}).

The result of Thm.~\ref{thm:main-nogo-parallel} is a consequence of a general sub- or supermultiplicativity result for the monotones $R_\OO, W_\OO$.
\begin{boxxed}{red}
\begin{theorem}\label{thm:submult}
Consider a collection of $n$ channels $(\E_1, \ldots, \E_n)$. For any free protocol $\CT \in \SS_{(n)}$ it holds that
\begin{equation}\begin{aligned}\label{eq:weight-comb-submult}
  W_\OO\left(\CT(\E_1,\ldots,\E_n)\right) \geq \prod_i W_\OO(\E_i).
\end{aligned}\end{equation}
and
\begin{equation}\begin{aligned}\label{eq:rob-comb-submult}
   \ROg\left(\CT(\E_1,\ldots,\E_n)\right) \leq \prod_i R_\OO(\E_i).
\end{aligned}\end{equation}

In particular, for any free protocol $\CT \in \SS_{(n)}$ which takes $n$ uses of a quantum channel $\E$ to another quantum channel $\E'$, i.e. $\CT(\E, \ldots, \E) = \E'$, it necessarily holds that
\begin{equation}\begin{aligned}
  n \geq \frac{\log \ROg(\E')}{\log \ROg(\E)},\quad n \geq \frac{\log W_\OO(\E')}{\log W_\OO(\E)},
\end{aligned}\end{equation}
where in the second inequality we take $\log 0 = -\infty$ and assume that $W_\OO(\E')$ and $W_\OO(\E)$ are not both 0.
\end{theorem}
\end{boxxed}
Notice that this establishes bounds which go beyond distillation protocols, and imposes constraints on arbitrary manipulation of channels. In particular, the robustness-based bound gives a general restriction on the capabilities of channel dilution, i.e. transformations $\U \to \E'$ and $\R_\phi \to \E'$, which can be understood as simulating the action of a channel $\E'$ by employing the pure channel $\U$ or $\R_\phi$.

An interesting difference between the bounds of Thm.~\ref{thm:submult} emerges in the case when $\E'$ is a unitary or pure replacement channel with $W_\OO(\E') = 0$. Here, the weight-based bound shows that increasing the number of uses of a channel cannot allow perfect distillation when $W_\OO(\E) \in (0,1]$, strengthening the no-go result of Thm.~\ref{thm:main-nogo}. However, the bound based on $W_\OO$ does not provide information on the distillation of channels with $W_\OO(\E) = 0$ --- notably, unitary-to-unitary transformations --- while the robustness-based bound can also be applied in such cases. This complements the no-go results provided by $W_\OO$ and can reveal errors even in transformations where the weight bound becomes trivial.

\begin{proof}[\textbf{\textup{Proof of Thm.~\ref{thm:submult}}}]
We consider $W_\OO$ first. For each $\E_i$, let $\mu_i \in \RR_+$ and $\M_i \in \OO$ be such that $J_{\E_i} \geq \mu_i J_{\M_i}$. Using the $n$-linearity of the transformation $\CT$, we can expand
\begin{align*}
  \CT(\E_1,\ldots,\E_n) &= \CT(\E_1 - \mu_1 \M_1, \E_2, \ldots, \E_n) + \CT(\mu_1 \M_1, \E_2, \ldots, \E_n)\\
  &= \CT(\E_1 - \mu_1 \M_1, \E_2, \ldots, \E_n) + \CT(\mu_1 \M_1, \E_2 - \mu_2 \M_2, \E_3, \ldots, \E_n) + \CT(\mu_1 \M_1, \mu_2 \M_2, \E_3, \ldots, \E_n)\\
  &\vdotswithin{=} \tag{\stepcounter{equation}\theequation}\\
  &= \CT(\E_1 - \mu_1 \M_1, \E_2, \ldots, \E_n) + \ldots + \CT(\mu_1 \M_1, \ldots, \mu_{n-1} \M_{n-1}, \E_{n} - \mu_n \M_{n}) + \CT(\mu_1 \M_1, \ldots, \mu_n \M_n).
\end{align*}

By the positivity of $\CT$, each term on the right-hand side is positive semidefinite, and so
\begin{align*}
  0 &\leq \CT(\E_1,\ldots,\E_n) - \CT(\mu_1 \M_1, \ldots, \mu_n \M_n)\\
  &= \CT(\E_1,\ldots,\E_n) - \left(\prod_i \mu_i\right) \CT(\M_1, \ldots, \M_n)\tag{\stepcounter{equation}\theequation}\\
  &= \CT(\E_1,\ldots,\E_n) - \left(\prod_i \mu_i\right) \,\M'
\end{align*}
for some $\M' \in \OO$ due to the fact that $\CT$ is a free quantum process. Choosing $\M_i$ as optimal channels such that $\mu_i = W_\OO(\E_i)$, we have that $\prod_i W_\OO(\E_i)$ is a feasible optimal value for $W_{\OO}\left(\CT(\E_1,\ldots,\E_n)\right)$, which is precisely Eq.~\eqref{eq:weight-comb-submult}.

The case of the robustness $\ROg$ is shown analogously: recalling that $\ROg(\E)$ is given by the least coefficient such that $J_{\E_i} \leq \mu_i J_{\M_i}$ for $\M_i \in \OO$, we use the positivity and $n$-linearity of $\CT$ to show Eq.~\eqref{eq:rob-comb-submult} by the same argument.
\end{proof}

\begin{proof}[\textbf{\textup{Proof of Thm.~\ref{thm:main-nogo-parallel}}}]
Let $\CT \in \SS_{(n)}$ be any general distillation protocol such that $F\left(\CT(\E^{\times n}), \U \right) \geq 1-\ve$, where we use $\E^{\times n}$ to denote the $n$-tuple $(\E, \E, \ldots, \E)$ representing $n$ uses of $\E$.  Using Thm.~\ref{thm:submult} and Prop.~\ref{thm:nogo-weight-oneshot} gives
\begin{equation}\begin{aligned}
  W_\OO(\E)^{n} &\leq W_\OO(\CT(\E^{\times n}))\\
  &\leq  \frac{\ve}{1-F_\OO(\U)}.
\end{aligned}\end{equation}
Taking logarithm of both sides of the equation and recalling that $W_\OO(\E) \in [0,1]$, we get
\begin{equation}\begin{aligned}
n \geq \frac{\log \frac{\ve}{1-F_\OO(\U)}}{\log W(\E)}
\end{aligned}\end{equation}
as claimed.

The case of the robustness $\ROg$ follows analogously by using Thm.~\ref{thm:submult} and Prop.~\ref{thm:nogo-rob-oneshot}.

The state-based case follows in the same way, using Thm.~\ref{thm:submult} and Prop.~\ref{thm:nogo-states-weight} or Prop.~\ref{thm:nogo-states-rob}.
\end{proof}

A consequence of Thm.~\ref{thm:submult} is the sub- or supermultiplicativity of the measures under tensor product and composition.

\begin{boxxed}{white}
\begin{corollary}
If the resource theory is closed under tensor product, i.e. $\M, \M' \in \OO \Rightarrow \M\otimes \M' \in \OO$, then
\begin{equation}\begin{aligned}
  \ROg\left(\E_1 \otimes \E_2\right) &\leq \ROg(\E_1) \ROg(\E_2),\\
  W_\OO\left(\E_1 \otimes \E_2\right) &\geq W_\OO(\E_1) W_\OO(\E_2).\\
\end{aligned}\end{equation}
If the resource theory is closed under composition, i.e. $\M, \M' \in \OO \Rightarrow \M\circ \M' \in \OO$, then
\begin{equation}\begin{aligned}
  \ROg\left(\E_1 \circ \E_2\right) &\leq \ROg(\E_1) \ROg(\E_2),\\
  W_\OO\left(\E_1 \circ \E_2\right) &\geq W_\OO(\E_1) W_\OO(\E_2).\\
\end{aligned}\end{equation}
\end{corollary}
\end{boxxed}


\subsection{Asymptotic rates of distillation}\label{sec:rates}

To understand the ultimate limitations on transforming a given state or channel, one can study the maximal rate at which the conversion can be performed with an asymptotic number of channel uses, allowing for conversion error that vanishes asymptotically. Specifically, given two channels $\E, \F$, we define the maximal achievable rate of transformation under any adaptive protocol as
\begin{equation}\begin{aligned}
  &r_{\rm adap}(\E \!\to\! \F) \!\coloneqq\! \sup \lsetr r\! \barr \lim_{n \to \infty} \sup_{\CT_n \in \SS_{(n)}} \!F\left( \CT_n(\E^{\times n}), \F^{\otimes \floor{rn}}\right) \!=\! 1\! \rsetr\!,
 \label{eq:rate def}
\end{aligned}\end{equation}
Again, the transformations that we consider include both parallel and sequential protocols as relevant special cases, and thus provide an upper bound for both.
Although \eqref{eq:rate def} characterises the conversion rate with the perfect fidelity in asymptotic limit, it does not give insights into how robust the rate is against error. 
To characterise the maximum rate at which the asymptotic transformation is possible with some non-vanishing error, we define the strong converse rate as~\cite{hayashi2016quantum} 
\begin{equation}\begin{aligned}
  &r^\dagger_{\rm adap}(\E \!\to\! \F) \!\coloneqq\! \sup \lsetr r\! \barr \limsup_{n \to \infty} \sup_{\CT_n \in \SS_{(n)}} \!\!F\left( \CT_n(\E^{\times n}), \F^{\otimes \floor{rn}}\right) \!>\! 0\! \rsetr\!.
 \label{eq:strong converse rate def}
\end{aligned}\end{equation}
In other words, as soon as a rate exceeds the strong converse rate, the fidelity necessarily goes to 0. This places a threshold for the achievable distillation rates, even when non-zero error is allowed.

Applying our result in Thm.~\ref{thm:main-nogo-parallel} allows us to obtain a general bound on the rate of distillation protocols.
{
\renewcommand\thetheorem{3(a)}
\begin{boxxed}{red}
\begin{theorem}\label{thm:main-rate-bound}
If the target channel $\U$ satisfies $F_\OO(\U^{\otimes n}) = F_\OO(\U)^n$, then we have a strong converse bound as
\begin{equation}\begin{aligned}
  r_{\rm adap}(\E \to \U) \leq r^\dagger_{\rm adap}(\E\to\U)\leq \frac{\log \ROg(\E)}{\log F_\OO(\U)^{-1}}.
\label{eq:strong converse adaptive}
\end{aligned}\end{equation}

Alternatively, if the target is a replacement channel $\R_\phi$ such that $F_\FF(\phi^{\otimes n}) = F_\FF(\phi)^n$, then
\begin{equation}\begin{aligned}
 r_{\rm adap}(\E \to \R_{\phi}) \leq r^\dagger_{\rm adap}(\E \to \R_{\phi})\leq \frac{\log \ROg(\E)}{\log F_\FF(\phi)^{-1}}.
 \label{eq:strong converse adaptive state}
\end{aligned}\end{equation}

In the above, $F_\OO(\U)$ (or $F_\FF(\phi)$) can be replaced with any other multiplicative quantity $G_\OO(\U)$ s.t. $F_\OO(\U) \leq G_\OO(\U)$ (or $F_\FF(\phi) \leq G_\FF(\phi)$), and the results hold analogously.
\end{theorem}
\end{boxxed}
\addtocounter{theorem}{-1}
}
\begin{proof}
Let $r$ be any achievable rate at error threshold $\ve \in [0,1)$, that is, assume that there exists a sequence $\{\CT_n\}_n$ of free quantum processes such that $1 - F\left( \CT_n(\E^{\times n}), \U^{\otimes \floor{rn}}\right) \eqqcolon \ve_n$ with $\liminf_{n\to\infty}\ve_n=\ve < 1$.
From Thm.~\ref{thm:main-nogo-parallel}, for each $n$ we have that
\begin{equation}\begin{aligned}
  n \log \ROg(\E) &\geq \log(1-\ve_n) - \log F_\OO\left(\U^{\otimes \floor{rn}}\right)\\
  &\geq \log(1-\ve_n) + rn \log F_\OO(\U)^{-1}
\end{aligned}\end{equation}
where we used the multiplicativity of $F_\OO(\U)$. Dividing by $n$ and taking $\limsup_{n\to\infty}$ in both sides gives the claimed result.

When the target is a replacement channel, we can follow the proof of Lem.~\ref{lem:prep} to get
\begin{equation}\begin{aligned}
F_\OO(\R_{\phi}^{\otimes n}) \leq F_\FF(\phi^{\otimes n})
\end{aligned}\end{equation}
and the rest of the proof proceeds analogously.
\end{proof}

In some cases, it is reasonable to restrict our attention to parallel protocols (illustrated in Fig.~\ref{fig:sc-many-copy}(a) of the main text), which are indeed how many communication and channel manipulation schemes are often considered~\cite{barnum_2000,kretschmann_2004,Cooney2016constant}.
To separately characterise this scenario, we define the rate of transformation with parallel protocols as
\begin{equation}\begin{aligned}
  &r_{\rm par}(\E \!\to\! \F) \!\coloneqq\! \sup \lsetr r\! \barr \lim_{n \to \infty} \sup_{\TT_n \in \SS} \!F\left( \TT_n(\E^{\otimes n}), \F^{\otimes \floor{rn}}\right) \!=\! 1\! \rsetr\!,
 \label{eq:rate parallel def}
\end{aligned}\end{equation}
and its strong converse rate $r_{\rm par}^\dagger(\E \to \F)$ analogously. 
Then, we can get better strong converse bounds for parallel protocols by suitably `smoothing' the definition of the robustness over all channels within a small distance of the original input $\E$~\cite{diaz_2018-2,Fang2020max,liu_2019-1,gour_2019-1}. 
We thus define the regularised log-robustness (max-relative entropy) $D_{\OO}^\infty(\E)\coloneqq \lim_{\delta\to 0}\limsup_{n\to\infty}\frac{1}{n}\log R_\OO^{\delta}(\E^{\otimes n})$ where $R_\OO^\delta(\E)\coloneqq \min_{F(\tilde\E,\E)\geq 1-\delta}R_\OO(\tilde\E)$, and use it as follows.

{
\renewcommand\thetheorem{3(b)}
\begin{boxxed}{red}
\begin{theorem}\label{thm:main-rate-bound parallel}
If the target channel $\U$ satisfies $F_\OO(\U^{\otimes n}) = F_\OO(\U)^n$, then
\begin{equation}\begin{aligned}
  r_{\rm par}(\E \to \U) \leq r_{\rm par}^\dagger(\E\to\U)\leq \frac{D_\OO^\infty(\E)}{\log F_\OO(\U)^{-1}}.
\label{eq:strong converse par}
\end{aligned}\end{equation}

Alternatively, if the target is a preparation channel $\P_\phi$ such that $F_\FF(\phi^{\otimes n}) = F_\FF(\phi)^n$, then we have a strong converse bound as
\begin{equation}\begin{aligned}
 r_{\rm par}(\E \to \P_{\phi}) \leq r_{\rm par}^\dagger(\E \to \P_{\phi})\leq \frac{D_\OO^\infty(\E)}{\log F_\FF(\phi)^{-1}}.
 \label{eq:strong converse par state}
\end{aligned}\end{equation}
In the above, $F_\OO(\U)$ (or $F_\FF(\phi)$) can be replaced with any other multiplicative quantity $G_\OO(\U)$ s.t. $F_\OO(\U) \leq G_\OO(\U)$ (or $F_\FF(\phi) \leq G_\FF(\phi)$), and the results hold analogously.
\end{theorem}
\end{boxxed}
\addtocounter{theorem}{-1}
}

For both of the results of this section to be valid, we require the multiplicativity of the channel fidelity of the target unitary $\U$ or state $\phi$. In practice, the bound can be relaxed by using the Choi fidelity $\wt F_\OO(\U) \geq F_\OO(\U)$, which is often easier to show to be multiplicative. Although the multiplicativity condition might not hold in full generality, the majority of relevant resource theories do indeed satisfy it. This includes (cf.~Table~\ref{tab:resources} in the main text):
\begin{itemize}
    \item The theories of communication and quantum memories with the choice of the target channel $\U = \idc$, where we have $ F_\OO(\idc^{\otimes n}) =  F_\OO(\idc)^n$ (see Supplementary Note~\ref{sec:main-comm}).
    \item The theory of magic for multi-qubit quantum channels. Here, the fidelity $F_\OO(\R_\phi)$ of any replacement channel $\R_\phi$ reduces to $F_\FF(\phi)$ which is known to be multiplicative as long as $\phi$ is a state of up to three qubits~\cite{bravyi_2019}. We will furthermore show in Supplementary Note~\ref{sec:main-magic} the multiplicativity of the channel fidelity $F_\OO(\U)$ of any 1-, 2-, or 3-qubit diagonal unitary channel from the third level of the Clifford hierarchy, which are common choices of target channels in practical settings.
    \item The theory of magic for multi-qudit quantum channels, where a multiplicative bound for the fidelity of any target replacement channel $\R_\phi$ is given by the min-thauma of magic~\cite{wang_2020,wang_2019-1}. The fidelity $F_\FF$ itself can also be shown to be multiplicative in some relevant cases (see Supplementary Note~\ref{sec:magic_qudits}).
    \item Other state-based theories such as entanglement, coherence, athermality, and purity, where the quantities $F_\FF(\phi)$ are known to be multiplicative for any target pure state $\phi$.
\end{itemize}
Our results therefore give broadly applicable bounds for the asymptotic performance of general distillation protocols.

In order to prove Thm.~\ref{thm:main-rate-bound parallel}, we first show the following lemma.
\begin{boxxed}{white}
\begin{lemma}
\label{lem:fidelity and diamond}
 For any two channels $\E,\N$ and unitary $\U$, it holds that
 \bal
  |F(\E,\U)-F(\N,\U)|\leq \sqrt{1-F(\E,\N)}.
 \eal
\end{lemma}
\end{boxxed}
\begin{proof}
We first show $|F(\E,\U)-F(\N,\U)|\leq \frac{1}{2}\|\E-\N\|_\diamond$.
Suppose $F(\E,\U)\geq F(\N,\U)$ without loss of generality. Then,
 \begin{align*}
  &|F(\E,\U)-F(\N,\U)| \\
  &= F(\E,\U)-F(\N,\U)\\
  &=\min_\phi \braket{\phi|(\idc\otimes\U^\dagger)\idc\otimes\E(\phi)(\idc\otimes\U)|\phi} \\&\quad - \min_\phi \braket{\phi|(\idc\otimes\U^\dagger)\idc\otimes\N(\phi)(\idc\otimes\U)|\phi}\\
  &\leq \braket{\tilde\phi_\U|\idc\otimes\E(\tilde\phi)-\idc\otimes\N(\tilde\phi)|\tilde\phi_\U} \tag{\stepcounter{equation}\theequation}\\
  &\leq \max_\phi \max_{0\leq M \leq \id} \<M,\idc\otimes\E(\phi)-\idc\otimes\N(\phi)\>\\
  &=\frac{1}{2}\|\E-\N\|_\diamond
 \end{align*}
where in the third line we defined $\tilde\phi$ to be the minimiser of the second term in the second line and also defined $\tilde\phi_\U=\idc\otimes\U(\tilde\phi)$, and in the fourth line we used that $0\leq\dm{\tilde\phi_\U}\leq \id$.

The proof is concluded by noticing $\frac{1}{2}\|\E-\N\|_\diamond\leq \sqrt{1-F(\E,\N)}$ \cite{belavkin_2005,fuchs_1999}.

\end{proof}

\begin{proof}[\textbf{\textup{Proof of Thm.~\ref{thm:main-rate-bound parallel}}}]

 Let $r$ be any achievable rate at error threshold $\ve \in [0,1)$, that is, assume that there exists a sequence $\{\TT_n\}_n$ of free superchannels such that $1 - F\left( \TT_n(\E^{\otimes n}), \U^{\otimes \floor{rn}}\right) \eqqcolon \ve_n$ with $\liminf_{n\to\infty}\ve_n=\ve < 1$.
 Let $\delta$ with $0<\delta\leq 1$ be some constant and $\N_n$ be a channel such that $\log R_\OO^\delta(\TT_n(\E^{\otimes n}))=\log R_\OO(\N_n)$. 
 Then, using Lem.~\ref{lem:fidelity and diamond},
 
  \bal
  F(\N_n,\U^{\otimes \lfloor rn \rfloor})&\geq F(\TT_n(\E^{\otimes n}),\U^{\otimes \lfloor rn \rfloor})-\sqrt{1-F(\N_n,\TT_n(\E^{\otimes n}))}\\
  &\geq 1-\ve_n - \sqrt{\delta}.
 \eal
 
Since doing nothing is a valid free comb, $\N_n$ can be transformed to $\U^{\otimes \lfloor rn \rfloor}$ with fidelity  $1-\ve_n-\sqrt{\delta}$ for free. 
Applying Thm.~\ref{thm:main-nogo-parallel}, we get 
\bal
 \ve_n + \sqrt{\delta} &\geq 1-R_{\OO}(\N_n)  F_\OO\left(\U^{\otimes \lfloor rn \rfloor} \right)\\
 &= 1-R_{\OO}(\N_n)  F_\OO(\U)^{\lfloor rn \rfloor}
 \label{eq:each error bound}
\eal 
where we used the assumption of the multiplicativity $F_\OO\left(\U^{\otimes m} \right)=F_\OO(\U)^m$.
Let $\{m_n\}_n$ be a subsequence such that $\lim_{n\to\infty} \ve_{m_n}=\ve$.
Then, since $0\leq \ve <1$, there exist some integer $N$ and positive real number $\delta'$ such that $0< \ve_{m_n} + \sqrt{\delta'}<1$ for any $n>N$.
Using \eqref{eq:each error bound}, we get for $n>N$ and $\delta\leq\delta'$ that

\bal
 r&\leq \frac{\lfloor rm_n \rfloor +1}{m_n}  \\
 &\leq \frac{1}{\log F_\OO(\U)^{-1}}\frac{1}{m_n} \log R_{\OO}(\N_{m_n}) + \frac{1}{\log F_\OO(\U)^{-1}}\frac{1}{m_n}\log\frac{1}{1-(\ve_{m_n}+\sqrt{\delta})}+\frac{1}{m_n}\\
 &= \frac{1}{\log F_\OO(\U)^{-1}}\frac{1}{m_n} \log R_{\OO}^{\delta}(\TT(\E^{\otimes m_n})) + \frac{1}{\log F_\OO(\U)^{-1}}\frac{1}{m_n}\log\frac{1}{1-(\ve_{m_n}+\sqrt{\delta})}+\frac{1}{m_n}\\
 &\leq \frac{1}{\log F_\OO(\U)^{-1}}\frac{1}{m_n} \log R_{\OO}^{\delta}(\E^{\otimes m_n}) + \frac{1}{\log F_\OO(\U)^{-1}}\frac{1}{m_n}\log\frac{1}{1-(\ve_{m_n}+\sqrt{\delta})}+\frac{1}{m_n}
\label{eq:adaptive converse inequality}
\eal
where in the last inequality we used the monotonicity of max-relative entropy measure under free superchannels~\cite{takagi_2020}.
Noting that
\bal
\limsup_{n\to\infty}\frac{1}{m_n} \log R_{\OO}^{\delta}(\E^{\otimes m_n})\leq \limsup_{n\to\infty}\frac{1}{n} \log R_{\OO}^{\delta}(\E^{\otimes n}),
\eal
we take $\lim_{\delta\to 0}\limsup_{n\to\infty}$ in both sides of \eqref{eq:adaptive converse inequality} to get 
\bal
 r\leq \frac{D_\OO^\infty(\E)}{\log F_\OO(\U)^{-1}},\ \forall \ve \in [0,1)
\eal
showing that the quantity in the right hand side is a strong converse bound. 
The state case can be shown analogously.

\end{proof}



\section{Applications}\label{sec:main-app}


\subsection{Quantum communication}\label{sec:main-comm}

A central goal of quantum communication is to enable reliable transmission of quantum states to another party through a noisy channel. 
A way of accomplishing this task is to apply encoding (respectively, decoding) operations to input (output) states so that one can offset the effect of noise, and there has been an enormous amount of work to investigate the ultimate limit on how much information can be reliably sent~\cite{lloyd_1997,barnum_1998,bennett_2002,devetak_2005-1,holevo_2001,Bennett2014reverse,hayashi2016quantum,wang_2019-3,tomamichel_2016-1,tomamichel_2017,berta_2017,kaur_2017,pirandola_2017,pirandola_2019-1,fang_2019-1}.
By understanding this problem as the purification of a noisy quantum channel to the identity channel by means of some allowed free channel transformation, we can directly apply the results of our work.

To encompass the most general communication setting, we consider assisted communication scenarios, where both parties may share some correlations (e.g.\ entanglement) or are able to perform some joint operations (e.g.\ communicate classically) in order to enhance their quantum communication capabilities~\cite{bennett_1996,bennett_2002}.
The given type of assistance can be specified by the set of free superchannels in our framework, and our general results can be applied by considering $\OO$ as the set of free channels that are preserved by the free superchannels. 
We will thus investigate the maximum size of quantum systems that can be reliably sent using free superchannels $\SS_\mA$ where $\mA$ denotes the type of assisting operations. 
Of particular interest in the theory of quantum communication is the asymptotic capacity where many uses of the given channel are considered.  
We define the $\mA$-assisted adaptive quantum capacity
$Q_{\mA,{\rm adap}}$ as the rate at which the given channel can be converted to the qubit identity channel $\idc_2$ using the given choice of superchannels $\SS=\SS_\mA$, that is, $r_{\rm adap}(\E\rightarrow\idc_2)$ in the notation of Supplementary Note~\ref{sec:rates}. 
Another important quantity is the strong converse capacity, similarly defined as $Q_{\mA,{\rm adap}}^\dagger(\E)\coloneqq r^\dagger(\E\rightarrow\idc_2)$. We note that our methods naturally apply also to the setting of generalised resource theories of communication~\cite{kristjansson_2020}, where transformations with an indefinite causal order are allowed.

We also consider a simpler scenario where multiple uses of the given channel are applied in parallel.
We define the $\mA$-assisted quantum capacity under a parallel communication protocol as $Q_\mA(\E)\coloneqq r_{{\rm par}}(\E\rightarrow\idc_2)$ and similarly the strong converse capacity as $Q_\mA^\dagger(\E)\coloneqq r_{{\rm par}}^\dagger(\E\rightarrow\idc_2)$.

Considering assisted capacities is insightful in that they serve as upper bounds for capacities with weaker assistance, including the unassisted quantum capacity $Q(\E)$ whose single-letter formula is not known and thus hard to analyse in general~\cite{cubitt_2015}.
Below, we apply our method to two assisted settings that are often considered in the literature. 


\subsubsection{No-signalling assistance}\label{sec:main-comm-ns}

An important class of assisted communication codes are ones where the sending and receiving party are allowed to share no-signalling correlations~\cite{Leung2015NS,Duan2016nosignalling,wang_2017,wang_2019-3,wang_2019-2,takagi_2020}. 
It is the communication setting where the superchannel $\Theta:(A'\rightarrow B) \rightarrow (A\rightarrow B')$ is realised by a bipartite no-signalling channel $\Pi_{\rm NS}:AB\rightarrow A'B'$ satisfying
\bal
\Tr_{A'}\Pi_{\rm NS}(\rho_{A}^{(0)}\otimes\rho_{B})=\Tr_{A'}\Pi_{\rm NS}(\rho_{A}^{(1)}\otimes\rho_{B}) \\
\Tr_{B'}\Pi_{\rm NS}(\rho_{A}\otimes\rho_{B}^{(0)})=\Tr_{B'}\Pi_{\rm NS}(\rho_{A}\otimes\rho_{B}^{(1)}) \label{eq:BtoA no-signalling}
\eal
for any state $\rho_{A}$, $\rho_{B}$, and any pair of states $\{\rho_{A}^{(j)}\}_{j=0}^1$, $\{\rho_{B}^{(j)}\}_{j=0}^1$.
We denote the set of superchannels realised by no-signalling channels as $\SS_{\rm NS}$ and call them no-signalling superchannels.
Intuitively, such transformations do not create a side channel that allows for free communication.
In this setting, the free channels are the channels which are useless for transmitting any information: this set is formed by the replacement channels ${\OO_R}\coloneqq \lset \R_\sigma:A'\rightarrow B\sbar\R_\sigma(\cdot) = \Tr(\cdot) \sigma,\ \sigma\in\DD(B)\rset$, which simply replace the input with a fixed output state. Indeed, any superchannel $\SS_{\rm NS}$ preserves the set of replacement channels. Motivated by this, Ref.~\cite{takagi_2020} introduced a resource theory of communication with the set of free superchannels $\SS\coloneqq\lset \TT \bar \TT(\M) \in \OO_R \; \forall \M\in \OO_R \rset$, and showed that $\SS$ actually coincides with the set of no-signalling superchannels, i.e., $\SS=\SS_{\rm NS}$. 
Many-copy transformations with no-signalling assistance are then formally defined as the transformations by free combs as in~\eqref{eq:free comb}, which map multiple constant channels to a constant channel. 
A class of free many-copy transformations are the quantum feedback-assisted adaptive protocols where the receiving party is allowed to send a part of their quantum system back to the sender, followed by the application of a no-signalling bipartite channel between the channels uses. This includes the quantum feedback-assisted communication with entanglement assistance discussed in Refs.~\cite{Bowen2004feedback,Cooney2016constant}.

To evaluate $F_{\OO_R}(\idc_d)$, we will use the following lemma to relate it with the Choi fidelity $\wt F_{\OO_R} (\idc_d) = \max_{\M\in \OO_R} \<\wt J_{\idc_d}, \wt J_\M \>$, which is easier to compute. This is conceptually similar to an argument we made in the proof of Prop.~\ref{thm:nogo-rob-oneshot}.
\begin{boxxed}{white}
\begin{lemma}\label{lemma:golden_channel}
For any unitary channel $\U$, it holds that $R_\OO(\U) \geq F_\OO(\U)^{-1} \geq \wt F_\OO(\U)^{-1}$. In particular, if $R_\OO(\U) = \wt F_\OO(\U)^{-1}$, then $F_\OO(\U) = \wt F_\OO(\U)$.
\end{lemma}
\end{boxxed}
\begin{proof}
We have that
\begin{equation}\begin{aligned}
	R_{\OO}(\U) &= \max_{\psi} R_{\max}(\idc \otimes \U(\psi)\|\idc\otimes \M^\star(\psi))\\
		&\geq R_{\max}(\idc \otimes \U(\psi^\star)\|\idc\otimes \M^\star(\psi^\star))\\
		&\geq \< \idc \otimes \U(\psi^\star), \frac{\idc \otimes \U(\psi^\star)}{F(\U,\M^\star)} \>\\
		&=F(\U,\M^\star)^{-1}\\
		&\geq F_{\OO}(\U)^{-1}\\
		&\geq \wt F_\OO(\U)^{-1}.
\end{aligned}\end{equation}
where in the first line we chose $\M^\star\in\OO$ as an optimal channel for $R_{\OO}$ in~\eqref{eq:robustness def opt}, in the second line we chose $\psi^\star$ as a state such that $F(\U,\M^\star) = F(\idc \otimes \U(\psi^\star), \idc \otimes \M^\star(\psi^\star))$, in the third line we used that $\frac{\idc \otimes \U(\psi^\star)}{F(\U,\M^\star)}$ is a feasible solution for the dual form of $R_{\max}$ in \eqref{eq:dual_measures}, and the last line is by definition.
\end{proof}
For the theory of no-signalling channels, a direct computation tells us that $\wt F_{\OO_R}(\idc_d) = \frac{1}{d^2}$. Coupled with the fact $R_{\OO_R}(\idc_d)=d^2$~\cite{takagi_2020}, we obtain from Lem.~\ref{lemma:golden_channel} that $F_{\OO_R}(\idc_d) = \frac{1}{d^2}$.
This can alternatively be seen by a direct calculation: 
\bal
 F_{\OO_R}(\idc_d)&=\min_{\psi}\max_{\M\in\OO_R} \<\psi,\idc\otimes\M(\psi)\>\\
 &=\min_{\{\lambda_i\},\{\ket{i}\}}\max_{\sigma}\sum_{i=0}^{d-1} \lambda_i^2 \braket{i|\sigma|i} \\
 &= \min_{\{\lambda_i\},\{\ket{i}\}}\lambda_{\max}^2 \\
 &= \frac{1}{d^2},
\eal
where the first line is due to Lem.~\ref{lemma:minimax}, and in the second line we used that a bipartite pure state $\psi$ allows for a Schmidt decomposition $\ket{\psi}=\sum_{i=0}^{d-1} \lambda_i \ket{i}\ket{i}$ where $\lambda_{\max}\coloneqq \max_i \lambda_i$ in the third line. This implies in particular that $F_{\OO_R}(\idc_2^{\otimes n})=\frac{1}{2^{2n}}$.

First, it is insightful to see what the bounds of Thm.~\ref{thm:main-nogo} tell us about one-shot transformations $\E \to \idc_d$. Here, the maximal fidelity achievable under no-signalling codes can be computed with an SDP~\cite{Leung2015NS}. We will also use the fact that $R_{\OO_{\mathrm{R}}}$ has been computed analytically for some simple channels~\cite{Fang2020max}, and $W_{\OO_{\rm R}}$ can also be computed in such cases.

For instance, for the depolarising channel $\D_p (\rho) = (1-p) \rho + p \frac{\id}{d}$ we get $W_{\OO_{\rm R}} (\D_p) = p$, and the robustness and weight-based bounds are actually equal: we have $\ve \geq p (d^2 - 1)/d^2$. In fact, the bounds match the achievable fidelity, meaning that Thm.~\ref{thm:main-nogo} quantifies the error in the one-shot transformation $\D_p \to \idc_d$ under $\SS_{\rm NS}$ \textit{exactly}. On the other hand, channels such as the qubit dephasing channel $\mathcal{Z}_p(\rho) = (1-p) \rho + Z \rho Z$ or the amplitude damping channel $\mathcal{N}_\gamma$ have no constant channels in their support, and thus the weight bound becomes useless with $W_{\OO_{\rm R}}(\mathcal{Z}_p) = W_{\OO_{\rm R}}(\N_\gamma) = 0$. However, the robustness bound gives $\ve \geq \frac{1}{2} - |p-\frac{1}{2}|$ for the dephasing channel and $\ve \geq \frac{1}{4}(2+\gamma-2\sqrt{1-\gamma})$ for the amplitude damping channel, and these bounds, again, exactly equal the fidelity achievable under no-signalling assistance. Thus, we see that the error bounds of Thm.~\ref{thm:main-nogo} can be tight in relevant cases. We refer to Fig.~\ref{fig:ns-oneshot} of the main text for an explicit comparison.

We now apply our result to get insights into asymptotic capacity. From Thm.~\ref{thm:main-rate-bound}, we immediately obtain a strong converse bound for adaptive capacity as 
\bal
 Q_{\rm NS, adap}(\E)\leq Q_{\rm NS, adap}^\dagger(\E)\leq \frac{1}{2}\log R_{\OO_R}(\E).
\eal

For the capacity with parallel protocols, we can use the asymptotic equipartition property of Ref.~\cite{Fang2020max} which showed that $D^\infty_{\OO_R}(\E)=I(\E)$~\footnote{Although the asymptotic equipartition property was shown for the diamond norm smoothing, it can be straightforwardly extended to the fidelity smoothing using the relation between the diamond norm and fidelity~\cite{belavkin_2005,fuchs_1999}.}, where $I(\E)$ is the channel mutual information defined as $I(\E_{A'\rightarrow B})\coloneqq\max_{\phi_{EA'}}I(\idc\otimes\E_{A'\rightarrow B}(\phi_{EA'}))$ with $I(\idc\otimes\E(\phi_{EA'}))$ being the mutual information between $E$ and $B$. Combining this result with Thm.~\ref{thm:main-rate-bound parallel}, we obtain 
\bal
Q_{\rm NS}(\E)\leq Q_{\rm NS}^\dagger(\E)\leq \frac{1}{2} I(\E).
\eal
Together with the known achievable capacity $Q_{\rm NS}(\E)\geq\frac{1}{2}I(\E)$~\cite{bennett_2002,Leung2015NS}, this ensures the strong converse property of no-signalling--assisted quantum capacity $Q_{\rm NS}(\E)=Q_{\rm NS}^\dagger(\E)=\frac{1}{2}I(\E)$.

Importantly, no-signalling--assisted communication includes entanglement-assisted communication as a subclass, and thus the strong converse bound for the no-signalling--assisted capacity serves as that for the entanglement-assisted capacity. 
We note that the strong converse property for both settings was previously shown in several different ways~\cite{Bennett2014reverse,Berta2011reverse,Gupta2015strong,Cooney2016constant,Fang2020max,takagi_2020}; our general resource-theoretic approach provides an alternative perspective that employs the asymptotic equipartition property of Ref.~\cite{Fang2020max} to show this important relation.


\subsubsection{PPT and separable assistance}\label{sec:main-comm-ppt}

We now study another approach to quantum communication, where assistance by local operations and (two-way) classical communication is allowed. Since such operations are extremely difficult to characterise mathematically~\cite{chitambar_2014}, various relaxations are frequently employed. A particularly useful set of operations, amenable to both an efficient numerical computations as well as a simplified analytical characterisation, is the set of PPT (positive partial transpose) codes~\cite{rains_2001,Leung2015NS}. In the general case of bipartite channels $A B \to A' B'$, a map $\M$ is called PPT if the partial transpose of $J_\M$ across the $AA' : BB'$ bipartition is positive~\cite{rains_2001}. Analogously, a map is separable if $J_\M$ is a separable operator~\cite{rains_1997}. Quantum communication through a channel $\E: A \to B$ with PPT (or SEP) assistance is then defined as allowing Alice and Bob to perform joint PPT (or separable) operations between their successive channel uses (see e.g.~\cite{kaur_2017}). The goal is then, again, to simulate a number of uses of the identity channel $\idc_2$. Using $\OO_\PPT$ to denote the set of all PPT channels and analogously for $\OO_\SEP$, for both classes of channels we have~\cite{shimony_1995,rains_2001}
\begin{equation}\begin{aligned}
\wt F_{\OO_\PPT}(\idc_d) = \wt F_{\OO_\SEP}(\idc_d) = \max_{\M \in \OO_\SEP} \< \wt J_{\idc_d}, \wt J_\M \> = \frac{1}{d}.
\end{aligned}\end{equation}
Using now the fact that $R_{\OO_\PPT}(\idc_d) = R_{\OO_\SEP}(\idc_d) = d$~\cite{yuan_2020}, from Lem.~\ref{lemma:golden_channel} we conclude that $F_{\OO_\PPT} (\idc_d) = F_{\OO_\SEP} (\idc_d) = \frac{1}{d}$, and in particular $F_{\OO}(\idc_2^{\otimes n}) = \frac{1}{2^n}$.

In order to describe these operations in our framework, one can notice that any PPT code, understood as a superchannel acting on a channel $\E: A \to B$, preserves the set of PPT channels~\cite{gour_2019,bauml_2019}. This motivates us to define the class of PPT-preserving superchannels: $\SS_\PPT \coloneqq \lset \TT \bar \TT(\M) \in \OO_\PPT \; \forall \M\in \OO_\PPT \rset$.
Any bound obtained for such superchannels will then upper bound the capabilities of PPT codes. Similarly, one defines $\SS_\SEP \coloneqq \lset \TT \bar \TT(\M) \in \OO_\SEP \; \forall \M\in \OO_\SEP \rset$ as the separability-preserving superchannels. More general PPT- and separability-preserving quantum processes are defined analogously.

Using this approach, our results immediately provide several bounds on the capabilities of both PPT and separable codes in assisting quantum communication, both in the non-asymptotic and asymptotic settings. We discuss in particular the applications to upper bounding quantum capacity.

Denoting by $Q_{\SEP,{\rm adap}}(\E)$ the quantum capacity of a channel $\E$ assisted by general (adaptive, two-way) separable codes, Thm.~\ref{thm:main-rate-bound} gives
\begin{equation}\begin{aligned}
   Q_{\SEP,{\rm adap}}(\E) \leq Q^\dagger_{\SEP,{\rm adap}}(\E) \leq \log R_{\OO_\SEP}(\E).
\end{aligned}\end{equation}
Here, $\log R_{\OO_\SEP}$ is the so-called max-relative entropy of entanglement of a channel, and our result recovers exactly the strong converse bound of~Ref.~\cite{christandl_2017} (see below for a clarification regarding this quantity). Our approach then provides a remarkable simplification of the proof methods used to show this relation. Note that the quantity $R_{\OO_\SEP}$ was recently also considered in~Ref.~\cite{yuan_2020} in the different context of quantifying the memory of a quantum channel, where it was evaluated for a number of cases, and in particular shown to be computable exactly for low-dimensional channels: as long as $d_A \leq 3$ and $d_B = 2$, we have $R_{\OO_\SEP}(\E) = \max \{1, \norm{J_\E}{\infty} \}$. This provides an analytically computable strong converse bound for the capacities of all qubit channels.

Similarly, for PPT codes we get
\begin{equation}\begin{aligned}
   Q_{\PPT,{\rm adap}}(\E) \leq Q^\dagger_{\PPT,{\rm adap}}(\E) \leq \log R_{{\OO_\PPT}}(\E).
\end{aligned}\end{equation}
This gives a general, SDP-computable bound on the quantum capacity and is similar to the results of~Refs.~\cite{wang_2019-3,berta_2017}, although the robustness (max-relative entropy)-based quantities in those works optimise over a larger set of maps than our $R_{{\OO_\PPT}}$. Specifically, following the observations in~Refs.~\cite{bauml_2019,fang_2019-1}, the precise result of~Refs.~\cite{wang_2019-3,berta_2017} can be recovered exactly in the framework of this work if one considers the class of \textit{completely} PPT-preserving superchannels~\cite{gour_2019} instead of PPT-preserving superchannels $\SS_\PPT$ as we have done here. We also note that $R_{{\OO_\PPT}}(\E) = \max \{1, \norm{J_\E}{\infty} \}$ for any channel with $d_B = 2$~\cite{yuan_2020}.

Of note is the fact that previous approaches to establishing such bounds typically relied on so-called amortised monotones~\cite{takeoka_2014,christandl_2017,berta_2017,kaur_2017} defined at the level of quantum states, while our method is based on resource measures directly at the level of channels. Our results not only provide a streamlined approach to recovering the specialised bounds of Refs.~\cite{christandl_2017,berta_2017}, but also reveal these approaches to be a part of a broad resource-theoretic framework. 
A notable technical difference between our approach and the previous methods is that we do not need to make any assumptions about the structure of the communication protocol, and in particular we do not need to impose that it is constructed by sequentially composing free channel transformations, as previous works did.

Let us also note that the generality of our methods allows them to be immediately applicable to related settings such as secret key agreement~\cite{devetak_2005} (giving bounds on the private capacities and recovering a result of~Ref.~\cite{christandl_2017}) and bidirectional quantum communication~\cite{bennett_2003} (giving bounds on quantum and private capacities, recovering results similar to~Ref.~\cite{bauml_2018}).

Before proceeding further, let us clarify a difference between our measures $R_{\OO_\SEP}$, $R_{\OO_\PPT}$ and related quantities found in literature. 
Max-relative entropy quantities for channels are often defined in the literature with respect to a set of quantum states, rather than trace-preserving maps, which might seem different from our definitions. For instance, the max-relative entropy of entanglement of a channel was originally defined as~\cite{christandl_2017}
\begin{equation}\begin{aligned}
  \wt{D}_{{\SEP}}(\E) \coloneqq \max_{\psi} \min \lset \log \lambda \bar \idc \otimes \E(\psi) \leq \lambda \sigma,\; \sigma \in \FF_{\SEP} \rset
\end{aligned}\end{equation}
where $\sigma \in \FF_{\SEP}$ optimises over all separable states --- not necessarily valid Choi matrices --- which can be understood as an optimisation over a cone of completely positive maps. However, it is not difficult to show that this definition is equal to ours.
\begin{boxxed}{white}
\begin{lemma}\label{lem:rob_cptni}
Let $\OO$ denote either ${\OO_\SEP}$ or ${\OO_\PPT}$, and define $\wt \OO$ as the cone of separable operators on $A:B$ or PPT operators on $A:B$, respectively.
Then, we can equivalently optimise over completely positive and trace non-increasing maps in the definition of the robustness $\ROg$. That is, for $\E : A \to B$ we have
\begin{equation}\begin{aligned}\label{eq:rob_cptni}
  &\ROg(\E) \\
  &= \min \lset \lambda \bar J_{\E} \leq \lambda J_{\M},\; \M\in \wt\OO,\; \Tr_B J_\M= \id \rset\\
  &= \min \lset \lambda \bar J_{\E} \leq \lambda J_{\wt \M},\; \wt\M\in \wt\OO,\; \Tr_B J_{\wt\M} \leq \id \rset\\
    &= \min \lset \norm{\Tr_B J_{\wt\M}}{\infty} \bar J_{\E} \leq J_{\wt\M},\; \wt\M\in \wt\OO \rset\\
    &= \max_{\psi} \min \lset \lambda \bar \idc \otimes \E(\psi) \leq \lambda J_{\wt M} ,\; \wt \M \in \wt\OO,\; \Tr(J_{\wt M}) = 1 \rset.
\end{aligned}\end{equation}
\end{lemma}
\end{boxxed}
\begin{proof}
Let $\ROg'$ denote the quantity in the second line of \eqref{eq:rob_cptni}. Clearly, $\ROg(\E) \geq \ROg'(\E)$. To see the converse inequality, let $\wt\M\in \wt\OO$ be an optimal map in the optimisation for $\ROg'$. By definition, we have that $C \coloneqq \id_A - \Tr_B J_{\wt\M}\geq 0$. Define the completely positive and trace-preserving map $\M$ through $ J_{\M} = J_{\wt\M}+ C \otimes \frac{\id_B}{d_B}$. As $C \otimes \frac{\id_B}{d_B} \in \wt \OO$, convexity of $\wt \OO$ gives that $\M \in \wt \OO \cap \CPTP$. This means that $J_\E \leq \ROg'(\E) J_{\wt\M} \leq \ROg'(\E) J_{\M}$, hence $\ROg(\E) \leq \ROg'(\E)$ and so the two quantities must be equal. The third line is a simple restatement of the second, and the fourth equality has been previously shown in~Ref.~\cite[Lemma 7]{berta_2017}.
\end{proof}
This result shows a bound of~Ref.~\cite[Thm. 5.2]{christandl_2017} (also~\cite[Prop. 1]{rigovacca_2018}) to be tight.

In the formalism of `resource-generating power' found in Refs.~\cite{bendana_2017,liu_2020,liu_2019-1}, the result can be understood as the fact that the entanglement-generating power of a channel $\E : A\to B$ in terms of the generalised robustness of entanglement $R_{\FF_\SEP}$ is equal to the robustness $R_{\OO_\SEP}(\E)$.


\subsection{Quantum gate synthesis and magic states}\label{sec:main-magic}

To realise reliable computation under the presence of noise, it is essential to encode quantum states into higher-dimensional spaces by a quantum error correcting code and carry out computation within the logical subspace in a fault-tolerant manner~\cite{Preskill1998reliable}. 
Among the whole class of quantum gates, Clifford gates are known to be easier to implement on many codes of interest~\cite{Steane1996_7qubit,Shor1995_9qubit,gottesman_1998,Fowler2012surface,Bombin2006color} whereas logical implementation of non-Clifford gates results in a large overhead cost~\cite{gottesman_1999,Bravyi2005magic}.
The cost is often quantified in terms of the T gates, with T being the single-qubit unitary $T = \mathrm{diag}(1, e^{i\pi/4})$~\cite{Bravyi2005magic} which enables universal quantum computation together with Clifford gates and can be considered as the most difficult to implement. In particular, any circuit composed of only Clifford gates can be efficiently simulated on a classical computer~\cite{gottesman_1998}, but a circuit built with the Clifford+T gate set has a simulation time exponential in the number of T gates~\cite{aaronson_2004}. Much effort has therefore been devoted to studying the optimal ways to realise circuits with minimal T-gate cost~\cite{amy_2014,gosset_2014,campbell_2017}.

However, choices of relevant operations other than the commonly used set of Clifford+T gates can be made. Indeed, the Clifford unitaries are not the only types of channels that admit efficient classical simulation schemes: this holds true also for stabiliser operations~\cite{Veitch2014resource} and an even larger class of completely stabiliser-preserving channels~\cite{seddon_2019}. Such extensions have shown potential to significantly improve the cost of implementing quantum circuits in terms of the T-gate count~\cite{selinger_2013,jones_2013,gidney_2018} as well as the simulation cost of noisy circuits~\cite{Veitch2014resource,pashayan_2015,bravyi_2016,howard_2017,bravyi_2019,seddon_2020}. It is therefore of interest to understand the limitations on manipulating quantum circuits through the most general means.

The task of synthesising gates and circuits is often realised through magic state distillation~\cite{Bravyi2005magic}, which aims to prepare clean magic states --- states that cannot be obtained with stabiliser operations alone --- and use them to implement costly quantum gates through the scheme of state injection~\cite{gottesman_1999}. Magic state distillation can provide feasible ways to synthesise gates and circuits~\cite{campbell_2017}, and investigating the precise relations between distillation and gate synthesis is highly important in paving the way to fault-tolerant quantum computation~\cite{duclos-cianci_2013,campbell_2017-1,campbell_2017}.

The resource theory of magic was thus introduced to understand the limitations of manipulating and distilling states using different free operations~\cite{Veitch2014resource,howard_2017,seddon_2019}. We can apply our general results to this formalism to obtain fundamental limitations on both magic state distillation and more general gate manipulation protocols which act the level of quantum channels directly. In this resource theory, stabiliser operations $\OO_{\rm STAB}$ are built through Clifford gates, Pauli measurements, and preparations of ancillary states in the computational basis~\cite{Veitch2014resource}.  Then, the set of stabiliser states $\FF_{\rm STAB}$ is defined to be the states that can be created by such operations, which coincides with the convex hull of eigenstates of Pauli operators. The larger set of completely stabiliser-preserving operations~\cite{seddon_2019} is defined as all channels whose Choi state satisfies $\wt J_{\E} \in \FF_{\rm STAB}$, which allows the computation of quantifiers such as $\ROg$ and $W_\OO$ as semidefinite programs (although their size scales superexponentially with the number of qubits). By choosing this set as our free operations $\OO$, we ensure that all of our bounds will apply also to smaller sets of transformations, which includes all channels typically considered in the study of magic. The channel manipulation protocols $\SS$ that we consider include, in particular, any pre- and post-processing of the channel with other completely stabiliser-preserving channels.

Of particular importance are the channels which can be realised through state injection, such as unitary gates from the third level of the Clifford hierarchy $\C_3$~\cite{gottesman_1999}, that is, unitaries which map Pauli operators to Clifford gates. This includes many gates of practical relevance such as the T gate, the controlled-phase gate CS, the controlled-controlled-Z gate CCZ, or the Toffoli gate. All channels which can be implemented in this way obey the following relation, which we will prove and generalise in the next section.
\begin{boxxed}{white}
\begin{lemma}\label{lem:main-stab-seizable}
Let $\N: A \to B$ be any channel which can be implemented by state injection, that is, for which there exists a free state $\tau \in \STAB(R\otimes A)$ and free operation $\G \in \OO$ such that the state $\omega = \idc_R \otimes \N(\tau)$ allows for the implementation of the channel $\N$ as $\N(\rho) = \G( \rho \otimes \omega) \; \forall \rho$. Then
\begin{equation}\begin{aligned}
  \ROg(\N) &= R_{\STAB}(\omega)\\
  W_\OO(\N) &= W_{\STAB}(\omega)\\
  F_\OO(\N) &= F_{\STAB}(\omega).
\end{aligned}\end{equation}
That is, the channel resource measures all reduce to the corresponding state-based resource measures of the associated state $\omega = \idc \otimes \N(\tau)$.
\end{lemma}
\end{boxxed}
The Lemma can be considered as an application of the idea of resource simulability~\cite{kaur_2017,wang_2019-1}, which generalises the notions of teleportation-based simulation from entanglement theory~\cite{bennett_1996,horodecki_1999-1,pirandola_2017}. Importantly, this result is valid for all unitary channels $\U(\cdot) = U \cdot U^\dagger$ where $U$ is a $k$-qubit unitary $U \in \C_3$ and $\tau$ is the $2k$-qubit maximally entangled state~\cite{gottesman_1999}. When $U$ is additionally a diagonal gate, the state injection can be performed more easily with the state $\omega = U \proj{+}^{\otimes k} U^\dagger$. Notably, when $U$ is a 1-, 2-, or 3-qubit diagonal unitary in $\C_3$, then the quantities $\ROg(\U)$ and $F_\OO(\U)$ become multiplicative, owing to the multiplicativity of the associated state-based monotones~\cite{bravyi_2019,seddon_2020}. We will hereafter restrict our discussion to diagonal unitaries $U \in \C_3$ of up to 3 qubits for simplicity, as this is enough to encompass most gates of practical interest (or gates Clifford-equivalent thereto). We then use $\ket{U} \coloneqq U \ket{+}^{\otimes k}$ for the corresponding states. An interesting property of Lem.~\ref{lem:main-stab-seizable} for such gates is that the result actually does not depend on the choice of free operations, meaning that the robustness, weight, and fidelity measures all have the same values for any set of channels $\OO$ which can implement state injection gadgets --- this ranges from the practically relevant stabiliser operations to all stabiliser-preserving channels~\cite{seddon_2019}.

Let us now look at how our main results can be understood in this setting. In general, the bound of Thm.~\ref{thm:submult} provides insight into the best achievable performance of any free channel transformation protocol; namely, for any transformation $U \to U'$ which takes $n$ uses of a diagonal gate $U \in \C_3$ to $m$ copies of a target diagonal gate $U' \in \C_3$, it holds that
\begin{equation}\begin{aligned}
	\frac{n}{m} \geq \frac{\log R_{\STAB}(\proj{U'})}{\log R_{\STAB}(\proj{U})}.
\end{aligned}\end{equation}
For instance, using the known values of $R_{\STAB}(\proj{T})$ and $R_{\STAB}(\proj{CCZ})$~\cite{bravyi_2019} we conclude that $n \geq 3.6335$ and so 4 T gates are necessary to perfectly synthesise a CCZ gate --- this is a slight strengthening of previous results which showed this optimality in other settings~\cite{howard_2017,beverland_2019}, as we establish that even the most general, adaptive channel manipulation protocols cannot do better.

In practice, the input gate might be affected by noise, and similarly the output of the protocol might only be required to be a good approximation of the target gate up to small errors. We can then apply Thm.~\ref{thm:main-nogo-parallel} to show that, for any channel $\E$ and any deterministic gate synthesis protocol $\CT$ which transforms $n$ uses of a noisy channel $\E$ to $m$ copies of the target unitary channel $\U(\cdot) = U \cdot U^\dagger$ up to accuracy $F(\CT(\E, \ldots, \E), \U^{\otimes m}) \geq 1-\ve$, it holds that
\begin{equation}\begin{aligned}\label{eq:main-magic-channels}
  n &\geq \log_{1/W_\OO(\E)} \frac{1- F_{\STAB}(U)^m}{\ve},\\
  n &\geq \log_{\ROg(\E)} \frac{1-\ve}{F_{\STAB}(U)^m}.
\end{aligned}\end{equation}
We stress that the coefficient $F_{\STAB}(U) = \max_{\sigma \in \STAB} \braket{U | \sigma | U}$ is the stabiliser fidelity~\cite{bravyi_2019} of the associated state $\ket{U}$, which is known exactly for most gates of interest~\cite{bravyi_2019,beverland_2019}: for instance, $F_{\STAB}(T) = \frac{1}{4}(2+\sqrt{2})$, $F_{\STAB}(CCZ) = \frac{9}{16}$. The bounds thus establish universal, efficiently computable restrictions on gate synthesis protocols, providing in particular a fundamental lower bound on the associated resource cost that any physical protocol must satisfy. Considering the previous example of the transformation $T \to CCZ$, we obtain $n > 3$ for error values up to $\ve \approx 0.095$, showing that a large error is necessary if one employs fewer than 4 T gates in the transformation.

As a special case, Thm.~\ref{thm:main-nogo-parallel} gives also fundamental restrictions on the resource cost of magic state distillation protocols. In particular, for any protocol $\M \in \OO$ which satisfies $F(\M(\rho^{\otimes n}), \proj{U}^{\otimes m}) \geq 1-\ve$, we necessarily have
\begin{equation}\begin{aligned}\label{eq:main-magic-states}
  n &\geq \log_{1/W_\FF(\rho)} \frac{1- F_{\STAB}(U)^m}{\ve},\\
  n &\geq \log_{\ROg(\rho)} \frac{1-\ve}{F_{\STAB}(U)^m}.
\end{aligned}\end{equation}
Our weight-based bound improves on the previous bound of Ref.~\cite{fang_2020} and extends its applicability beyond full-rank input states. An explicit comparison in Fig.~\ref{fig:magic-oneshot} of the main text reveals that this improvement can be substantial. We note that the robustness bound previously appeared in~Ref.~\cite{seddon_2020} in this context.

Moreover, from Thm.~\ref{thm:main-rate-bound} we obtain a strong converse bound on all transformations of channels as
\begin{equation}\begin{aligned}
  r^\dagger_{\rm adap}(\E \to \U) \leq \frac{\log R_{\OO}(\E)}{\log F_\OO(U)^{-1}}
\end{aligned}\end{equation}
for any diagonal $U \in \C_3$ of up to 3 qubits. This establishes limitations on the capabilities of general adaptive protocols in gate synthesis in the asymptotic limit.

We now consider some explicit examples to compare the different bounds. In the examples below, we compute the robustness $\ROg$ and weight $W_\OO$ with the choice of $\OO$ as the completely stabiliser-preserving operations. We refer to Fig.~\ref{fig:magic-oneshot} of the main text.

First, we compare the bounds on the error incurred in one-shot transformations, as stated in Thm.~\ref{thm:main-nogo} and Cor.~\ref{thm:main-nogo-states}. In Fig.~\ref{fig:magic-oneshot}\textbf{a}, our aim is to transform the noisy T gate $\N_p = \D_p \circ \T$ to the CCZ gate, where $\D_p(\rho) = (1-p) \rho + p \frac{\id}{2}$ is the depolarising channel. Notably, we see that the robustness bound indicates a significant error also in the noiseless case ($p=0$), whereas the weight bound becomes trivial for noiseless inputs. In Fig.~\ref{fig:magic-oneshot}\textbf{b},  we study the error incurred in distilling the $\ket{CCZ}$ state from three copies of the noisy T state $\rho_p = \D_p(\proj{T})$. The result explicitly shows the substantial improvement of our bounds over the bounds of Ref.~\cite{fang_2020}, which become nearly 0 in this case. We also see the previously remarked fact that a significant ($\approx 0.1$) error is unavoidable in the conversion $\ket{T}^{\otimes 3} \to \ket{CCZ}$.

When considering the distillation of noisy resources, noise applied at the level of channels can affect the results in different ways than noise applied at the level of states. For example, in subfigures \textbf{c}--\textbf{d} we present a comparison between the bounds for magic state distillation from the noisy T state $\rho_p$ and for gate synthesis from the noisy T gate $\N_p$. The result shows a difference between the gate synthesis properties of the two: we can see that the bounds impose much higher requirements on the number of noisy states required to succeed. The comparatively weaker bound on the required copies of the noisy T gate indicates a possibility that, as long as a certain type of noise is fixed (e.g.\ depolarising noise here), manipulating the noisy gate at the level of channels might offer an improvement over methods which rely on distilling $\ket{T}$ states from the noisy state $\rho_p$.

The difference between the state and channel cases can be understood as follows. Letting $\G$ be the standard state injection gadget such that $\G(\cdot \otimes \proj{T}) = \T(\cdot)$~\cite{gottesman_1999}, one can easily verify that $\G(\cdot \otimes \frac{\id}{2})$ is the completely \textit{dephasing} channel $\mathcal{Z}_{1/2}(\rho) = \frac12 \rho + \frac12 Z\rho Z$. This means that the state $\rho_p$ is equivalent to the channel $\N^\mathcal{Z}_p = (1-p) \T + p \mathcal{Z}_{1/2}$, and thus depolarising noise at the level of states corresponds to dephasing, rather than depolarising, at the level of channels. In particular, Lem.~\ref{lem:main-stab-seizable} gives $\ROg(\N^\mathcal{Z}_p) = \RFg(\rho_p)$ and analogously for the other quantities.

\subsubsection{Remarks on the case of qudits}\label{sec:magic_qudits}

Our results apply in an analogous way to the resource theory of magic for qudit systems of prime dimension~\cite{Veitch2014resource}. Here, our robustness-based bounds can be used to recover results related to~Refs.~\cite{wang_2020,wang_2019-1}, although these works considered a slightly different approach based on the discrete Wigner function and associated norms. The technical differences in the framework of~Ref.~\cite{wang_2020}, and in particular the use of sub-normalised operators rather than normalised quantum states, mean that these results do not exactly fit within our framework. Nevertheless, we can use them to get insights about important states in this theory.

In particular, in Table~\ref{tab:resources} of the main text we reported the value of $F_\FF$ (and analogously $F_\OO$) for a selection of quantum states, and in particular the $H_+$ state and the Norrell state. These results are obtained by leveraging the findings of~Ref.~\cite{wang_2020}: for the aforementioned states, Ref.~\cite{wang_2020} showed that $\log R_\FF(\psi)$ (max-relative entropy of magic) equals a quantity called the `min-thauma' $\theta_{\min}(\psi)$. But in general it holds that
\begin{equation}\begin{aligned}
	 R_\FF(\psi) \geq F_\FF(\psi)^{-1} \geq 2^{\theta_{\min}(\psi)},
\end{aligned}\end{equation}
where the first inequality holds for any pure state~\cite{cavalcanti_2005,regula_2018} (cf.\ Lem.~\ref{lemma:golden_channel}), and the second is a consequence of the definition of $\theta_{\min}$~\cite{wang_2020}. Therefore, whenever $\theta_{\min}(\psi) = \log R_\FF(\psi)$, we conclude that also $\log F_\FF(\psi)$ has the same value.

For channels and states whose value of $F_{\OO}$ or $F_\FF$ has not been established or is not known to be multiplicative (such as the qutrit T state), one can instead substitute $F_\FF$ with $2^{-\theta_{\min}}$ in all of our bounds. In particular, since $\theta_{\min}$ is additive on tensor products~\cite{wang_2020}, the many-copy results of Thms.~\ref{thm:main-rate-bound} and \ref{thm:main-rate-bound parallel} can always be applied by choosing $G_\FF(\phi) = 2^{-\theta_{\min}(\phi)}$.


\subsection{Proof of Lem.~\ref{lem:main-stab-seizable} and extension to other resources}

The basic idea behind Lem.~\ref{lem:main-stab-seizable} is to exploit the fact that some channels can be reversibly interconverted with state resources through free operations, which means that the two types of resources can be considered as equivalent. This idea was first applied in the theory of entanglement~\cite[Sec. 5]{bennett_1996}, later extended to general entanglement manipulation protocols~\cite{pirandola_2017} and to other types of resource theories~\cite{wilde_2018,kaur_2017}. Within the theory of magic states, the idea can be understood as a generalisation of gate teleportation~\cite{gottesman_1999}.

A general formulation of this property is as follows (see also \cite[Sec.~VI]{wilde_2018}).
\begin{boxxed}{white}
\begin{lemma}\label{lemma:injection superchannel}
	Let $\OO$ be any class of free operations which can prepare all free states $\FF$, that is, $\R_\sigma \in \OO \; \forall \sigma \in \FF$. Let $\N : A \to B$ be any channel such that:
\begin{enumerate}[(i)]
\item there exists a free superchannel $\Gamma \in \SS$ and a state $\omega$ such that $\Gamma(\R_\omega) = \N$,
\item there exists a free superchannel $\TT \in \SS$ such that $\TT(\N) = \R_\omega$.
\end{enumerate}
Then
\begin{equation}\begin{aligned}
  \ROg(\N) &= R_{\FF}(\omega)\\
  W_\OO(\N) &= W_{\FF}(\omega)\\
  F_\OO(\N) &= F_{\FF}(\omega).
\end{aligned}\end{equation}
\end{lemma}
\end{boxxed}
\begin{proof}
Follows directly from the monotonicity of the measures $\ROg$, $W_\OO$, $F_\OO$ under free transformations $\SS$, coupled with Lem.~\ref{lem:prep}.
\end{proof}

In practice, the superchannel $\Gamma$ is often realised as a state injection protocol which provides $\omega$ as an ancillary system and processes the joint state with a free operation in $\OO$. This is the way in which this property is usually applied both in entanglement theory~\cite{bennett_1996,horodecki_1999-1,pirandola_2017} and in more general settings~\cite{wilde_2018,kaur_2017}. For completeness, we provide a statement of the property in this form (see also \cite{wang_2019-1}).
\begin{boxxed}{white}
\begin{lemma}
Consider any resource theory such that $\sigma, \sigma' \in \FF \Rightarrow \sigma \otimes \sigma' \in \FF$ and let $\OO$ be a chosen class of free operations. Let $\N : A \to B$ be any channel such that:
\begin{enumerate}[(i)]
\item it can be implemented through state injection, that is, there exists a free operation $\G \in \OO(AC \to B)$ and a state $\omega \in \DD(C)$ such that $\G(\cdot \otimes \omega) = \N(\cdot)$,
\item there exists a free transformation $\TT \in \SS$ such that $\TT(\N) = \R_\omega$; for example, $\N (\tau) = \omega$ for some $\tau \in \FF$, or $\idc \otimes \N (\tau) = \omega$ when $\M \in \OO \Rightarrow \idc \otimes \M \in \OO$.
\end{enumerate}
Then
\begin{equation}\begin{aligned}
  \ROg(\N) &= R_{\FF}(\omega)\\
  W_\OO(\N) &= W_{\FF}(\omega)\\
  F_\OO(\N) &= F_{\FF}(\omega).
\end{aligned}\end{equation}
\end{lemma}
\end{boxxed}
\begin{remark}In particular, the result holds in the resource theory of magic for any $k$-qubit unitary channel $\U(\cdot) = U\cdot U^\dagger$ from the third level of the Clifford hierarchy~\cite{gottesman_1999}. In this case: $\OO$ can be any subset of completely stabiliser-preserving operations which allows for the implementation of state injection gadgets; $\omega$ is given by the Choi state $(\id \otimes U) \proj{\psi^+} (\id \otimes U^\dagger)$ where $\ket{\psi^+}$ is the $2k$-qubit maximally entangled state; and the free superchannel $\TT$ consist of simply preparing the free state $\proj{\psi^+}$ before the application of $\U$. In the case of diagonal gates in the third level of the Clifford hierarchy, we can use $\omega = U \proj{+}^{\otimes k} U^\dagger$ and the result is valid for the larger class of stabiliser-preserving operations.
A similar state injection protocol can also be employed in qudit systems with odd prime dimensions as well~\cite{Howard2012qudit,Campbell2012magic}.\end{remark}
\begin{proof}
{\allowdisplaybreaks
The first part of the proof can be shown following~\cite[Prop.\ 22]{wang_2019-1}:
\begin{align*}
  \ROg(\N) &= \min_{\M\in \OO} R_{\max} \left( \N \| \M\right)\\
  &= \min_{\M\in \OO} R_{\max} \left( \G(\cdot \otimes \omega) \,\big\|\, \M\right)\\
  &\leq \min_{\sigma \in \FF} R_{\max} \left( \G(\cdot \otimes \omega) \,\big\|\, \G(\cdot \otimes \sigma) \right) \\
  &\leq \min_{\sigma \in \FF} R_{\max} \left( [\cdot \otimes \omega] \,\big\|\, [\cdot \otimes \sigma] \right) \tag{\stepcounter{equation}\theequation}\\
  &=\min_{\sigma \in \FF} \max_{\psi_{RA}} R_{\max} \left( \psi \otimes \omega \,\big\|\, \psi \otimes \sigma \right)\\
  &= \min_{\sigma \in \FF} R_{\max} \left( \omega \,\big\|\, \sigma \right) \\
  &= R_{\FF} (\omega),
\end{align*}
where the first inequality is since $\G(\cdot \otimes \sigma) \in \OO$ for any $\sigma \in \FF$, the second inequality is by the data processing inequality of $R_{\max}(\E\|\F)$~\cite{wilde_2020}, and the equality in the second-to-last line is by the data processing inequality of $R_{\max}(\rho\|\sigma)$~\cite{datta_2009}.
On the other hand, using the monotonicity of $\ROg$ under free superchannels, we have
\begin{equation}\begin{aligned}
  \ROg(\N) \geq \ROg(\TT(\N)) = \ROg(\R_\omega) \geq R_{\FF}(\omega)
\end{aligned}\end{equation}
where the last inequality is by Lem.~\ref{lem:prep}. 
The cases of $W_\OO$ and $F_\OO$ proceed in the same way.
}
\end{proof}

An interesting consequence of such an operational equivalence of channels $\N$ and states $\omega$ is that it allows us to simplify general, adaptive channel manipulation protocols into simply acting on $n$ copies of the state $\omega$ through the use of so-called teleportation stretching~\cite{pirandola_2017} (see also~\cite{kaur_2017}).


\section{Extension to probabilistic protocols}\label{sec:prob}

We now consider probabilistic sub-superchannels that need not preserve quantum channels, but can instead map them to their probabilistic implementations --- completely positive and trace--non-increasing maps. Recall that the set of free subchannels $\tilde\OO$ with respect to the given set of free channels $\OO$ is given by
\bal
\tilde\OO\coloneqq\lset \tilde\M \sbar \forall \rho\in\DD,\; \exists \M\in\OO, t\in[0,1]\right.\left.\mbox{ s.t. } \idc\otimes\tilde\M(\rho) = t\cdot\idc\otimes\M(\rho)\rset,
\label{eq:free subchannels def}
\eal
and the corresponding set of sub-superchannels is $\tilde\SS:=\lset\tilde\TT\sbar\forall \M\in\OO,\; \tilde\TT(\M)\in\tilde\OO\rset$. Our figure of merit in probabilistic channel transformations is then the conditioned fidelity
\bal
 F_{\rm cond}(\U,\tilde\TT(\E))\coloneqq\min_\psi F\left(\idc\otimes\U(\psi),\frac{\idc\otimes\tilde\TT(\E)(\psi)}{p(\psi)}\right)
 \label{eq:fidelity condition}
\eal
where $p(\psi)=\Tr[\idc\otimes\tilde\TT(\E)(\psi)]$.

It is instructive to see how the above notions translate into that for state theories. 
By taking preparation channels as free channels and relabeling $\OO\rightarrow\FF$ and $\SS\rightarrow\OO$, Eq.~\eqref{eq:free subchannels def} reduces to 
$\tilde\FF\coloneqq\lset\tilde\sigma\sbar \tilde\sigma\in\cone(\FF),\Tr[\tilde\sigma]\leq1\rset$ and correspondingly we get $\tilde\OO\coloneqq \lset\tilde\M\sbar\forall \sigma\in\FF,\ \tilde\M(\sigma)\in\tilde\FF\rset$. 
As for the conditional fidelity, since our free superchannels transform preparation channels to preparation channels, we replace the target unitary with target preparation channel that prepares target state $\phi$. 
Writing $\tilde\M(\rho)=p\eta,\ \eta\in\DD$, the conditional fidelity reduces to much clearer form $F_{\rm cond}(\phi,\tilde\M(\rho))=F(\phi, \eta)$. 

With this, we are now ready to present an extension of Thm.~\ref{thm:main-nogo} as follows.

\begin{boxxed}{red}
\begin{theorem} \label{thm:main-nogo probabilistic}
Suppose a free sub-superchannel $\tilde\TT \in \tilde\SS$ achieves $F_{\rm cond}(\tilde\TT(\E), \U) \geq 1-\ve$ for some resourceful unitary channel $\U$. Then, if $\tilde\TT$ occurs with input channel $\E$ and input state $\psi$ at probability $p=\Tr[\idc\otimes\tilde\Theta(\E)(\psi)]$, it holds that
\begin{equation}\begin{aligned}\label{eq:main-nogo-rob prob}
  \ve &\geq 1 - \frac{\ROg(\E) \, F_\OO^\psi(\U)}{p}
\end{aligned}\end{equation}
\vspace*{-\baselineskip}and
\begin{equation}\begin{aligned}\label{eq:main-nogo-weight prob loose}
  \ve &\geq 1-\frac{1-(1 - F_\OO^\psi(\U) )W_\OO(\E)}{p}
\end{aligned}\end{equation}
where $F_\OO^\psi(\U):=\max_{\M\in\OO}F(\idc\otimes\U(\psi),\idc\otimes\M(\psi))$.

Alternatively, taking $\M\in\OO$ to be a free channel such that $J_\E\geq W_\OO(\E)J_\M$, it holds that
\begin{equation}\begin{aligned}\label{eq:main-nogo-weight prob}
  \ve &\geq (1 - F_\OO^\psi(\U) )\, \frac{W_\OO(\E)\Tr[(\idc\otimes\tilde\TT(\M)(\psi)]}{p}
\end{aligned}\end{equation}
\end{theorem}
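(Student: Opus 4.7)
The plan is to mirror the strategy behind Thm.~\ref{thm:main-nogo}: decompose $\E$ via the robustness (or weight) as a signed combination of a free channel and a correction, push this decomposition through the sub-superchannel $\tilde\TT$, and then evaluate the resulting inner product against the pure target $\idc\otimes\U(\psi)$. The two ingredients that are new, relative to the deterministic case, are the normalization by $p(\psi)=\Tr[\idc\otimes\tilde\TT(\E)(\psi)]$ appearing in the conditional fidelity \eqref{eq:fidelity condition}, and the defining property of $\tilde\OO$ which guarantees that a free sub-channel acts on any fixed input as a scalar $t\in[0,1]$ times a free channel.

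For \eqref{eq:main-nogo-rob prob}, set $\lambda=\ROg(\E)$ and fix $\M\in\OO$ with $J_\E\leq\lambda J_\M$. Since $\lambda J_\M-J_\E$ is positive semidefinite with trace $(\lambda-1)d_A$, after normalization it is the Choi matrix of some CPTP map $\M'$, giving $\E=\lambda\M-(\lambda-1)\M'$. Applying $\tilde\TT$ and using that $F(\cdot,\tau)=\langle\cdot,\tau\rangle$ when the first argument is pure, the hypothesis $F_{\rm cond}(\tilde\TT(\E),\U)\geq 1-\ve$ yields
\begin{equation*}
(1-\ve)\,p(\psi)\leq\lambda\,\bigl\langle\idc\otimes\U(\psi),\idc\otimes\tilde\TT(\M)(\psi)\bigr\rangle-(\lambda-1)\,\bigl\langle\idc\otimes\U(\psi),\idc\otimes\tilde\TT(\M')(\psi)\bigr\rangle.
\end{equation*}
The second inner product is nonnegative because $\tilde\TT(\M')$ is still completely positive, so it can be dropped. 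For the first, $\tilde\TT(\M)\in\tilde\OO$ supplies, for the given $\psi$, a free channel $\M_\psi\in\OO$ and $t\in[0,1]$ with $\idc\otimes\tilde\TT(\M)(\psi)=t\,\idc\otimes\M_\psi(\psi)$, so the inner product is bounded by $t\,F_\OO^\psi(\U)\leq F_\OO^\psi(\U)$. Dividing by $p(\psi)$ yields \eqref{eq:main-nogo-rob prob}.

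For the weight bounds, take $\M\in\OO$ achieving $J_\E\geq W_\OO(\E)J_\M$ and analogously write $\E=W_\OO(\E)\,\M+(1-W_\OO(\E))\,\M''$. Let $q(\psi)=\Tr[\idc\otimes\tilde\TT(\M)(\psi)]$ and $q''(\psi)=\Tr[\idc\otimes\tilde\TT(\M'')(\psi)]$, so that linearity enforces the key identity $p(\psi)=W_\OO(\E)\,q(\psi)+(1-W_\OO(\E))\,q''(\psi)$. Using the $\tilde\OO$-structure of $\tilde\TT(\M)$ as above for the free part, and the estimate $\bigl\langle\idc\otimes\U(\psi),X\bigr\rangle\leq\Tr X$ (valid because $\idc\otimes\U(\psi)$ is a rank-one projector and $X\geq 0$) for the $\M''$ part, one obtains
\begin{equation*}
(1-\ve)\,p(\psi)\leq W_\OO(\E)\,q(\psi)\,F_\OO^\psi(\U)+(1-W_\OO(\E))\,q''(\psi).
\end{equation*}
Eliminating $(1-W_\OO(\E))q''(\psi)=p(\psi)-W_\OO(\E)q(\psi)$ via the identity and rearranging produces the tight bound \eqref{eq:main-nogo-weight prob}; directly bounding $q(\psi),q''(\psi)\leq 1$ before rearranging instead produces the looser \eqref{eq:main-nogo-weight prob loose}.

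The main conceptual subtlety is that $\tilde\OO$-membership in \eqref{eq:free subchannels def} is state-dependent, so $\M_\psi$ and $t$ a priori depend on $\psi$; this causes no trouble because the conditional fidelity is a pointwise statement in $\psi$ and our entire argument runs at one fixed $\psi$. The only remaining verifications are that $\lambda J_\M-J_\E$ and $J_\E-W_\OO(\E)J_\M$ normalize to honest CPTP maps $\M',\M''$, which follow immediately from $\Tr J_\E=d_A$ together with the definitions of $\ROg(\E)$ and $W_\OO(\E)$, and that $\tilde\TT$ is defined on any channel (free or not) and returns a completely positive map, so the sign arguments above are meaningful.
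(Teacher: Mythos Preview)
Your proof is correct and follows essentially the same approach as the paper: decompose $\E$ via the robustness or weight, push through $\tilde\TT$, bound the free part using the defining property of $\tilde\OO$ and the non-free part by positivity (or trace), and then rearrange. The only minor differences are notational ($\M',\M''$ vs.\ the paper's $\N$), and your bound $\langle\idc\otimes\U(\psi),X\rangle\leq\Tr X$ for the non-free piece is slightly more direct than the paper's route through normalizing $\tilde\TT(\N)(\psi)$ to a state.
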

\end{boxxed}

We also obtain corresponding bounds for state-based theories.
\begin{boxxed}{white}
\begin{proposition}\label{prop:main-nogo probabilistic state}
If there exists a free subchannel $\tilde\M \in \tilde\OO$ such that $\tilde\M(\rho)=p\eta$ with $F(\eta, \phi) \geq 1-\ve$ for some resourceful pure state $\phi$. Then, it holds that
\begin{equation}\begin{aligned}\label{eq:main-nogo-rob prob state app}
  \ve &\geq 1 - \frac{R_\FF(\rho) \, F_\FF(\phi)}{p}
\end{aligned}\end{equation}
and
\begin{equation}\begin{aligned}\label{eq:main-nogo 1-weight prob state app}
\ve &\geq (1 - F_\FF(\phi) )\left(1- \frac{1- W_\FF(\rho)}{p}\right).
\end{aligned}\end{equation}

Alternatively, taking $\sigma\in\FF$ to be a free state such that $\rho\geq W_\FF(\rho)\sigma$, it holds that
\begin{equation}\begin{aligned}\label{eq:main-nogo-weight prob state app}
  \ve &\geq (1 - F_\FF(\phi) )\, \frac{W_\FF(\rho)\Tr[\tilde\M(\sigma)]}{p}.
\end{aligned}\end{equation}
\end{proposition}
\end{boxxed}

It can be easily seen that when $\tilde\TT$ is a superchannel, $p=1$ holds for any $\psi$ and the bounds of Thm.~\ref{thm:main-nogo probabilistic} reproduce Thm.~\ref{thm:main-nogo}.
An interesting question is whether the no-go statement implied by Thm.~\ref{thm:main-nogo}, which says that perfect purification with $\ve=0$ is impossible for any channel with  $W_\OO(\E)>0$, remains valid in probabilistic cases.
For the choice of $\M$ as an optimal channel satisfying $J_\E\geq W_\OO(\E)J_\M$, Eq.~\eqref{eq:main-nogo-weight prob} (and \eqref{eq:main-nogo-weight prob state app} for state theories) implies that \emph{if} $\Tr[\idc\otimes\tilde\TT(\M)(\psi)]>0$, the no-go theorem still holds.   
On the other hand, if $\Tr[\idc\otimes\tilde\TT(\M)(\psi)]=0$, meaning that the free part of $\E$ is completely cut off by the selective operation $\tilde\TT$, then this does not give us any insight into $\ve$. 
This is actually a natural consequence because such a perfect purification \emph{is} indeed possible, implying that one cannot expect a bound in which $\ve>0$ for $W_\OO(\E)>0$ (see also~\cite{fang_2020-2}).

As an illustrative example, let us consider a state theory, in particular, the theory of coherence where $\FF$ is the set $\conv\{ \dm{0}, \dm{1}, \dm{2}\}$ defined on a three-dimensional system. Take $\rho = (1/2)\dm{0} + (1/2)\dm{+_{12}}$ where $\ket{+_{12}}=(1/\sqrt{2})(\ket{1}+\ket{2})$.
Consider a free subchannel $\tilde\M(\cdot)=P_{12}\cdot P_{12}$ where $P_{12}\coloneqq\dm{1}+\dm{2}$ is the projector onto the subspace spanned by $\{\ket{1},\ket{2}\}$.
Then we get $\tilde\M(\rho)=(1/2)\dm{+_{12}}$, indicating that we can obtain a pure state $\ket{+_{12}}$ with probability 1/2, although $W_\FF(\rho)=1/2>0$. 
This shows a clear contrast to the deterministic case of Thm.~\ref{thm:main-nogo} and Cor.~\ref{thm:main-nogo-states} as well as the probabilistic bounds shown in Ref.~\cite{fang_2020}, which showed that perfect purification is impossible even probabilistically for full-rank states.   

Another interesting observation from Thm.~\ref{thm:main-nogo probabilistic} is that unlike Eqs.~\eqref{eq:main-nogo-rob prob} and \eqref{eq:main-nogo-weight prob loose} where the lower bounds for error decrease as success probability decreases, Eq.~\eqref{eq:main-nogo-weight prob} appears to have an opposite behaviour with respect to the success probability.
In fact, there is an intricate trade-off between the probability of detecting the free component $\Tr[\idc\otimes\tilde\TT(\M)(\psi)]$ and success probability $p$, and this provides a practical instruction that to accomplish a purification protocol with high accuracy, it is crucial to reduce the probability of detecting the free part of the given channel, as characterised by the resource weight $W_\OO$.

\begin{proof}[\textbf{\textup{Proof of Thm.~\ref{thm:main-nogo probabilistic}}}]
Define $\ket{U_\psi}\coloneqq \id\otimes U\ket{\psi}$ and let $\M\in\OO$ be a channel that satisfies $J_\E \leq R_\OO(\E) J_\M$. Then, by \eqref{eq:fidelity condition} we get
\bal
 p(1-\ve)&\leq \braket{U_\psi|\idc\otimes\tilde\TT(\E)(\psi)|U_\psi}\\
 &\leq R_\OO(\E)\braket{U_\psi|\idc\otimes\tilde\TT(\M)(\psi)|U_\psi} \\
 &\leq R_\OO(\E)F_\OO^\psi(\U).
\eal
The last line is because 
\bal
\braket{U_\psi|\idc\otimes\tilde\TT(\M)(\psi)|U_\psi} &= t\braket{U_\psi|\idc\otimes\M'(\psi)|U_\psi}\\
&\leq t F_\OO^\psi(\U)\leq F_\OO^\psi(\U)
\label{eq:bound free subsuperchannel}
\eal
where we used that $\tilde\TT(\M)\in\tilde\OO$, ensuring by \eqref{eq:free subchannels def} that there exist $\M'\in\OO$ and $t\in[0,1]$ such that 
\bal
\idc\otimes\tilde\TT(\M)(\psi)=t\cdot\idc\otimes\M'(\psi).
\label{eq:free subsuperchannel proportional}
\eal

To show the second bound, let us redefine $\M,\N$ as the channels satisfying $\E=W_\OO(\E)\M+(1-W_\OO(\E))\N,\ \M\in\OO$.  
Then,
\bal
 p(1-\ve)&\leq \braket{U_\psi|\idc\otimes\tilde\TT(\E)(\psi)|U_\psi}\\
 &=W_{\OO}(\E)\braket{U_\psi|\idc\otimes\tilde\TT(\M)(\psi)|U_\psi} \\
 &\quad + (1-W_{\OO}(\E)) \braket{U_\psi|\idc\otimes\tilde\TT(\N)(\psi)|U_\psi}\\
 &\leq W_{\OO}(\E)F_\OO^\psi(\U)\Tr[\idc\otimes\tilde\TT(\M)(\psi)]\\
 &\quad + (1-W_\OO(\E)) \Tr[\idc\otimes\tilde\TT(\N)(\psi)]\\
 &\leq 1-(1-F_\OO^\psi(\U))W_\OO(\E).
 \label{eq:weight subchannel}
\eal
To get the third line, we bound the first term in the second line by using \eqref{eq:bound free subsuperchannel} and identifying $t=\Tr[\idc\otimes\tilde\TT(\M)(\psi)]$, which can be seen by taking trace in both sides of \eqref{eq:free subsuperchannel proportional}. 
We used a similar reasoning to bound the second term; for any channel $\N$, sub-superchannel $\tilde\TT$, and state $\psi$, there exists another channel $\N'$ such that $\idc\otimes\tilde\TT(\N)(\psi)=t\cdot\idc\otimes\N'(\psi)$ with $t=\Tr[\idc\otimes\tilde\TT(\N)(\psi)]$.
Thus, we get $\braket{U_\psi|\idc\otimes\tilde\TT(\N)(\psi)|U_\psi}=\Tr[\idc\otimes\tilde\TT(\N)(\psi)]\braket{U_\psi|\idc\otimes\N'(\psi)|U_\psi}\leq \Tr[\idc\otimes\tilde\TT(\N)(\psi)]$.
Finally, to get the final line we used $\Tr[\idc\otimes\tilde\TT(\M)(\psi)]\leq 1$ and $\Tr[\idc\otimes\tilde\TT(\N)(\psi)]\leq 1$.

To get the third bound, noting 
\begin{equation}\begin{aligned}
\Tr\left[\idc\otimes\tilde\TT(\E)(\psi)\right]=W_\OO(\E)\Tr\left[\idc\otimes\tilde\TT(\M)(\psi)\right] +(1-W_\OO(\E))\Tr\left[\idc\otimes\tilde\TT(\N)(\psi)\right],
\end{aligned}\end{equation}
we equate the right-hand side of the third line of \eqref{eq:weight subchannel} to $\Tr[\idc\otimes\tilde\TT(\E)(\psi)]-(1-F_\OO^\psi(\U))W_\OO(\E)\Tr[\idc\otimes\tilde\TT(\M)(\psi)]$.
Plugging in $\Tr[\idc\otimes\tilde\TT(\E)(\psi)]=p$ leads to the inequality in the statement. 

\end{proof}

\begin{proof}[\textbf{\textup{Proof of Prop.~\ref{prop:main-nogo probabilistic state}}}]
 Proofs for \eqref{eq:main-nogo-rob prob state app} and \eqref{eq:main-nogo-weight prob state app} are analogous to Thm.~\ref{thm:main-nogo probabilistic}.
 To show \eqref{eq:main-nogo 1-weight prob state app}, we use the following strong monotonicity property of $W_\FF$.

\begin{boxxed}{white}
\begin{lemma}[\cite{regula_2018}]
\label{lem:strong monotonicity weight rmax}
Let $\{\tilde\M_i\}$ be a free instrument, i.e., $\tilde\M_i(\sigma)\in \cone(\FF),\forall \sigma\in\FF,\forall i$. Writing $p_i=\Tr[\tilde\M_i(\rho)]$, we get
\bal
 W_\FF(\rho)\leq \sum_i p_i W_\FF\left(\frac{\tilde\M_i(\rho)}{p_i}\right)
\eal
and
\bal
 R_\FF(\rho)\geq \sum_i p_i R_\FF\left(\frac{\tilde\M_i(\rho)}{p_i}\right)
\eal
\end{lemma}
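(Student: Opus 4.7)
The plan is to prove both inequalities by exploiting the convex decompositions that define $W_\FF$ and $R_\FF$ respectively, then pushing them through the free instrument using positivity and the assumption that each $\tilde\M_i$ maps $\FF$ into $\cone(\FF)$.

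\textbf{Weight bound.} First I would invoke the definition of $W_\FF$ to pick a free state $\sigma\in\FF$ such that $\rho\geq W_\FF(\rho)\,\sigma$. Because each $\tilde\M_i$ is completely positive, applying it to this operator inequality yields $\tilde\M_i(\rho)\geq W_\FF(\rho)\,\tilde\M_i(\sigma)$. By assumption $\tilde\M_i(\sigma)\in\cone(\FF)$, so we may write $\tilde\M_i(\sigma)=q_i\,\sigma_i$ with $\sigma_i\in\FF$ and $q_i=\Tr[\tilde\M_i(\sigma)]\geq 0$. Dividing by $p_i$ gives $\tilde\M_i(\rho)/p_i\geq (W_\FF(\rho)q_i/p_i)\,\sigma_i$, and the definition of the weight as the largest such coefficient yields $W_\FF(\tilde\M_i(\rho)/p_i)\geq W_\FF(\rho)q_i/p_i$. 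Summing on $i$ and using that $\{\tilde\M_i\}$ is a free \emph{instrument}, so $\sum_i\tilde\M_i$ is trace preserving and hence $\sum_i q_i=\Tr[\sum_i\tilde\M_i(\sigma)]=\Tr\sigma=1$, gives $\sum_i p_i\,W_\FF(\tilde\M_i(\rho)/p_i)\geq W_\FF(\rho)$.

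\textbf{Robustness bound.} The dual argument uses the definition of $R_\FF$ to pick $\tau\in\FF$ with $\rho\leq R_\FF(\rho)\,\tau$. Applying $\tilde\M_i$ preserves this inequality: $\tilde\M_i(\rho)\leq R_\FF(\rho)\,\tilde\M_i(\tau)=R_\FF(\rho)\,r_i\,\tau_i$, with $\tau_i\in\FF$ and $r_i=\Tr[\tilde\M_i(\tau)]$. Dividing by $p_i$ and using the definition of the robustness as the smallest such coefficient yields $R_\FF(\tilde\M_i(\rho)/p_i)\leq R_\FF(\rho)\,r_i/p_i$, and then $\sum_i p_i\,R_\FF(\tilde\M_i(\rho)/p_i)\leq R_\FF(\rho)\sum_i r_i=R_\FF(\rho)$, again by trace preservation of $\sum_i\tilde\M_i$.

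\textbf{Expected obstacles.} No serious technical obstacle is anticipated; the argument is essentially the standard strong-monotonicity proof for convex resource measures, lifted from the state decomposition $\rho=W_\FF(\rho)\sigma+(1-W_\FF(\rho))\omega$ to operator inequalities via the complete positivity of $\tilde\M_i$. The only point that warrants care is the normalization of the coefficients $q_i$ (respectively $r_i$): this is where the instrument assumption enters, since a generic collection of free subchannels need not sum to a trace-preserving map. One should also make sure that if $p_i=0$ for some $i$, that term is simply omitted from the sum (since $p_i\,W_\FF(\tilde\M_i(\rho)/p_i)$ is understood as $0$ by continuity, and the free instrument assumption still forces $q_i=0$ whenever $p_i=0$ for the particular $\sigma$ chosen, as $\tilde\M_i(\rho)\geq W_\FF(\rho)\tilde\M_i(\sigma)\geq 0$ gives $q_i=0$ from $\Tr[\tilde\M_i(\rho)]=0$).
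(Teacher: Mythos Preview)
Your proposal is correct and follows essentially the same route as the paper: pick an optimal free state in the decomposition of $\rho$, push it through each $\tilde\M_i$ using positivity and the cone-preserving property, then sum using trace preservation of the full instrument. The only cosmetic difference is that the paper writes out the full convex decomposition $\rho = W_\FF(\rho)\sigma + (1-W_\FF(\rho))\tau$ rather than the equivalent operator inequality $\rho \geq W_\FF(\rho)\sigma$, and for $R_\FF$ the paper simply says ``analogously'' without spelling out the details you provide.
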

\end{boxxed}

\begin{proof}
 Write $\rho \geq W_\FF(\rho)\sigma$ with a free state $\sigma\in\FF$ and some state $\tau$.
 Then,
 \bal
  \frac{\tilde\M_i(\rho)}{p_i} \geq \frac{W_\FF(\rho)}{p_i}\Tr[\tilde\M_i(\sigma)]\frac{\tilde\M_i(\sigma)}{\Tr[\tilde\M_i(\sigma)]}.
 \eal
Since $\tilde\M_i(\sigma)/\Tr[\tilde\M_i(\sigma)]\in\FF$, the above expression serves as a valid free decomposition of $\tilde\M_i(\rho)/p_i$. 
Thus, we get $W_\FF(\tilde\M_i(\rho)/p_i)\geq \Tr[\tilde\M_i(\sigma)]W_\FF(\rho)/p_i$. 
The statement is reached by multiplying by $p_i$ and summing over $i$ in both sides, as well as using $\sum_i\Tr[\M_i(\rho)]=1$.
The bound for $R_\FF$ can be shown analogously. 
\end{proof}
Using Lem.~\ref{lem:strong monotonicity weight rmax} we get that 
\bal
 1-W_\FF(\rho)&\geq 1 - \sum_i p_i W_\FF\left(\frac{\tilde\M_i(\rho)}{p_i}\right) \\&= \sum_i p_i\left[1-W_\FF\left(\frac{\tilde\M_i(\rho)}{p_i}\right)\right],
 \label{eq:1-weight strong monotoniciy}
\eal
namely, $1-W_\FF(\rho)$ also shows the strong monotonicity (with the opposite direction of inequality).  
Then, since one can always construct a free instrument by complementing $\tilde\M$ with a replacement subchannel $\tilde\R_\sigma$ defined by the Choi matrix $J_{\tilde\R_\sigma}=(\id-\Tr_B J_{\tilde\M})\otimes\sigma,\ \sigma\in\FF$, \eqref{eq:1-weight strong monotoniciy} implies that 
\bal
  1-W_\FF(\rho)\geq  p\left[1-W_\FF\left(\frac{\tilde\M(\rho)}{p}\right)\right] \geq p\left[1-\frac{\ve}{1-F_\FF(\phi)}\right],
\eal
where in the last inequality we used Cor.~\ref{thm:main-nogo-states}. 
A simple reordering of the terms leads to the bound in the statement.
 
\end{proof}


\end{document}